\theoremstyle{plain}
  \newtheorem{theorem}{Theorem}[section]
  \newtheorem{proposition}[theorem]{Proposition}
  \newtheorem{lemma}[theorem]{Lemma}
  \newtheorem{remark}[theorem]{Remark}
\theoremstyle{definition}
  \newtheorem{definition}{Definition}[section]
  \newtheorem{assumption}[theorem]{Assumption}
\theoremstyle{remark}
\numberwithin{equation}{section}
\numberwithin{figure}{section}
\DeclareMathOperator{\Tr}{Tr}
\newcommand\otimesal{\mathop{\hbox{\raise 1.6 ex
  \hbox{$\scriptscriptstyle\mathrm{al}$}
\kern -0.92 em \hbox{$\otimes$}}}}
\newcommand\oplusal{\mathop{\hbox{\raise 1.6 ex
  \hbox{$\scriptscriptstyle\mathrm{al}$}
\kern -0.92 em \hbox{$\oplus$}}}}
\newcommand\Gammal{\hbox{\raise 1.7 ex
\hbox{$\scriptscriptstyle\mathrm{al}$}\kern -0.50 em $\Gamma$}}
\renewcommand\i{\mathrm{i}}
\let\al=\alpha \let\be=\beta \let\de=\delta \let\ep=\epsilon
  \let\ga=\gamma 
\let\ka=\kappa \let\la=\lambda \let\om=\omega 
\let\si=\sigma
 \let\Ga=\Gamma \let\La=\Lambda \let\Om=\Omega
  \let\Si=\Sigma
\newcommand{\caH}{{\mathcal H}}
\newcommand{\caI}{{\mathcal I}}
\newcommand{\caJ}{{\mathcal J}}
\newcommand{\caL}{{\mathcal L}}
\newcommand{\caP}{{\mathcal P}}
\newcommand{\caR}{{\mathcal R}}
\newcommand{\caT}{{\mathcal T}}
\newcommand{\caU}{{\mathcal U}}
\newcommand{\caV}{{\mathcal V}}
\newcommand{\caW}{{\mathcal W}}
\newcommand{\caZ}{{\mathcal Z}}
\newcommand{\bbC}{{\mathbb C}}
\newcommand{\bbE}{{\mathbb E}}
\newcommand{\bbN}{{\mathbb N}}
\newcommand{\bbR}{{\mathbb R}}
\newcommand{\bbS}{{\mathbb S}}
\newcommand{\bbT}{{\mathbb T}}
\newcommand{\bbZ}{{\mathbb Z}}
\newcommand{\opunit}{\text{1}\kern-0.22em\text{l}}
\newcommand{\frh}{{\mathfrak h}}
\newcommand{\scrB}{{\mathscr B}}
\newcommand{\scrC}{{\mathscr C}}
\newcommand{\scrE}{{\mathscr E}}
\newcommand{\scrH}{{\mathscr H}}
\newcommand{\pair}[1]{\langle{#1}\rangle}
\newcommand{\e}{{\mathrm e}}
\renewcommand{\d}{{\mathrm d}}
\renewcommand{\sp}{\mathrm{sp}}
\newcommand{\beq}{ \begin{equation} }
\newcommand{\eeq}{ \end{equation} }
\newcommand{\baq}{ \begin{eqnarray} }
\newcommand{\eaq}{ \end{eqnarray} }
\newcommand{\bet}{ \begin{theorem} }
\newcommand{\eet}{ \end{theorem} }
 \newcounter{smallarabics}
\newenvironment{arabicenumerate}
{\begin{list}{{\normalfont\textrm{\arabic{smallarabics})}}}
  {\usecounter{smallarabics}\setlength{\itemindent}{0cm}
  \setlength{\leftmargin}{5ex}\setlength{\labelwidth}{4ex}
  \setlength{\topsep}{0.75\parsep}\setlength{\partopsep}{0ex}
   \setlength{\itemsep}{0ex}}}
{\end{list}}
\newcounter{smallroman}
\newcommand{\ben}{\begin{arabicenumerate}}
\newcommand{\een}{\end{arabicenumerate}}
\newcommand{\sfock}{\Ga_{\mathrm{s}}}
\newcommand{\Symm}{\mathrm{Sym}}
\newcommand{\sys}{{\mathrm S}}
\newcommand{\res}{{\mathrm R}}
\newcommand{\ad}{\mathrm{ad}}
\newcommand{\adjoint}{\ad}
\newcommand{\dsi}{\si}
\newcommand{\norm}{ \|}
\newcommand{\str}{ |}
\newcommand{\lakl}{\lambda^{-2} }
\newcommand{\lat}{ \bbZ^d }
\newcommand{\tor}{ {\bbT^d}  }
\newcommand{\initialres}{\rho_\res^\be}
\newcommand{\links}{L}
\newcommand{\rechts}{R}
\newcommand{\bosondispersion}{\om}
\newcommand{\systemdispersion}{\varepsilon}
\begin{document}

\begin{center}
\large{ \bf{Quantum Brownian Motion in a Simple Model System}} \\
\vspace{20pt} \normalsize

{\bf   W.  De Roeck\footnote{Postdoctoral Fellow FWO-Flanders at  K.U.Leuven, Belgium, 
email: {\tt
 wojciech.deroeck@fys.kuleuven.be}}  }\\

\vspace{10pt} 
{\bf   Institute for Theoretical Physics \\
K.U.Leuven \\
B3001 Heverlee, Belgium} \\
\vspace{5pt} 
{\bf   Institute for Theoretical Physics \\
ETH Z\"urich \\
CH-8093 Z\"urich, Switzerland} \\

\vspace{20pt}

{\bf   J. Fr\"ohlich   }\\
\vspace{10pt} 
{\bf   Institute for Theoretical Physics \\
ETH Z\"urich \\
CH-8093 Z\"urich, Switzerland}
\vspace{20pt} 

{\bf A. Pizzo  }

\vspace{10pt} 
{\bf  Department of Mathematics \\
University of California at Davis \\
Davis CA 95616, USA
}

\vspace{15pt} \normalsize

\end{center}

\vspace{20pt} \footnotesize \noindent {\bf Abstract: }
We consider a quantum particle coupled  (with strength $\la$) to a spatial array of independent non-interacting reservoirs in thermal states (heat baths). Under the assumption that the reservoir correlations decay exponentially in time, we prove that the motion  of the particle  is diffusive at large times for small, but finite $\la$.  Our proof relies on an expansion around the kinetic scaling limit ($\la \searrow 0$, while time and space scale as $\la^{-2}$) in which the particle satisfies a Boltzmann equation. We also show an equipartition theorem:  the distribution of the kinetic energy of the particle tends to a Maxwell-Boltzmann distribution, up to a correction of $O(\la^2)$.

\vspace{5pt} \footnotesize \noindent {\bf KEY WORDS:}  diffusion, kinetic limit, quantum brownian motion  \vspace{20pt}
\normalsize
\section{Introduction}\label{sec: intro}

\subsection{Diffusion }

Diffusion and Brownian motion are among  the most fundamental phenomena described by transport theory.  They refer to the apparent random motion of a particle or, for that matter, any degree of freedom,  interacting with many other, mutually independent degrees of freedom in a thermal state.  The interactions  produce an erratic macroscopic motion that we perceive as diffusive or as Brownian motion.  
From a mathematical point of view, we may attempt to understand diffusive motion by invoking a central limit theorem:  $N$  interactions produce an effect $\delta x$, which is given by $\delta x \sim \sqrt{N}$. Since the number of interactions is proportional to the time lapse $\delta t$, we can write
$
(\delta x )^2 = D \delta t
$, where the proportionality constant $D$ is called the diffusion constant.  Via the Einstein relation, the diffusion constant determines  quantities such as the thermal or electric conductivity.

The model of a particle (quantum or classical)  coupled to a thermal reservoir of free particles is a natural starting point for an analysis of diffusion. We assume  the particle to be   quantum mechanical.
 By $\langle \cdot \rangle_\be$ we denote the expectation value in a state where the reservoir has an inverse temperature $\be<\infty$.  Then
\beq
\langle  (\delta x )^2 \rangle_\be =  \langle  (x(t)-x(0) )^2 \rangle_\be = \int_0^t \d t_1 \int_0^t \d t_2   \langle  \dot{x}(t_1)  \dot{x}(t_2) \rangle_\be,
\eeq
where $x(t)$ is the position of the particle at time $t$ and $\dot{x}(t)= \i [H, x(t)]$, where $H
$ is the Hamiltonian of the system, is the velocity. 
We expect that, because of interactions with the reservoir,  $\dot{x}(t_1) $ and $ \dot{x}(t_2) $ become de-correlated rapidly, as $\str t_1-t_2 \str$ grows.   Thus,  the quantity $\str  \langle  \dot{x}(t_1)  \dot{x}(t_2) \rangle_\be \str$ is expected to be integrable  in the variable $t_2-t_1$. Combining this with isotropy, i.e.,  $ \langle  \dot{x}(t)  \rangle_\be \rightarrow 0$ rapidly, as $t \rightarrow \infty$, for $\be <\infty$,  one concludes that,  asymptotically as $t$ tends to $\infty$,
\beq \label{eq: D as variance}
 \langle  (x(t)-x(0) )^2 \rangle_\be = D(\be) \str t \str, 
\eeq
for some positive, finite constant $0 < D(\be) < \infty$, given by
\beq\label{def Dbe}
D(\be)  =  \int_{\bbR} \d t  \langle  \dot{x}(0)  \dot{x}(t) \rangle_\be.
\eeq
Because of  the equipartition theorem, one expects  that 
\beq \label{square velocity}
 \langle  \dot{x}(t)^2   \rangle_\be \rightarrow  \int_{\bbR^d} \d v  \,  v^2 \e^{-\be E(v)},  \qquad  \textrm{as} \, \, t \nearrow \infty,
\eeq
where  $E(v)$ is the kinetic energy of a particle with velocity $v$.   Obviously, \eqref{square velocity} is strictly positive for finite $\be$. Likewise, we expect that $D(\be)$ is strictly positive, for $\be < \infty$.

Equations \eqref{eq: D as variance} and \eqref{square velocity} suggest that, at very large times, the motion of a particle interacting with a reservoir or heat bath at strictly positive temperature has universal features: The mean value of its speed is strictly positive and finite, and its mean displacement is proportional to the square root of time. In contrast, at zero temperature ($\beta=\infty$), the nature of the particle's motion depends on properties of the reservoir and the dispersion law, $\systemdispersion(k)$, of the particle; ($k \in \bbR^d$ is its momentum). If, for a particle momentum $k$, 
\beq\label{eq: no emission}
\systemdispersion(k-q)+\bosondispersion(q) > \systemdispersion(k) ,Ê\qquad \textrm{for all} \, q \neq 0,
\eeq
where $\om(q)$ is the dispersion law of a mode (particle) of the reservoir with momentum $q$, then the particle cannot lower its energy and reduce its speed by exciting a reservoir mode, i.e., by spontaneously emitting a reservoir particle. Its motion will therefore be ballistic. The only effect of the reservoir is a renormalization of the effective mass (the dispersion law $\systemdispersion$) of the particle. If, however, \eqref{eq: no emission} is not satisfied, then the particle can excite reservoir modes (emit reservoir particles).  This process reduces its kinetic energy and speed, i.e., it leads to friction. Friction takes place at all momenta $k$ if, e.g., $\om(q) \propto \str  q \str^2$ (reservoir particles are non-relativistic). If $\om(q) =c \str q \str$, i.e., the reservoir particles are low-energetic phonons or photons, friction only takes place at momenta $k$ of the particle where $\str   \nabla \systemdispersion (k)\str >c$. The radiation corresponding to the reservoir particles emitted in the process of friction is called Cerenkov radiation.

Despite the importance of diffusion and its conceptual simplicity, there has, so far, not existed any rigorous proof that it occurs in a model as described above. 
In the present paper, we establish diffusion for models where the particle is coupled to a spatial array of independent heat baths.

\subsection{Informal description of the model and  main results}

We consider a quantum particle hopping on the lattice $\bbZ^d$. With  each lattice point, we associate   an independent thermal  reservoir consisting of  a free bosonic quantum field describing phonons or photons  at temperature $\be^{-1}$. (In this section, we present a description of the system appropriate at zero temperature; it is formal when $\be <\infty$.)   The total Hilbert space, $\scrH$, of the coupled system is a tensor product of the system space, $\scrH_\sys$, with a reservoir space, $\scrH_\res$, which is a (separable) subspace of the infinite tensor product of reservoir spaces  $\scrH_{\res_x}, x \in \bbZ^d$, at all sites. Thus 
\beq
\scrH:= \scrH_\sys \otimes  \scrH_{\res}.  
\eeq
The system space $\scrH_\sys$ is given by $l^2(\bbZ^d)$, and the particle Hamiltonian is given by the finite-difference Laplacian $\Delta$. Each reservoir is described by a boson field;  creation and annihilation operators creating/annihilating bosons with momentum $q \in \bbR^d$ at site $x$ are written as $ a^{*}_x(q), a_x(q)$ respectively, and  satisfy the canonical commutation relations
\beq
 [a^{\#}_x(q) , a^{\#}_{x'}(q')] =0, \qquad  [a_x(q) , a^{*}_{x'}(q')] = \delta_{x,x'}\delta(q-q'),
\eeq
where $a^{\#}$ stands for either $a$ or $a^{*}$.

The total Hamiltonian of the system is taken  to be
\beq  \label{def: hamiltonian1}
H_\la: = - \Delta    +  \sum_{x \in \lat}  \int_{\bbR^d} \d q   \bosondispersion(q)     a^{*}_x(q) a_x(q)   + \la    \sum_{x \in \lat}  \int_{\bbR^d} \d q\,     \str x \rangle \langle x \str \otimes \left\{    \phi(q) a^{*}_x(q)   +\mathrm{h.c.} \right\},
\eeq 
where $\phi(q)$ is a form factor and $\la \in \bbR$ is the coupling strength. We are writing $\Delta$ instead of $\Delta \otimes 1$ and $a_x(q)$ instead of $1 \otimes a_x(q)$ 

  The independence of the reservoirs has far-reaching consequences. 
Consider the lattice translation $\caT_z, z \in \lat$, acting  on operators on $\scrH$  by  
\baq
\caT_z(  \str x \rangle \langle y \str  ) & := &    \str x+z \rangle \langle y+z \str     \\
\caT_z( a^{\#}_x(q)  ) & := &   a^{\#}_{x+z}(q).
\eaq
It is easily seen that
\beq
\caT_z (H_\la)= H_\la.
\eeq
Notice that this transformation does not involve the momentum coordinates $q$ inside the reservoirs.
It is  the existence of this translation symmetry that  allows us to obtain  results on diffusion without very hard work.   Assume we had started from a model with only one reservoir, with Hamiltonian given by
\beq \label{def: hamiltonian2}
H_\la := - \Delta    +   \int_{\bbR^d} \d q   \om(q)     a^{*}(q) a(q)   + \la    \sum_{x \in \lat}  \int_{\bbR^d} \d q  \,   \str x \rangle \langle x \str  \otimes \left\{  a^{*}(q)   \phi(q) \e^{-\i (x,q)}   +\mathrm{h.c.} \right\},
\eeq 
where, now, the operators $a(q),a^*(q)$ do not carry an index $x$ and  $(\cdot,\cdot)$ is the scalar product on $\bbC^d$.  This model still  exhibits translation symmetry, but this symmetry maps  $ a^*(q) \rightarrow a^*(q) \e^{-\i (z,q)}   , a(q)  \rightarrow  \e^{\i (z, q) }a(q)  $, which is the reason for the factor $\e^{-\i (x,q)} $  in the interaction  Hamiltonian of \eqref{def: hamiltonian2} and leads to  bad decay properties of the reservoir correlation functions.

The initial state for the reservoirs is chosen to be $\initialres := \otimes_{x \in \lat} \rho_{\res_x}^\be$,  where each $\rho_{\res_x}^\be$ is an equilibrium state at inverse temperature $\be$ for the reservoir at site $x$. For mathematical details on the construction of infinite reservoirs, see \cite{derezinski1,bratellirobinson,arakiwoods}.
 In Lemma \ref{lem: definition dynamics}, we   define the reduced Heisenberg-picture dynamics (i.e.,  the particle dynamics obtained by tracing out the reservoir degrees of freedom)  
\beq
S \mapsto \caZ_{t}^{\la,*} (S) := \initialres \left[    \e^{i t H_\la}  (S  \otimes 1)   \e^{-i t H_\la}   \right], \qquad S \in \scrB(\scrH_\sys)
\eeq
Placing the particle initially at site $0$, that is, in the vector $\str 0 \rangle$, we study the distribution function
\beq
\mu_t^\la(x):=  \langle 0 \str  \caZ_{t}^{\la,*} (\str x \rangle \langle x \str )  \str 0 \rangle.
\eeq
 The quantity $\mu_t^\la(x) \geq 0$ is the probability to find the particle at site $x$  at time $t$.  One  easily  checks that 
\beq
\sum_{x \in \lat}  \mu_t^\la(x) =1, 
\eeq
and hence it is justified to think of $\mu_t^\la(\cdot)$ as a probability density on $\lat$.  By diffusion, we mean that, for large $t$, 
\beq\label{def: diffusion1}
\mu_t^\la(x)    \sim  \left( \frac{1}{2\pi t}\right)^{d/2} (\mathrm{det}D_\la)^{-1/2} \exp \{ -  \left(  \frac{x  }{\sqrt{t}} , D^{-1}_\la  \frac{x  }{\sqrt{t}}  \right) \},
\eeq
where $D_\la \equiv D_\la(\be)$ is a  positive-definite matrix with the interpretation of a diffusion tensor. (Actually, if the particle Hamiltonian is given by $-\Delta$ (as in this section), the tensor $D_\la$ is isotropic and hence a scalar). 
We now move towards quantifying \eqref{def: diffusion1}.  Let us fix a time $t$. Since $\mu_t^\la(x)$ is a probability measure, one can think of $x_t$ as a random variable such that 
\beq\label{def: random variable}
\mathrm{Prob}_\la(x_t=x): = \mu_t^\la(x).
\eeq
The claim that the random variable $\frac{x_t}{\sqrt{t}}$ converges in distribution, as $t \nearrow \infty$, to a Gaussian random variable with mean $0$ and variance $D^{-1}_\la$ is called a Central Limit Theorem (CLT). 
It is equivalent to pointwise convergence of the characteristic function, i.e., 
\beq \label{eq: conv characteristic function}
 \sum_{x \in \lat} \e^{ -\frac{\i }{\sqrt{t}} (x,q) }  \mu_t^\la(x)   \quad \mathop{\longrightarrow}\limits_{t \uparrow \infty} \quad  \e^{- \frac{1}{2}(q, D_\la q)}, \quad  \textrm{for all}  \, q \in \bbR^d,
\eeq
and it is this statement which is our main result, Theorem \ref{thm: diffusion}. 

 Let $X:= \sum_{x \in \lat}x\,  \str x \rangle \langle x \str$ be the position operator on the lattice and write $X_t:= \caZ_{t}^{\la,*}(X)$. Then a slightly stronger version of \eqref{eq: conv characteristic function} implies that 
 \beq \label{eq: convergence of second moment}
 \langle 0 \str \frac{ X_t}{t}  \str 0 \rangle  \quad \mathop{\longrightarrow}\limits_{t \uparrow \infty} \quad  0 , \qquad     \langle 0 \str \frac{ X^2_t}{t}  \str 0 \rangle  \quad \mathop{\longrightarrow}\limits_{t \uparrow \infty} \quad   D_{\la},
 \eeq
 and this will also follow from our results; see Remark \ref{rem: convergence of moments}.
 
 Our second result concerns the asymptotic expectation value of the kinetic energy of the particle. Let $E_t:=\caZ_{t}^{\la,*}(-\Delta)$ be the kinetic energy at time $t$. We prove that, for all bounded  functions $\theta$,
 \beq
  \langle 0 \str  \theta(E_t) \str 0 \rangle  \quad \mathop{\longrightarrow}\limits_{t \uparrow \infty} \quad    \frac{\int_{\tor} \d k \,  \theta(\systemdispersion(k))   \e^{-\be \systemdispersion(k)}   }{    \int_{\tor} \d k  \,  \e^{-\be \systemdispersion(k)}   } +O(\la^2), \qquad  \la \downarrow 0,
 \eeq
 where $\systemdispersion(k)=\sum_{j=1}^d (2-2\cos k^j) $ is the dispersion law of the particle.  This is stated in Theorem \ref{thm: stationary}.

\subsection{Related results}

In the physics literature, the model with Hamiltonian \eqref{def: hamiltonian2}  and with reservoir particles being phonons is referred to as the polaron model. We refer to \cite{spohnkineticreview,silviusparrisdebievre} and references therein for a discussion. The first rigorous result on this model at positive temperature is probably in  \cite{spohnelectronrandomimpurity} and the best result up to date is in \cite{erdos}; (see also Section \ref{sec: kinetic limit}).  

To describe some related results, we first introduce a different  model, which, however, will turn out to be closely related to ours. 

Assume that the quantum particle interacts with random time-dependent impurities. That is, let $V(x,t)$ be  a real-valued random variable, for $x \in \lat, t \in \bbR$, with mean zero
\beq
\bbE \left[V(x,t) \right]=0, \eeq
 satisfying the Gaussian property
\baq 
\bbE \left[V(x_{2n},t_{2n})   \ldots   V(x_1,t_1)  \right] &=& \sum_{\mathrm{pairings} \,\pi}\prod_{(r,s) \in \pi}    \bbE \left[V(x_s,t_s)     V(x_r,t_r)  \right] \label{guassianity1}\\[2mm]
\bbE \left[V(x_{2n+1},t_{2n+1})   \ldots   V(x_1,t_1)  \right] &=& 0 \label{guassianity2},
  \eaq
where a pairing $\pi$ is a partition of $\{1,\ldots,2n \}$ into $n$ pairs and the product is over these pairs $(r,s)$. In addition, we assume that the correlation functions are invariant under translations  in time and space,
\beq
\bbE \left[V(x,t)     V(x',t')  \right] = \bbE \left[V(x-x',t-t')     V(0,0)  \right] . \eeq
A time-dependent Hamiltonian is given by
\beq
H_{\la}(t) := - \Delta +   \la \sum_{x \in \lat}      V(x,t)   \str x\rangle\langle x \str,
\eeq
and the dynamics $U_t^\la$ is defined (almost surely) by  
\beq
 \frac{\d}{\d t}U^\la_t= -\i H_\la(t) U^\la_t , \qquad  U^\la_0=1.
\eeq
One can check that if we choose
\beq\label{eq: correspondance}
\bbE    \left[   V(x,t) V(0,0)   \right]    :=  \delta_{x,0} \int_{\bbR^d} \d q    \str \phi(q)\str^2  \left(   \initialres\left[  a_0(q) a_0^{*}(q)   \right]  \e^{-\i t \bosondispersion(q)}+ \initialres\left[   a_0^{*}(q)  a_0(q) \right]  \e^{\i t \bosondispersion(q)}  \right),
\eeq
(the RHS will be motivated  in Section \ref{sec: full reservoir}),
then we have that
\beq
\bbE \left[ U^\la_t  \rho (U^\la_t)^* \right]   =     \caZ_t^{\la}(\rho).
\eeq
The reason for this equivalence is that both models share a ``quasi-free", or, `Gaussian property''. (In the Hamiltonian model, this is a consequence of the fact that the free reservoir Hamiltonian is quadratic in the creation and annihilation operators).
Of course, it is not clear that the definition \eqref{eq: correspondance} makes sense. For example, the RHS could have an imaginary part, whereas the LHS is real.
However, upon inspection of our proof, it becomes clear that whenever 
\beq
\big\str \bbE    \left[   V(x',t') V(x,t)   \right]  \big\str    \leq  \delta_{x,x'}  c \e^{-g_\res \str t-t'\str}, \qquad c<\infty, g_\res <\infty,
\eeq
then our proof (which assumes the same bound for the RHS of  \eqref{eq: correspondance}) carries over, and we can establish diffusive behavior for the averaged dynamics $\bbE \left[ U^\la_t  \rho (U^\la_t)^* \right] $.
In fact, the locality in space (expressed by $  \delta_{x,x'} $) is not crucial, at all, but we do  not pursue this generalization here. 

The case 
$
\bbE    \left[   V(x',t') V(x,t)   \right]      =  \delta_{x,x'}  \delta(t-t')
$
has been treated in \cite{ovchinnikoverikhman}. While we were completing this paper, a preprint \cite{kangschenker} appeared where diffusion is proven under the assumption that $V(x,t)$ is an exponentially ergodic Markov process (not necessarily Gaussian) for each $x$. Preliminary results were obtained in \cite{pilletmarkovianpotential} and \cite{tcheremchantsev}. 
One of the ultimate goals of these projects is to treat the case where $V(x,t)=V(x)$ is time-independent and $d=3$, i.e., $\bbE    \left[   V(x') V(x)   \right] = \delta_{x,x'} $. This is the well-known  Anderson model.

Models in which the particle is coupled to a thermal reservoir are expected to  be easier than the Anderson model, mainly because one expects that diffusion persists for large values of the coupling constant $\la$, whereas  the Anderson model has a phase transition, and the particle gets localized at large values of $\str\la\str$. 

However, even for a particle coupled to a thermal reservoir in $d=3$,  our techniques  fail, since this model would essentially correspond to one with
$
\bbE    \left[   V(x',t') V(x,t)   \right]      \sim   \frac{1}{\str x -x' \str}   \chi[ \str x -x' \str \geq c \str t -t' \str ],
$
(for  reservoir particles with dispersion relation $\bosondispersion(q)=c\str q\str$).

There are however results that establish diffusive behavior up to times of order $\la^{-(2+\delta)}$, for some $\delta>0$, even for the Anderson model, see \cite{erdossalmhoferyaunonrecollision,erdossalmhoferyaurecollision} (a resulting lower bound for the localization length is proven in \cite{chenlocalization}).  In fact, our technique employs results of the type proven in these references as an ingredient of the proof; see Section \ref{sec: kinetic limit}. 

We might add that we expect that the model treated in the present paper can also be analyzed using operator-theoretic techniques introduced for the study of return to equilibrium in open quantum systems, see e.g.\ \cite{jaksicpillet2,bachfrohlichreturn} , and we are currently working on such a formulation.  The technique used in the present paper is largely based on \cite{deroeck}.

\subsection{Outline}
In Section \ref{sec: model}, we introduce our model,  making precise the description in the introduction. Then, in Section \ref{sec: result}, we state our assumptions and  main results with as few divagations as possible.  Section \ref{sec: discussion} contains the main ideas of the paper and the plan of the proof.  The technical parts of the proof are  postponed to  Section \ref{sec: dyson and proof of polymer}, which contains the proof of Theorem \ref{thm: polymer model}, and  Section \ref{sec: proof of main technical}, where one finds the proof of Theorem \ref{thm: main technical improved} .

\subsection{Acknowledgements}

W.D.R. thanks J. Bricmont for helpful discussions  on an early version of this model.


\section{Model} \label{sec: model}

\subsection{Conventions and notation} \label{sec: conventions}

Given a Hilbert space $\scrE$, we use the standard notation 
\beq \scrB_p(\scrE):= \left\{  S \in \scrB(\scrE), \Tr\left[(S^*S)^{p/2}\right] < \infty  \right\}  ,\qquad   1 \leq p \leq \infty,  \eeq
with $\scrB_\infty(\scrE)\equiv \scrB(\scrE)$ the bounded operators on $\scrE$,
and
\beq
\norm S \norm_p := \left(\Tr\left[(S^*S)^{p/2}\right]\right)^{1/p}, \qquad  \norm S \norm:=\norm S \norm_{\infty}.
\eeq

For bounded  operators acting on $\scrB_p(\scrE)$, i.e. elements of $\scrB(\scrB_p(\scrE))$, we use in general  the calligraphic font: $\caV,\caW,\caT,\ldots$.
An operator $X \in \scrB(\scrE)$ determines an operator  $\adjoint(X) \in \scrB(\scrB_p(\scrE))$ by 
\beq
\adjoint(X) S: =[X,S]= XS-SX, \qquad   S \in \scrB_p(\scrE).
\eeq
We will mainly use the case $p=2$.
 The norm of  operators in $\scrB(\scrB_2(\scrE))$ is defined by \beq\label{def: norm on operators}
\norm \caW \norm := \sup_{S \in \scrB_2(\scrE)}   \frac{\norm \caW(S) \norm_2}{\norm S\norm_2}.
\eeq

For vectors $\ka \in \bbC^d$, we let $\Re \ka, \Im \ka$ denote the vectors $(\Re \ka^1, \ldots, \Re \ka^d)$ and  $(\Im \ka^1, \ldots, \Im \ka^d)$, respectively. The scalar product on $\bbC^d$ is written as 
$(\cdot, \cdot)$ and the norm  as $\str \ka \str := \sqrt{(\ka, \ka)}$. 
The scalar product on  an infinite-dimensional Hilbert space $\scrE$  is written as $\langle \cdot, \cdot \rangle$, or, occasionally, as $\langle \cdot, \cdot \rangle_{\scrE}$. All scalar products are defined to be  linear in the second argument and anti-linear in the first one.

We write $\sfock(\scrE)$ for the symmetric (bosonic) Fock space over the Hilbert space $\scrE$ and we refer to  \cite{derezinski1} for definitions and discussion. 
If $\om$ is a self-adjoint operator on $\scrE$, then its (self-adjoint) second quantization, $\d \sfock (\om)$, is defined by 
\beq
\d \sfock (\om)  \Symm (\phi_1 \otimes \ldots \otimes \phi_n)   := \sum_{i=1}^n    \Symm (\phi_1 \otimes\ldots \otimes  \om \phi_i \otimes \ldots \otimes \phi_n),  
\eeq
where $ \Symm$ projects on the symmetric subspace and $\phi_1, \ldots, \phi_n \in \scrE$.

\subsection{The particle}\label{sec: particle}

We set $\scrH_\sys=l^2(\bbZ^d)$  (the subscript $\sys$ refers to 'system', as is customary in system-reservoir models). 
We define the one-dimensional projector $1_x $ on $\scrH_\sys$ by 
\beq
(1_x f)(x'):= \delta_{x,x'} f(x') ,\qquad         x,x' \in \lat,  f \in l^2(\lat).
\eeq
We will often consider the space $\scrH_\sys$ in its dual representation, i.e.\ as $L^2(\bbT^d, \d k) $ where $\bbT^d$ is the $d$-dimensional torus, which is identified with $L^2([-\pi,\pi]^d)$.  We define the  `momentum' operator $P$ as multiplication by $k \in \bbT^d$, i.e.,
\beq
(P \theta)(k):=k \theta(k), \qquad    \theta \in L^2(\tor, \d k).
\eeq
Although $P$ is well-defined as a bounded operator, it does not have nice properties; e.g., it is not true that $[X^i,P^j]=\i \delta_{i,j}$. 
Throughout the paper, we only use operators  $f(P)$ where $f$ is periodic on $\bbR^d$ with period $2\pi$, i.e.\ a function on $\tor$.  We choose a periodic function $\systemdispersion$ to be the dispersion law of the system. Although this is not essential, we require $\systemdispersion$ to have inversion symmetry, i.e.,
  \beq
  \systemdispersion(k)=\systemdispersion(-k), \qquad  k \in \bbT^d.
  \eeq
 The Hamiltonian of our particle is  given by
 \beq
 H_\sys:= \systemdispersion(P).
 \eeq
 Our first assumption ensures that $H_\sys$ is sufficiently regular.
\begin{assumption}[Analyticity of system dynamics] \label{ass: analytic dispersion}
The function $\systemdispersion$, defined originally on $\bbT^d$, extends to an analytic function  in a strip of width $\delta_\systemdispersion>0$.  That is, when viewed as a periodic function on $\bbR^d$,  $\systemdispersion$ is analytic in $(\bbR+ \i [-\delta_\systemdispersion, \delta_\systemdispersion])^d $. 
Moreover, we assume that the function $\tor \ni k \mapsto
 ( \upsilon, \nabla\systemdispersion(k))$ does not vanish identically for any vector $\upsilon \in \bbR^d, \upsilon \neq 0$.
\end{assumption}
The most natural choice for $\systemdispersion$ satisfying Assumption \ref{ass: analytic dispersion} is $\systemdispersion(k)= \sum_{j=1}^d  (2-2\cos (k^j))$, which corresponds to $-H_\sys$ being the discrete Laplacian.

\subsection{The reservoirs} \label{sec: reservoir}

\subsubsection{Reservoir spaces}
We consider an array of independent reservoirs. With each site $x \in \lat$ we associate a one-particle Hilbert space
$\frh_{x}$ (one can imagine that  $\frh_{x}= L^2(\bbR^d)  $)  with a positive one-particle Hamiltonian $\bosondispersion_{x}$. 
The reservoir at $x$ is now described by the Fock space $\sfock (\frh_{x})$ with Hamiltonian $\d \sfock(\bosondispersion_x)$.
 The full reservoir space is
 \beq \scrH_\res :=  \sfock (\oplus_{x \in \lat}  \frh_{x}) \quad \textrm{with Hamiltonian} \quad H_\res :=  \sum_{x \in \lat} \d \sfock (\bosondispersion_{x}). \eeq  

We choose the different reservoir one-particle spaces to be isomorphic copies of a fixed space $\frh$ so that  $\varphi \in \frh_{x}$ is naturally identified with an element of $ \frh_{x'}$ that is also denoted by  $\varphi$ without further warning. Likewise, $\bosondispersion_{x}$ is naturally identified with $\bosondispersion_{x'}$. Hence, if no confusion is possible
 we  simply write  $\frh$ and $\bosondispersion$ to denote the (one-particle) one-site space and the Hamiltonian, respectively.

For $\varphi \in \frh$, the  operators $a^*_x(\varphi)/ a_x(\varphi)$ stand for the creation/annihilation operators on the Fock space $\sfock (\frh_{x})$.  By the embedding of $\frh_x$ into $\oplus_{y \in \lat}  \frh_{y}$, these creation/annihilation operators act on $\scrH_\res$ in a natural way. They satisfy the commutation relations
\beq
[   a_x(\varphi), a^*_{x'}(\varphi')  ]  = \delta_{x,x'} \langle \varphi, \varphi' \rangle_{\frh}, \qquad   [   a^{\#}_x(\varphi), a^{\#}_{x'}(\varphi')  ]  = 0
\eeq
where $a^{\#}$ stands for either $a^*$ or $a$.
\subsubsection{Interaction and initial reservoir state}\label{sec: full reservoir}

We pick  a `structure factor' $\phi \in\frh$ and we choose the interaction between the system and the reservoir at site $x$ to be given by
\beq
1_x  \otimes  \Psi_x(\phi),  \quad \textrm{where} \quad    \Psi_x(\phi)=   a_x(\phi)+ a_x^*(\phi) 
\eeq
So far, we have not made any assumptions concerning $\bosondispersion$ and $\phi$, but their form will be restricted by Assumption \ref{ass: exponential decay} in \eqref{eq: ass decay}.
The particle interacts with all reservoirs in a translation invariant way. Hence the total interaction Hamiltonian is given by
\beq
H_{\sys\res}:= \sum_{x \in \lat}   1_x  \otimes \Psi_x(\phi) \quad   \textrm{on} \quad 
\scrH_\sys \otimes  \scrH_\res.
\eeq

Next, we put the tools in place to describe the positive temperature  reservoirs. 
Let $\scrC$ be the $*$-algebra consisting of polynomials in $a_x(\varphi),a_{x'}^*(\varphi')$, with $\varphi,\varphi' \in \frh, x,x' \in \lat$. We introduce the positive operator $T_\beta= (\e^{\be \bosondispersion}-1)^{-1}$ on $\frh$; $\beta$ should be thought of as the inverse temperature.

We let  $\initialres $ be a quasi-free state defined on $\scrC$. It is fully specified\footnote{The reason why, in models like ours, it is enough to know the state on $\scrC$, has been explained in many places, e.g.\  \cite{arakiwoods,  bratellirobinson, jaksicpilletderezinski, froehlichmerkli}. } by 
\ben
\item{ Gauge-invariance 
\beq \initialres \left[  a^*_x(\varphi)
 \right]=
 \initialres \left[  a_x(\varphi) \right]= 0. \eeq}
 \item{Two-point correlation functions
  \beq \label{def: thermal state canonical}  \left( \begin{array}{cc} \initialres \left[ a^*_{x}(\varphi)
a_{x'}(\varphi') \right]
& \initialres \left[a^*_{x}(\varphi) a^*_{x'}(\varphi') \right]  \\
\initialres \left[ a_{x}(\varphi)a_{x'}(\varphi')\right]& \initialres \left[ a_{x}(\varphi) a^*_{x'}(\varphi')
\right]
\end{array} \right)  =   \delta_{x,x'} \left(\begin{array}{cc}
 \langle \varphi' | T_\beta \varphi \rangle & 0    \\
0 & \langle \varphi |(1+ T_\beta) \varphi') \rangle
\end{array}\right). \eeq
 }
\item{ Quasi-freeness, i.e.\ , the higher-point  correlation functions are expressed in terms of the two-point function by
\baq 
\initialres \left[a^{\#}_{x_1}(\varphi_1)  \ldots a^{\#}_{x_{2n}}(\varphi_{2n}) \right] & =&   \sum_{\mathrm{pairings}\, \pi} \prod_{(r,s) \in \pi}   \initialres \left[a^{\#}_{x_r}(\varphi_r) a^{\#}_{x_s}(\varphi_s) \right]   \label{eq: gaussian property1}\\[2mm]
\initialres \left[a^{\#}_{x_1}(\varphi_1)  \ldots a^{\#}_{2n+1}(\varphi_{2n+1}) \right]  &=&  0. \label{eq: gaussian property2}
\eaq
where a pairing $\pi$ is a partition of $\{1,\ldots,2n \}$ into $n$ pairs and the product is over these pairs $(r,s)$.
 }
\een

A quantity that will play an important role  in our analysis is the  on-site-reservoir correlation function defined by 
\baq
\hat\psi(t ) &:=&  \initialres  \left[    \Psi_x(\e^{\i t \bosondispersion}\phi)      \Psi_x(\phi)   \right]   \nonumber  \\
&=&   \langle \phi , T_\beta \e^{\i t \bosondispersion } \phi \rangle   +      \langle \phi , (1+T_\beta) \e^{- \i t \bosondispersion } \phi\rangle \label{def: correlation function} .
\eaq
It is useful to introduce $\psi$, the inverse Fourier transform of $\hat\psi$
\beq \label{def: effective structure factor}
 \hat \psi(t )  =  \frac{1}{\sqrt{2\pi}}\int_{\bbR} \d \xi  \,    \e^{\i \xi t }    \psi(\xi).
 \eeq
 As is explained in Appendix A, $\psi$   is the (squared norm of) the \emph{effective structure factor}. In particular, $\psi(\xi)\geq 0$.

The following assumption requires the  reservoir to have exponential decay of correlations.

\begin{assumption} \label{ass: exponential decay}
There is a  decay rate $ g_\res >0$ such that 
\beq \label{eq: ass decay}
 \sup_{t \in \bbR}  \left(\str \hat\psi(t)  \str   \e^{ g_\res \str t \str  }  \right) < \infty.
\eeq	
We assume that $\hat\psi \not\equiv 0$, or equivalently $\psi  \not\equiv 0$.
\end{assumption}
The  assumption that $\hat\psi \not\equiv 0$ ensures that the particle  interacts effectively with the fields describing the reservoirs. In Appendix A, we discuss examples of reservoirs that satisfy Assumption \ref{ass: exponential decay}, provided that $\beta < \infty$.

\subsection{The dynamics}

Consider the zero-temperature Hilbert space $\scrH_\sys \otimes \scrH_\res$. The  Hamiltonian (with coupling constant $\la)$ is formally defined by
\beq \label{def= Hlambda}
H_\la : = H_\sys + H_\res+ \la H_{\sys\res}.
\eeq
This operator generates the zero-temperature dynamics. However, we need to consider the dynamics at positive temperature. In particular, we must understand the reduced positive-temperature dynamics of the system $\sys$ after the reservoir degrees of freedom have been traced out.

By  a slight abuse of notation, we use $\initialres$ to denote the conditional expectation  from $\scrB(\scrH_\sys \otimes \scrC) $ to  $ \scrB(\scrH_\sys) $ given by 
\beq
 \initialres (S \otimes R):= S \initialres(R), \qquad  S \in \scrB(\scrH_\sys), R \in \scrC
\eeq
where $\initialres(R)$ is defined through (\ref{def: thermal state canonical}-\ref{eq: gaussian property2}).

Formally, the reduced dynamics in the Heisenberg picture is given by 
\beq \label{def: z}
\caZ_{t}^{\la,*} (S) :=     \initialres \left[ \e^{\i tH_\la } \,  ( S \otimes 1)  \,  \e^{-\i tH_\la } \right]
\eeq
whenever the RHS is well-defined.   
%

A mathematically precise definition of the reduced dynamics is the subject of the next lemma.

\begin{lemma}\label{lem: definition dynamics}
Suppose that Assumption \ref{ass: exponential decay} (see (\ref{eq: ass decay})) holds and define
\beq \label{def: evolved interaction}
H_{\sys\res}(t) :=    \sum_{x \in \lat}  1_x(t) \otimes \Psi_x(\e^{\i t \bosondispersion} \phi)  \quad \textrm{with} \quad 1_x(t):= \e^{\i t H_\sys}1_x   \e^{-\i t H_\sys} .
\eeq
The series\footnote{In fact, one needs to do things more carefully, since $ H_{\sys\res}(t) \notin \scrC$. A possible solution is to define the cut-off interaction $H_{\sys-\res, \La}(t) =    \sum_{x \in \La }   1_x(t)  \otimes \Psi_x(\e^{\i t \bosondispersion} \phi)$, for some finite subset $\La \subset \bbZ^d$, and to show that one can take the limit $\La \nearrow \bbZ^d$ in the expression analogous to \eqref{eq: dyson1}. }
    \beq \label{eq: dyson1}
 \caZ_t^{\la,*}(S):=   \mathop{\sum}_{n\in \bbZ^+  }     (\i \la)^{n}   
  \mathop{\int}\limits_{0 \leq t_1 \leq \ldots \leq t_n \leq t } \d t_1 \ldots \d t_n \, 
 \initialres \left[   \ad(H_{\sys\res}(t_1))  \ldots  \ad(H_{\sys\res} (t_n)) \,  \e^{\i t \ad(H_\sys)}     (S \otimes 1)  \,
  \right] 
    \eeq
 is well-defined for any $\la,t \in \bbR$ and arbitrary $S \in \scrB(\scrH_\sys)$, i.e., the RHS converges  absolutely in the norm of $\scrB(\scrH_\sys)$,  and $\caZ_t^{\la,*}$ has the expected properties, namely
 \beq
 \caZ_t^{\la,*}(1)=1, \qquad    \norm \caZ_t^{\la,*}(S) \norm \leq \norm S \norm.
 \eeq
\end{lemma}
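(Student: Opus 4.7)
I would proceed by regularizing to a finite spatial volume, Wick-reducing the reservoir expectation, bounding the resulting series uniformly, and then passing to the infinite-volume limit.

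Step 1 (cutoff dynamics). For any finite $\La \subset \lat$, set
\[
 H_{\la,\La} := H_\sys + H_\res + \la \sum_{x \in \La} 1_x \otimes \Psi_x(\phi).
\]
Since the interaction is a finite sum of terms infinitesimally bounded relative to $H_\res$, $H_{\la,\La}$ is self-adjoint on $\Dom(H_\sys + H_\res)$ and generates a unitary group. The cutoff reduced dynamics $\caZ_{t,\La}^{\la,*}(S) := \initialres[\e^{\i t H_{\la,\La}}(S \otimes 1)\e^{-\i t H_{\la,\La}}]$ is the composition of a $*$-automorphism with the unital completely positive conditional expectation $\initialres$, hence is automatically unital and a contraction: $\caZ_{t,\La}^{\la,*}(1) = 1$ and $\norm \caZ_{t,\La}^{\la,*}(S) \norm \leq \norm S \norm$.

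Step 2 (Wick reduction). Expanding in the interaction picture about $H_\sys + H_\res$ gives an a priori formal series of the type \eqref{eq: dyson1} with site sums restricted to $\La$; this series has all terms well-defined (finitely many bounded integrands). Apply the quasi-freeness \eqref{eq: gaussian property1}--\eqref{eq: gaussian property2} term by term: odd $n$ vanish, and for even $n = 2m$ the result is a sum over pairings $\pi$ of $\{1,\ldots,2m\}$, with each pair $(r,s) \in \pi$ contributing a factor $\hat\psi(t_s - t_r)\,\delta_{x_r,x_s}$ (the $a/a^*$ ordering being absorbed in the definition \eqref{def: correlation function}), and a system contribution that is a nested commutator built from the time-evolved projections $1_{x_k}(t_k)$.

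Step 3 (uniform bound). I bound the $2m$-th term in $\scrB(\scrH_\sys)$-norm by combining: (i) the count $(2m-1)!!$ of pairings; (ii) Assumption \ref{ass: exponential decay}, which gives $\str \hat\psi(t_s-t_r)\str \leq C \e^{-g_\res\str t_s-t_r\str}$, so the pair-factored time integral over the simplex is dominated by $\int_{[0,t]^{2m}}\prod_{(r,s)}\e^{-g_\res\str t_s - t_r\str}\d t \leq (2Ct/g_\res)^m$; (iii) a uniform bound on the paired-site nested commutator. For (iii) I would expand the $2m$ commutators into $2^{2m}$ Heisenberg strings and handle each of the $m$ free site sums by a Schwarz-type identity of the form $\norm \sum_x 1_x(\tau) B 1_x(\tau') \norm \leq \norm B \norm$, which holds because $1_x 1_y = \delta_{x,y} 1_x$ and the unitaries $\e^{\i \tau H_\sys}$ can be absorbed by similarity; when both occurrences of a site fall on the same side of $S$, a Cauchy-Schwarz estimate in the site variable yields the same factor. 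The outcome is a bound $(2K)^{2m}\norm S \norm$ with $K$ independent of $\La$, $m$ and the times, so the full $2m$-th term is at most $(2m-1)!!\,(2Ct/g_\res)^m (2K\la)^{2m}\norm S \norm = c^m (\la^2 t)^m /(2^m m!)\,\norm S \norm$, which is summable for every $t$ and $\la$.

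Step 4 (limit and conclusion). Each fixed term of the Wick-reduced series converges pointwise as $\La \nearrow \lat$ (only finitely many site sums are involved per term), while the series is absolutely convergent uniformly in $\La$ by Step 3; dominated convergence then allows interchanging the limit and the sum, and the limit is exactly \eqref{eq: dyson1}. Unitality and norm-contraction are inherited from the cutoff level. The main obstacle is the estimate in Step 3(iii): the paired site sums must not produce a volume factor, and this is exactly where the orthogonality of the family $\{1_x\}_{x \in \lat}$ together with the spatial locality of $H_{\sys\res}$ (i.e.\ the one-to-one pairing of sites enforced by the $\delta_{x_r,x_s}$ from Wick contraction and the reservoir independence) does the crucial work.
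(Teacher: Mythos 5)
The paper itself does not spell out a proof of Lemma~\ref{lem: definition dynamics}: it merely remarks that the lemma can be obtained either by direct estimates of the RHS of \eqref{eq: dyson1} ``for which the estimates given in the present paper amply suffice'' (i.e.\ Lemmas~\ref{lem: apriori estimate} and \ref{lem: combinatorics}), or abstractly via the Araki--Woods representation. Your plan (finite-volume cutoff, Wick reduction, uniform bound, thermodynamic limit) is in the spirit of the first route and is the right skeleton, but there is a genuine error in the quantitative heart of it.

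\textbf{The error in Step 3(ii).} You bound the simplex time-integral for a \emph{single} pairing $\pi$ by the cube integral $(2Ct/g_\res)^m$, and then multiply by the pairing count $(2m-1)!!$. That product grows like $m^m\cdot(\mathrm{const})^m$ (since $(2m-1)!!\sim\sqrt{2}\,(2m/e)^m$), so the resulting series diverges for every $t>0$ and $\la\neq 0$; and the displayed identity
$(2m-1)!!\,(2Ct/g_\res)^m(2K\la)^{2m}=c^m(\la^2t)^m/(2^m m!)$ is simply false (the left side carries an extra factor $\sim (2m)!$). The loss is in doing ``per-pairing bound $\times$ number of pairings'': the per-pairing simplex integral is in general far smaller than the cube integral, and this is precisely what makes the sum over pairings controllable. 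The fix is the change of variables of Lemma~\ref{lem: combinatorics} (there written for $\caV_t(\pi)$): set $u_i=t_{r_i},\ v_i=t_{s_i}$. This map is a bijection from (pairing, simplex times) onto a region contained in $\{0\le u_1\le\dots\le u_m\le t,\ u_i\le v_i\le t\}$, so
\[
\sum_{\pi\in\caP_m}\ \int_{0\le t_1\le\dots\le t_{2m}\le t}\!\prod_{(r,s)\in\pi}|\hat\psi(t_s-t_r)|\,\d\underline t
\ \le\ \int_{0\le u_1\le\dots\le u_m\le t}\d\underline u\ \prod_{i=1}^{m}\int_{u_i}^{\infty}|\hat\psi(v_i-u_i)|\,\d v_i
\ \le\ \frac{t^m}{m!}\Big(\frac{C}{g_\res}\Big)^{m}.
\]
Combined with your site/side bound $(2K\la)^{2m}$, the $2m$-th term is $\le\frac{(a\la^2 t)^m}{m!}\norm S\norm$, which sums to $\e^{a\la^2 t}\norm S\norm$ for all $t,\la$. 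Note this only uses $\hat\psi\in L^1(\bbR^+)$, consistent with the paper's remark that Assumption~\ref{ass: exponential decay} is more than needed.

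\textbf{A secondary gap in Step 3(iii).} The ``diagonal is a contraction'' identity $\norm\sum_x 1_x(\tau)B\,1_x(\tau')\norm\le\norm B\norm$ is correct and, unlike the paper's Lemma~\ref{lem: apriori estimate}, does not use Assumption~\ref{ass: analytic dispersion} at all --- a genuinely more elementary route if it can be carried through. But the iteration you sketch is straightforward only for \emph{nested} pairs. For \emph{crossing} pairs (e.g.\ the Heisenberg string $1_{x_1}(\tau_1)1_{x_2}(\tau_2)\,S\,1_{x_1}(\tau_3)1_{x_2}(\tau_4)$ summed over $x_1,x_2$), the operator sandwiched by one pair still depends on the other free site variable, and the naive order of summation produces a volume factor: e.g.\ summing $x_2$ first gives an $x_1$-dependent operator bounded by $\norm S\norm$, but the leftover sum over $x_1$ is no longer in sandwich form. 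You will need a more careful rank-one/Bessel argument (or switch to the paper's mechanism, which handles all pairings uniformly through the exponential spatial decay of $(\e^{-\i tH_\sys})(x,x')$, at the controlled cost of an $\e^{2tc_\systemdispersion(\ga_1)}$ factor beaten by the $\hat\psi$ decay). As written, Step 3(iii) is plausible but not a proof.

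Steps 1, 2, and 4 (cutoff self-adjointness, Wick reduction, and passage to the limit by dominated convergence once a uniform bound is in hand) are fine.
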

One can prove this lemma (under  less restrictive conditions than those in Assumption \ref{ass: exponential decay})  by  direct estimates of the RHS of \eqref{eq: dyson1}. For this purpose, the estimates given in the present paper amply suffice. However, one can also define the system-reservoir dynamics as a dynamical system on a Von Neumann algebra through the Araki-Woods representation. This is the usual approach in the mathematical physics literature; see e.g.\ 
 \cite{jaksicpilletderezinski,jaksicpillet2,froehlichmerkli}.

Finally, we define $\caZ_{t}^{\la}: \scrB_1(\scrH_\sys) \to \scrB_1(\scrH_\sys)$, the reduced dynamics in  the Schr\"{o}dinger picture,  by  duality, i.e.,
\beq \label{eq: duality}
\Tr  [ \rho_\sys \caZ_{t}^{\la,*} (S) ]  =     \Tr  [ \caZ_{t}^{\la}(  \rho_\sys ) S   ], \qquad   S \in \scrB(\scrH_\sys), \rho_\sys \in \scrB_1(\scrH_\sys).
\eeq

We could also have started by defining the full initial state $\rho_{\sys\res}$ of the total system  consisting of the particle and reservoirs as  the positive, normalized functional 
\beq \label{def: product state}
\rho_{\sys\res} := \rho_\sys \otimes \rho_\res^\be  \qquad  \textrm{ on} \quad  \scrB(\scrH_\sys) \otimes \scrC,
\eeq
where we abuse notation by employing the same symbol $\rho_\sys$  for both  the density operator (a positive  element of $\scrB_1(\scrH_\sys)$) and  the state it determines on $\scrB(\scrH_\sys)$, i.e.,
 \beq  
 \rho_\sys [S]:= \Tr[ \rho_\sys S], \qquad    S \in  \scrB(\scrH_\sys).
 \eeq
Then,
\beq
\rho_{\sys\res} \left[ \e^{\i tH_\la } \,  ( S \otimes 1)  \,  \e^{-\i tH_\la } \right]=  \Tr  [ \caZ_{t}^{\la}(  \rho_\sys ) S   ].
\eeq
In what follows, we simply write $\rho$ for $\rho_\sys$.
For convenience, we treat $\rho$ as an element of the Hilbert space $\scrB_2(\scrH_\sys)$, which is justified since $\scrB_1(\scrH_\sys) \subset \scrB_2(\scrH_\sys)$.

\section{Result} \label{sec: result}

We now state our main results.  Recall that the position operator $X$ on $l^2(\bbZ^d)$ is given by
\beq
(Xf)(x)= x f(x), \qquad  x \in \lat, f \in l^2(\lat).
 \eeq
For $\ka  \in \bbC^d$, we define
\beq \label{def: jkappa}
\caJ_{\ka} S :=    \e^{ -\frac{\i }{2} (\ka, X)   } \,S\,    \e^{ -\frac{\i }{2} (\ka, X)   } , \qquad  S \in \scrB(\scrH_\sys).
\eeq
Note that $\caJ_{\ka}$ is unbounded if $\ka \notin \bbR^d$.
We choose an initial state $\rho \in \scrB_1(\scrH_\sys)$  satisfying 
\beq \label{eq: conditions initial state}
 \rho > 0, \qquad   \Tr [\rho]=1 \qquad    \norm \caJ_{\ka} \rho \norm_2 < \infty,
\eeq
for $\ka  $ in some open neighborhood of $0 \in \bbC^d$.

Our first result says that the momentum distribution of the particle tends to a stationary distribution exponentially fast.

\bet\label{thm: stationary}\emph{[Equipartition Theorem]}
Suppose that Assumption \ref{ass: analytic dispersion} (see Section \ref{sec: particle}) and Assumption \ref{ass: exponential decay} (see (\ref{eq: ass decay})) hold, and let $\rho$ satisfy condition \eqref{eq: conditions initial state}. There are positive constants $\la_0>0$ and $ g>0$  such that for $0 <\str\la\str \leq \la_0$,  there is a function $\zeta^0_\la \in L^2(\tor)$ satisfying
\beq
\Tr [ \theta(P)  \caZ_{t}^\la (\rho) ]    =     \langle  \theta, \zeta^0_\la  \rangle_{L^2(\bbT^d)} + O(\norm \theta \norm_2  \e^{-\la^2 g t}) , \qquad  \textrm{as} \,  t \nearrow \infty, \, \qquad  \textrm{for any} \,  \theta=\overline{\theta} \in L^{\infty}(\tor),
\eeq
and
\beq
 \zeta^0_\la(k) =    \frac{\e^{-\beta \systemdispersion(k) }}{  \int_{\bbT^d} \d k \,  \e^{-\be \systemdispersion(k)} }+ O(\la^{2}), \qquad   \la \searrow 0.
\eeq
\eet
The decay rate $\la^2 g$  is strictly smaller than $g_\res$, introduced in \eqref{eq: ass decay}.

 Define  a probability density  $\mu_t^\la$ depending on the initial state $\rho \in \scrB_1(\scrH_\sys)$ by
 \beq \label{def: density}
 \mu_t^\la(x) :=     \Tr \left[   1_x   \caZ^\la_t (\rho)   \right].
 \eeq
 It  is easy to see that
 \beq
 \mu_t^\la(x) \geq 0, \qquad    \sum_{x \in \lat} \mu_t^\la(x) =  \Tr [\rho]= 1 .\eeq

We claim that the particle exhibits a diffusive motion. This is the content of the next result.
 \begin{theorem}\label{thm: diffusion}\emph{[Diffusion]}
 Under  the same assumptions as in Theorem \ref{thm: stationary}, the following holds.  Let the initial state $\rho$ satisfy  condition \eqref{eq: conditions initial state} and let $\mu_t^\la$ be as defined in \eqref{def: density}. There is a  positive constant $\la_0$ such that, for $0 <\str\la\str \leq \la_0$,
 \beq \label{eq: diffusion}
      \sum_{x \in \lat}   \mu_t^\la(x)  \e^{- 
     \frac{ \i }{\sqrt{t}} (q, x) }  \,    \mathop{\longrightarrow}\limits_{t \nearrow \infty} \,   \e^{-\frac{1}{2}(q,D_\la q)}, \qquad  q \in \bbR^d
 \eeq
 where the \emph{diffusion matrix} $D_\la$ is  positive-definite (i.e., has strictly positive eigenvalues), and 
 \beq
 D_\la = \lakl \left( D_{\mathrm{kin}} + O(\la^2) \right),  \qquad  \la \searrow 0,
 \eeq
 with $D_{\mathrm{kin}}$ a $\la$-independent  positive-definite matrix  introduced in Section \ref{sec: kinetic limit}.
  \end{theorem}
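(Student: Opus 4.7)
The plan is to translate the convergence statement \eqref{eq: diffusion} into a spectral problem for a one-parameter family of operators on $L^2(\tor)$ and then to analyze it by analytic perturbation around the stationary input furnished by Theorem \ref{thm: stationary}. First, using cyclicity of the trace, I rewrite
\[
\sum_{x\in\lat}\mu_t^\la(x)\,\e^{-\frac{\i}{\sqrt{t}}(q,x)} = \Tr\bigl[\caJ_{q/\sqrt{t}}\,\caZ_t^\la(\rho)\bigr].
\]
The translation invariance $\caT_z(H_\la)=H_\la$ implies, in the momentum-kernel representation of operators on $\scrH_\sys$, that $\caZ_t^\la$ preserves the off-diagonal Fourier variable $\kappa$. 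Thus $\caZ_t^\la$ decomposes as a direct integral $\int^{\oplus}_{\tor}\caZ_t^\la(\kappa)\,\d\kappa$ of operators on $L^2(\tor)$, and the characteristic function reduces to a pairing $\langle \mathbf{1},\caZ_t^\la(\kappa)\hat\rho_\kappa\rangle$ evaluated at $\kappa=q/\sqrt{t}$. Assumption \ref{ass: exponential decay} combined with the Dyson series \eqref{eq: dyson1} will allow this family to be continued analytically in $\kappa$ to a complex neighborhood of $0$, with bounds uniform in $t$.

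Next, I resum the Dyson series into a polymer expansion (the content of the forthcoming Theorems \ref{thm: polymer model} and \ref{thm: main technical improved}), in which the building blocks are \emph{irreducible} cluster operators supported on short time intervals whose activities inherit exponential decay from Assumption \ref{ass: exponential decay}. This should express $\caZ_t^\la(\kappa)$ as $\e^{t\caL^\la(\kappa)}$ up to a remainder of size $O(\e^{-\la^2 g t})$, where $\caL^\la(\kappa)$ is an effective (almost Markovian) generator, analytic in $\kappa$ in a neighborhood of $0$ whose size is independent of $t$. Theorem \ref{thm: stationary} supplies $0$ as a simple, isolated eigenvalue of $\caL^\la(0)$ with eigenvector $\zeta^0_\la$ and spectral gap of order $\la^2$; analytic perturbation theory then extends this eigenvalue to an analytic curve $e_\la(\kappa)$ with $e_\la(0)=0$. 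The inversion symmetry $\systemdispersion(-k)=\systemdispersion(k)$ (Assumption \ref{ass: analytic dispersion}) together with the time-reversal symmetry of $\hat\psi$ forces $\nabla_\kappa e_\la(0)=0$, so
\[
e_\la(\kappa) = -\tfrac{1}{2}(\kappa,D_\la\kappa) + O(\str\kappa\str^3), \qquad D_\la := -\mathrm{Hess}_\kappa\, e_\la(0).
\]
Substituting $\kappa=q/\sqrt{t}$ yields $t\,e_\la(q/\sqrt{t}) \to -\tfrac{1}{2}(q,D_\la q)$, while the prefactor coming from the spectral projection onto $\zeta^0_\la$ converges to $1$ (since $\hat\rho_0=\rho$ and $\Tr\rho=1$) and the $O(\e^{-\la^2 g t})$ remainder vanishes, yielding \eqref{eq: diffusion}.

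The kinetic identification $\la^2 D_\la \to D_{\mathrm{kin}}$ as $\la\searrow 0$ will come from matching the leading-order-in-$\la$ part of $\la^2\caL^\la(\kappa)$ with the linear Boltzmann generator of Section \ref{sec: kinetic limit}, whose Hessian at $\kappa=0$ is precisely $D_{\mathrm{kin}}$ (a Green--Kubo formula for the Boltzmann process). Positive-definiteness of $D_{\mathrm{kin}}$ will rest on the non-vanishing of $(\upsilon,\nabla\systemdispersion)$ in Assumption \ref{ass: analytic dispersion}, and positive-definiteness of $D_\la$ itself then follows for $\la$ small by perturbation. I expect the main obstacle to be the polymer analysis itself: one needs \emph{uniform} analytic control of $\caZ_t^\la(\kappa)$ for arbitrarily large $t$ and for $\kappa$ in a complex disc around $0$ whose radius is independent of $t$, so that the Hessian of $e_\la$ can be extracted by Cauchy estimates. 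In particular, the $\e^{-\la^2 g t}$ decay of non-principal spectral contributions must survive the analytic continuation in $\kappa$, which is precisely the kind of estimate that the polymer activities must deliver.
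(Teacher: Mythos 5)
Your proposal follows essentially the same route as the paper: fiber decomposition via translation invariance, polymer resummation of the Dyson series, analytic perturbation theory to isolate a simple leading eigenvalue and rank-one spectral projector in the low-momentum fibers, identification of $D_\la$ with the Hessian of that eigenvalue at zero momentum, and comparison to the Boltzmann generator $M^\ka$ of Section \ref{sec: kinetic limit} for the $\la\searrow 0$ asymptotics and the positive-definiteness of $D_{\mathrm{kin}}$. Two small corrections of emphasis. First, the logical dependency you invoke is backwards: Theorem \ref{thm: stationary} does not \emph{supply} the spectral gap, it is itself a corollary of the spectral statement (Theorem \ref{thm: main technical improved}), which is what directly furnishes the simple isolated eigenvalue $f(\la,\ka)$, the rank-one projector $P^{\la,\ka}$, and the uniform $\e^{-\la^2 g t}$ remainder. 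Second, to obtain the $\la\searrow 0$ identification with $M^\ka$, the perturbation must be set up on the kinetic scale, i.e.\ in the rescaled fiber variable $p=\la^2\ka$ with $\ka$ ranging over a $\la$-independent neighborhood of $0$; in your un-rescaled variable the radius of the good disc shrinks like $\la^2$ and the naive $\la=0$ limit of the eigenvalue curve is degenerate (the diffusion constant diverges as $\la^{-2}$), which is exactly why the paper perturbs around $\la^2 M^\ka$ rather than around the free dynamics.
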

We refer to Section \ref{sec: intro} for an explanation of  the connection between  this result and diffusion in the physicists'  sense. We close this section with some remarks concerning possible extensions of our results.
\begin{remark}\label{rem: convergence of moments}
Our proof of Theorem \ref{thm: diffusion} actually gives a stronger result. Assume the $n$th moments of the initial distribution  are bounded, or, equivalently,  
\beq \label{derivatives of characteristic function}
 q \mapsto  \sum_{x \in \lat}   \mu_0^\la(x)  \e^{- 
      \i (q, x) }    \quad  \textrm{is $n$ times differentiable.} 
\eeq
Then the rescaled $n$th moments  converge to the $n$th moments of the limiting distribution, or equivalently, the derivatives of $n$th order of  
\beq
 q \mapsto  \sum_{x \in \lat}   \mu_t^\la(x)  \e^{- \frac{\i}{\sqrt{t}}  (q, x) } 
\eeq
converge, as $t \nearrow \infty$, to the derivatives of $ \e^{-(q,D_\la q)}$.
For $n=2$, this implies \eqref{eq: convergence of second moment}.
Note that the condition \eqref{derivatives of characteristic function} is a weaker assumption than  \eqref{eq: conditions initial state}; in fact, \eqref{eq: conditions initial state} implies that \eqref{derivatives of characteristic function} is a real-analytic function.
\end{remark}

\begin{remark}
By the same technique as employed in our proofs, one can  show that correlations decay rapidly in time. As explained in the introduction, this rapid decay provides an intuitive explanation why the particle motion is diffusive.

Define the particle velocity operator by
\beq
V(t):= \i \e^{\i t H_\la} [H_\la, X]   \e^{-\i t H_\la} 
\eeq
and observe that 
\beq
V(0)= \i [H_\la, X] = \i  [H_\sys,X] =  (\nabla \systemdispersion)(P).
\eeq
Suppose that Assumptions \ref{ass: analytic dispersion} and \ref{ass: exponential decay} hold and let $\rho=\rho_\sys $ satisfy condition \eqref{eq: conditions initial state}. 

By  reasoning similar to that in Lemma \ref{lem: definition dynamics}, one can define the velocity-velocity correlation function $\rho_{\sys\res} \left[ V(t_1) V(t_2) \right] $.
Let the coupling strength $\la$ and the positive constant $g$ be as in Theorem \ref{thm: stationary}. Then, for all  $0 \leq t_1, t_2 < \infty$,
\beq\label{eq: decay of correlations}
 \big\str \rho_{\sys\res} \left[ V(t_1) V(t_2) \right]  \big\str \leq  c\,  \e^{- \la^2 g \str t_2-t_1 \str}, \qquad  \textrm{for some} \, c <\infty .
\eeq
\end{remark}

\begin{remark}
The condition that the particle dispersion satisfies  $ \systemdispersion(k)=\systemdispersion(-k)$ is not really necessary for our results to hold. If one did not impose this condition, the particle could have a drift velocity $v_{\mathrm{dr}}$ given by
\beq
v_{\mathrm{dr}} :=   \langle \nabla \systemdispersion, \zeta^0_\la  \rangle,
\eeq
and the particle motion  would still be diffusive, but one would now consider the "random variable" $ \frac{1  }{ \sqrt{t}}(x_t- v_{\mathrm{dr}}t)$, instead of  $ \frac{ 1 }{ \sqrt{t}}x_t$. In other words, in  \eqref{eq: diffusion}, one would have to replace 
\beq
   \sum_{x \in \lat}   \mu_t^\la(x)  \e^{- 
     \frac{ \i }{\sqrt{t}} (q, x)}    \qquad  \textrm{by}  \qquad      \sum_{x \in \lat}   \mu_t^\la(x)  \e^{- 
     \frac{ \i }{\sqrt{t}} (q, (x- v_{\mathrm{dr}} t )) } .
\eeq
Similarly, in eq.\  \eqref{eq: decay of correlations}, one would have to replace $
  V(t) $ by  $    V(t)- v_{\mathrm{dr}} $.
\end{remark}

\section{Discussion and outline of the proof} \label{sec: discussion}

\subsection{Translation invariance}\label{sec: fiber decomposition}

Consider the space of Hilbert-Schmidt operators $\scrB_2(\scrH_\sys)\sim \scrB_2(l^2(\lat))  \sim L^2(\bbT^d \times \bbT^d, \d k_1\d k_2)$, and define 
\beq
    \hat S     (k_1,k_2) :=\frac{1}{(2\pi)^{d}} \sum_{x_1,x_2 \in \lat}    S(x_1,x_2)  \e^{- \i (x_1, k_1)+\i (x_2 ,k_2) }      , \qquad  S \in     \scrB_2(l^2(\lat)).
\eeq
In what follows, we simply write $S$ for $\hat S$.
To deal conveniently with the translation invariance in our model, we make the change of variables 
\beq
k= \frac{k_1+k_2}{2}, \qquad  p=k_1-k_2,
\eeq
and,   for a.e.\ $p \in \tor$, we obtain a well-defined function $ S_p \in L^2(\bbT^d)$ by putting
\beq \label{def: Sfiber}
(S_p)(k) := S (k+\frac{p}{2},k-\frac{p}{2}).
\eeq
This follows from the fact that the Hilbert space  $\scrB_2(\scrH_\sys) \sim L^2(\bbT^d \times \bbT^d, \d k_1\d k_2)$ can be represented as a direct integral
\beq \label{def: fiber decomposition}
\scrB_2(\scrH_\sys) = \int_{\oplus \bbT^d} \d p \,    \scrH^p , \qquad     S =  \int_{\oplus \bbT^d} \d p \, S_p,
\eeq
where each `fiber space' $\scrH^p$ is naturally identified with $L^2(\bbT^d)$.  Let $\caT_z, z \in \lat$, be the lattice translation
\beq 
(\caT_z S)(x_1,x_2) : = S(x_1+z,x_2+z), \qquad  S \in \scrB(\scrH_\sys),
\eeq
or, equivalently,
\beq 
(\caT_z S)_p(k)  =  \e^{\i (p, z) }S_p, \qquad  S \in \scrB(\scrH_\sys).
\eeq
Since $H_\la$ and $\initialres$ are translation invariant, it follows that
\beq \label{translation inv of Z}
   \caT_{-z} \caZ_{t}^\la  \caT_z =  \caZ_{t}^\la.
\eeq

Let $\caW \in \scrB(\scrB_2(\scrH_\sys))$ be translation invariant in the sense of  eq.\ \eqref{translation inv of Z}, i.e., $  \caT_{-z} \caW \caT_z=\caW$.
Then it follows that, in the representation defined by \eqref{def: fiber decomposition}, $\caW$ acts diagonally in $p$, i.e.\ 
  $(\caW S)_p$ depends only on $ S_p $, and we define 
${\caW }_p$ by 
\beq
({\caW S})_p  = {\caW} _p   S_p.
\eeq
For the sake of clarity, we give  an explicit expression for ${\caW }_p$. 
Define the kernel $\caW(x,y;x',y')$ by 
\beq
  (\caW S)(x',y')=\sum_{x,y \in \lat}    \caW(x,y;x',y')   S(x,y), \qquad   x',y' \in \lat.
\eeq
Translation invariance is expressed by 
\beq
\caW(x,y;x',y') = \caW(x+z,y+z;x'+z,y'+z),   \qquad  z \in \lat,
\eeq
and, as an integral kernel,  $ {\caW}_p \in \scrB(L^2(\tor))$  is given  by
\beq
 {\caW}_p (k',k)=       \mathop{  \sum}\limits_{\left.\begin{array}{c}x,y, x',y' \in \lat  \\  x+y+x'+y'=0  \end{array}\right.} \e^{\i (k,x-y ) -\i (k' , x'-y') }    \e^{  \frac{\i}{2}(p, (x'+y')- (x+y))}   \caW(x,y;x',y').
\eeq

Next, we state an easy lemma.
\begin{lemma} \label{lemma l2-l1}
Let $S \in \scrB_1(\scrH_\sys)$. Then,  $S_p$, as defined in \eqref{def: Sfiber}, is well-defined  as a function in $L^1(\tor)$  for every $p$, and 
\beq \label{eq: trace as integral}
\Tr [  \caJ_{p} S   ] = \sum_{x \in \lat} \e^{-\i p x} S(x,x) = \langle 1, S_p \rangle.
\eeq
where $1 \in L^2(\bbT^d)  \cap L^{\infty}(\bbT^d)$ is the constant function with value $1(k)=1$.
Assume, moreover, that  there is a constant $\delta>0$ such that  
\beq
 \norm \caJ_{\ka} S \norm_2  < \infty\qquad \textrm{for} \qquad   \str \Im \ka \str <  \delta,
\eeq
then the function $p \mapsto S_p \in L^2(\tor)$ has a bounded-analytic extension to the strip $\str \Im p  \str < \delta$. 
\end{lemma}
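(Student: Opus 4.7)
The proof naturally splits into the trace-class statements (identification of $S_p$ as an $L^1$-function and the trace identity) and the analytic continuation.

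For the trace-class parts I would exploit the singular value decomposition $S = \sum_n \la_n |f_n\rangle\langle g_n|$, with $\norm{f_n}=\norm{g_n}=1$ and $\sum_n |\la_n| = \norm{S}_1$. For a single rank-one term the Fourier kernel factorises, $\widehat{|f\rangle\langle g|}(k_1,k_2) = \hat f(k_1)\overline{\hat g(k_2)}$, so the change of variables $(k_1,k_2) \mapsto (k,p)$ gives $(|f\rangle\langle g|)_p(k) = \hat f(k+p/2)\overline{\hat g(k-p/2)}$. Cauchy--Schwarz in $k$ and translation invariance of the $L^2(\bbT^d)$-norm yield $\norm{(|f\rangle\langle g|)_p}_{L^1(\bbT^d)} \leq \norm{f}\norm{g}$; summing over $n$ then gives $S_p \in L^1(\bbT^d)$ with $\norm{S_p}_{L^1} \leq \norm{S}_1$. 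For the trace identity in \eqref{eq: trace as integral}, cyclicity yields $\Tr[\caJ_p S] = \Tr[Se^{-\i(p,X)}] = \sum_x e^{-\i(p,x)}S(x,x)$, and the equality with $\langle 1, S_p\rangle$ reduces via the same SVD to the Plancherel computation $\int \hat f(k+p/2)\overline{\hat g(k-p/2)}\,\d k = \langle g, e^{-\i(p,X)}f\rangle$, with absolute convergence guaranteed by the $\scrB_1$-decomposition.

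The analytic extension rests on the kernel-level formula $(\caJ_\ka S)(x_1,x_2) = e^{-\i(\ka,\,x_1+x_2)/2}\, S(x_1,x_2)$, which together with the change of variables gives, for real $p$ and $\ka$, the fiber relation $(\caJ_\ka S)_p = S_{p+\ka}$. The strategy is to promote this to complex $\ka$ with $|\Im\ka| < \delta$ by showing that $\ka \mapsto \caJ_\ka S$ is analytic as a $\scrB_2(\scrH_\sys)$-valued function on the strip and uniformly bounded on compact subsets; composition with the fiber isomorphism $\scrB_2(\scrH_\sys) \cong \int_\oplus L^2(\bbT^d)\,\d p$ then delivers the bounded-analytic extension of $p \mapsto S_p$. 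The analyticity is obtained by approximating $S$ with finite-rank truncations $S^\La := P_\La S P_\La$, with $P_\La$ the orthogonal projection onto $\Span\{|x\rangle : x \in \La\}$ for finite $\La \subset \bbZ^d$: each $\ka \mapsto \caJ_\ka S^\La$ is manifestly entire in $\scrB_2$, and
\[
\norm{\caJ_\ka S - \caJ_\ka S^\La}_2^2 = \sum_{(x_1,x_2) \notin \La\times\La} |S(x_1,x_2)|^2\, e^{(\Im\ka,\, x_1+x_2)}
\]
vanishes as $\La \nearrow \bbZ^d$; if this convergence is uniform in $\ka$ on compact subsets of the strip, the limit inherits analyticity.

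The only genuinely delicate point is the uniform control of this tail: the hypothesis furnishes finiteness of $\norm{\caJ_{\i q} S}_2$ only for each fixed $q$ with $|q|<\delta$, not automatically a bound depending only on $|\Im\ka|$. For a compact $K$ in the strip with $\delta_K := \sup_{\ka \in K}|\Im\ka| < \delta$, I would fix $\delta'_K \in (\delta_K, \delta)$ and cover the sphere of radius $\delta'_K$ in $\bbR^d$ by a finite $\epsilon$-net $\{q_j\}$, chosen fine enough that for every $v \in \bbR^d$ some $j$ satisfies $(q_j, v) \geq \delta_K |v|$; this yields the pointwise bound $e^{(\Im\ka,\, v)} \leq \sum_j e^{(q_j, v)}$ for all $\ka \in K$, reducing the tail to a finite sum of quantities that are finite by hypothesis. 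Once this uniform domination is in hand, the passage from entirety of the truncations to analyticity of the limit is the standard weak-equals-strong-analyticity argument in the Banach space $\scrB_2$, completing the proof.
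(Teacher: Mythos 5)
Your treatment of the trace-class part is correct and is essentially what the paper has in mind: the singular-value decomposition together with Cauchy--Schwarz and Plancherel yields both the $L^1(\tor)$-membership of $S_p$ and the identity \eqref{eq: trace as integral}, which matches the paper's appeal to ``the singular-value decomposition for trace-class operators and standard properties of the Fourier transform''.

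For the second part there is, however, a genuine gap at the very last step. You correctly establish that $\ka \mapsto \caJ_\ka S$ is a bounded-analytic $\scrB_2(\scrH_\sys)$-valued map on the strip (the finite-rank truncations and the $\ep$-net dominating $\e^{(\Im\ka,v)}$ by a finite sum $\sum_j \e^{(q_j,v)}$ with $\str q_j\str<\delta$ are exactly the right devices), and you correctly record the fiber relation $(\caJ_\ka S)_p = S_{p+\ka}$. But the concluding sentence --- ``composition with the fiber isomorphism $\scrB_2(\scrH_\sys)\cong\int_\oplus L^2(\tor)\,\d p$ then delivers the bounded-analytic extension of $p\mapsto S_p$'' --- does not follow. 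The evaluation of a fiber at a single point $p_0$, $T\mapsto T_{p_0}$, is \emph{not} a bounded map from $\scrB_2(\scrH_\sys)\cong L^2(\tor\times\tor)$ to $L^2(\tor)$; an $L^2$ function of $(k,p)$ has no well-defined restriction to the hyperplane $p=p_0$. Consequently, $\scrB_2$-analyticity of $\ka\mapsto\caJ_\ka S$ says nothing, in itself, about the $L^2(\tor,\d k)$-valued map $\ka\mapsto(\caJ_\ka S)_0 = S_\ka$.

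The fix is to run the same $\ep$-net argument one level down, directly on the $L^2(\tor,\d k)$-norm of the fiber. In the variables $v=x_1+x_2$, $w=x_1-x_2$, $T(v,w):=S(x_1,x_2)$, orthogonality of the Fourier modes in $w$ gives
\beq
\norm S_p \norm^2_{L^2(\tor,\d k)} \;=\; \frac{1}{(2\pi)^d}\sum_{w}\Big\str \sum_{v} T(v,w)\,\e^{-\i (v,p)/2}\Big\str^2 .\nonumber
\eeq
Splitting the $v$-sum into the sectors determined by your net $\{q_j\}$ (with $\str q_j\str = \delta'_K\in(\delta_K,\delta)$ and the net fine enough that $(\Im p - q_j, v)\le -\alpha\str v\str$ on sector $j$ when $\str \Im p\str\le\delta_K$), applying Cauchy--Schwarz within each sector with the convergent geometric factor $\sum_{v}\e^{-\alpha\str v\str}$, and then summing over $w$, yields
\beq
\norm S_p\norm^2_{L^2(\tor,\d k)} \;\le\; C(K)\,\sum_{j}\,\norm \caJ_{\i q_j}S\norm_2^2 \;<\;\infty, \nonumber
\eeq
uniformly for $\str\Im p\str\le\delta_K$. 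Since the truncations $(\caJ_\ka S^\La)_0$ are entire in $\ka$ as $L^2(\tor)$-valued maps, and the same estimate controls the tail, uniform convergence and hence analyticity in the $L^2(\tor)$-norm follow. So the $\ep$-net is indeed the key observation; it just has to bound the fiber $L^2(\tor)$-norm rather than the $\scrB_2$-norm. (For calibration, the paper disposes of this step in one sentence as ``the well-known relation between exponential decay of functions and analyticity of their Fourier transforms''; the computation above is the vector-valued Paley--Wiener statement this alludes to.)
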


The first statement of the lemma follows from the singular-value decomposition for trace-class operators and standard properties of the Fourier transform. In fact, the correct statement asserts that one can \emph{choose} $S_p$ such that \eqref{eq: trace as integral} holds. Indeed, one can change the value of the kernel  $S(k_{1},k_{2})$ on the line $k_{1}-k_{2}=p$ without changing the operator $S$, and hence $S_p$ in \eqref{eq: trace as integral} can not be defined  via \eqref{def: Sfiber} for all $p$, but only for almost all $p$.

 The second statement of Lemma \ref{lemma l2-l1} is the well-known relation between exponential decay of functions and analyticity of their Fourier transforms.   Since we will always demand the initial density matrix $\rho_0$ to be such that $\norm \caJ_\ka \rho_0\norm_2 $ is finite for $\ka$ in a complex domain, we will mainly need the second statement of Lemma \ref{lemma l2-l1}.

\subsection{Return to equilibrium inside the fibers} \label{sec: return to eq}

The main idea of our proof is that the reduced evolution in the `low momentum fibers', $ ({\caZ_{t}^\la})_p$, for $p $ near $0$,  has an invariant state to which every well-localized initial state relaxes exponentially fast.

Recalling that $H_\sys=\systemdispersion(P)$ and that the system is weakly coupled to a heat bath at inverse temperature $\beta$, we expect that, in an appropriate sense, and for arbitrary initial states $\rho \in \scrB_1(\caH_\sys)$,
\beq \label{return1}
\caZ^\la_t (\rho)  \, `` \mathop{\longrightarrow}\limits_{t \uparrow \infty} " \,   \frac{1}{Z(\be)}\e^{-\be \systemdispersion(P)}+ o(\la^0), \qquad  \la \searrow 0.
\eeq
We observe that $\e^{-\be \systemdispersion(P)} \notin \scrB_1(\caH_\sys)$, hence \eqref{return1} cannot  hold  in norm (in other words, $Z(\be)=\infty$).
One way to interpret \eqref{return1} is that it gives the correct asymptotic expectation value of functions of the momentum, and that is exactly what Theorem \ref{thm: stationary} states. 

  For every $\rho$ satisfying \eqref{eq: conditions initial state}, we have that
\beq\label{expectation as scalar product}
\Tr [\overline{\theta}(P) \caZ_{t}^\la (\rho)]=\langle \theta,  ({\caZ_{t}^\la \rho})_0 \rangle, \qquad \theta \in L^{\infty}(\tor),
\eeq
by applying  Lemma \ref{lemma l2-l1} with $S:=\overline{\theta}(P) \caZ_{t}^\la (\rho) $.
 Hence,  we should apparently attempt  to prove `return to equilibrium' for the evolution $({ \caZ_{t}^\la})_0$ on $L^2(\tor)$.

The  dynamics in the fibers corresponding to small values of $p$  provides information on the diffusive character of the system.  The probability density $\mu^\la_t(x)$ corresponding to some initial state $\rho$ is defined as in \eqref{def: density} . By Lemma \ref{lemma l2-l1}, 
\beq\label{eq: muS}
\sum_{x \in \bbZ^d}  \mu^\la_t(x) \e^{-\i (p,x) } =  \sum_{x \in \bbZ^d}    (\caZ_t^\la \rho)(x,x)  \e^{-\i (p,x) }   =   \int_{\bbT^d} \d k    (\caZ_t^\la \rho)  (k+\frac{p}{2}, k-\frac{p}{2}) = \langle 1, ({ \caZ_t^\la}\rho)_p  \rangle.
\eeq
To establish diffusion, it suffices to show that, for $\la$ fixed and for $p$ in a neighborhood of $0 \in \bbT^d$,
\beq \label{cltfromld}
\langle 1, ({ \caZ_t^\la}\rho)_p  \rangle   =     \e^{ t (-\frac{1}{2}(p, D_\la p )+ o(p^2) )  }  (1  + o(t^0)+o(p^0)), \qquad   t \nearrow \infty, p \searrow 0.
\eeq
for some  positive-definite matrix $D_\la$.
 Indeed,  by \eqref{eq: muS}, Theorem \ref{thm: diffusion} follows from \eqref{cltfromld} by  taking $p=\frac{q}{\sqrt{t}}$.  Thus, in order to prove Theorem \ref{thm: diffusion}, we are led to study the long-time asymptotics of the evolution $({ \caZ_t^\la})_p$, for small $p$. 
 
However, as our approach is perturbative in $\la$, expression \eqref{cltfromld} is not a good starting point, since  $(p, D_\la p )= O(\lakl)$, for fixed $p$ (as can be seen from the statement of Theorem \ref{thm: diffusion}), and hence one cannot perturb around $(p, D_\la p ) \big\str_{\la=0}$.   The way out of this difficulty is to set up the perturbation on a scale where the diffusion constant is finite (this will turn out to be the kinetic scale), or, in other words, to take the $p$-neighborhood in \eqref{cltfromld}  to shrink, as $\la \searrow 0$.   Since $\la$ approaches $0$, one must wait a time of order $\lakl$, before one sees the effect of the interaction.  Since, between collisions, the velocity of the free particle is unaffected,  it travels a distance of order $\lakl$ . This means that when both space and time are  measured in units of $\lakl$;
\beq
x= \lakl \tilde x_\la,    \qquad  t= \lakl  \tilde t_\la,
\eeq
we expect a diffusion constant  $\tilde D_\la \sim \frac{ (\tilde x_\la)^2}{\tilde t_\la}  $ of order $O(1)$. This is consistent with the fact that $D_\la \sim \frac{ x^2}{t}  $ is of order $\lakl$. The limit
$ \tilde D_{\la \searrow 0}$ is the diffusion constant in the kinetic limit, as outlined in the next section.

\subsection{The kinetic limit}  \label{sec: kinetic limit}
To control the asymptotics of  the effective time-evolution $ ({\caZ_{t}^\la})_p $, we compare it with the corresponding evolution in the \emph{kinetic limit}, which is the limit approached when microscopic space and time are taken to be  
$ \lakl x, \lakl t $, respectively, and  the coupling strength $\la \rightarrow 0$;  as announced in the previous section.  It has been proven in \cite{erdos} (for models with only one thermal reservoir) that,  in this limit,  the dynamics is described by a linear Boltzmann equation. 

Our variant of this result is described below.
\subsubsection{Convergence to a linear Boltzmann equation}
The effective reservoir structure factor $\psi$ has been defined in (\ref{def: correlation function}-\ref{def: effective structure factor}).  For convenience, we  introduce a positive function $r(\cdot,\cdot)$, with
\beq
r(k,k'):=  \psi  [\systemdispersion(k') -\systemdispersion(k) ]       \geq 0.
\eeq
  For $\ka \in \bbR^d$, we define a bounded linear operator, $M^\ka$, on $L^2(\bbT^d)$  by
\beq  \label{def: operator M}
(M^\ka \theta)(k)  := \i  (\ka,\nabla \systemdispersion)(k) \theta (k)  + \int_{\tor} \d k'   \, \left[  r(k',k)    \theta(k')  -     r(k,k')  \theta (k )     \right] , \qquad   \theta \in L^2(\bbT^d),
\eeq
where $ (\ka,\nabla \systemdispersion)(k)$ stands for the scalar product  in $\bbC^d$ of  $\ka $ and $\nabla \systemdispersion(k)$.
The operator $M^\ka$ has a straightforward interpretation:  Consider a classical particle whose states are specified by a position $x \in \bbR^d$ and a 'momentum' $k \in \bbT^d$.
The momentum $k$ evolves according to a Poisson process with a rate $ r(k,k')$ for the transition from state $k$ to $k'$. Between two momentum jumps, the particle moves freely,  with speed given by $(\nabla \systemdispersion)(k)$. 
  The translation of this picture into a mathematical statement is as follows:
The state-space distribution of  the classical particle at time $t$ is given by a probability density $\nu_t(\cdot,\cdot)$ on $\bbR^d \times \tor$; ($\nu(x,k) \geq 0$ and $\int \d x\d k \,  \nu_t(x,k)=1$).  Then
\beq \label{eq: evolution eq nu} 
 \frac{\partial}{\partial t}   \nu_t(x,k)    =  ( \nabla_k  \systemdispersion , \nabla_x \nu_t) (x,k)    +  \int_{\tor} \d k'   \, \left[  r(k',k)    \nu_t(x,k')  -     r(k,k')  \nu_t(x,k )     \right].
  \eeq 
One checks that
  \beq
  \hat\nu_t^\ka(k):= (2\pi)^{-d/2} \int_{\bbR^d} \d x \,   \e^{-\i (\ka, x)}     \nu_t(x,k) 
  \eeq
satisfies an evolution equation generated by $M^\ka$; \beq \label{eq: evolution eq nu2}   \frac{\partial}{\partial t}  \hat\nu_t^\ka= M^\ka \hat\nu_t^\ka.   \eeq 
We claim that the rates $r(k,k')$ satisfy the identity
\beq
r(k,k') = r(k',k)\e^{-\be( \systemdispersion(k')-\systemdispersion(k))},
\eeq
known as  the \emph{detailed balance condition} in the context of Markov processes.  
It is a direct consequence of the KMS-condition for the reservoirs. In our context, it is easily derived from  \eqref{def: thermal state canonical}.
  The detailed balance condition implies that 
 \beq \label{def: steady state}
 M^0 \zeta_{\mathrm{kin}}^{0} =0,      \qquad  \textrm{where} \quad  \zeta_{\mathrm{kin}}^{0} (k)    =     \frac{ \e^{- \be \systemdispersion(k)} }{\int_{\tor} \d k \e^{-\be \systemdispersion(k)}}.
\eeq
 In the language of Markov processes,  $ \zeta_{\mathrm{kin}}^{0} $ is a stationary state.

The relevance of $M^\ka$ is that it describes the evolution $\caZ_{\lakl t}^\la$ in the fiber indexed by $\la^2 \ka$ in the limit $\la \searrow 0$. Moreover, the convergence of the fiber dynamics $(\caZ_{\lakl t}^\la)_{\la^2 \ka}$ holds even after analytic continuation to complex $\ka$. One can prove the following result
\begin{proposition}\label{thm: wc limit small fibers}
Assume  Assumptions \ref{ass: analytic dispersion}  and \ref{ass: exponential decay}. Then, for $\str\Im\ka\str$ sufficiently small and $    0<t<\infty $,
\beq    \left\norm     ({ \caZ_{ \lakl t}^{\la} })_{\la^2\ka} -  \e^{tM^{\ka} } \right\norm \,\mathop{\longrightarrow}\limits_{\la \searrow 0} \, 0.    \eeq
where the norm is the operator norm on $L^2(\tor)$.
\end{proposition}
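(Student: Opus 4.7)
The plan is to start from the Dyson expansion (\ref{eq: dyson1}) for $\caZ^{\la}_t$, perform the Gaussian reservoir contraction via the quasi-freeness identities (\ref{eq: gaussian property1}-\ref{eq: gaussian property2}), and then show that, under the kinetic rescaling $t\mapsto\la^{-2}t$ in the fiber $p=\la^2\ka$, only the \emph{ladder} (non-crossing nearest-neighbor) pairings of the interaction vertices survive, and that their resummation yields the semigroup $\e^{tM^\ka}$.

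First, I would substitute $H_{\sys\res}(s)=\sum_x 1_x(s)\otimes \Psi_x(\e^{\i s\bosondispersion}\phi)$ into (\ref{eq: dyson1}) and apply the quasi-free state $\initialres$; because the reservoirs are independent at different sites and quasi-free, one gets a sum over pairings $\pi$ of $\{1,\dots,2n\}$ with a constraint $x_r=x_s$ for paired indices $(r,s)$, together with a factor $\hat\psi(t_r-t_s)$ given by (\ref{def: correlation function}). Passing to the fiber $p=\la^2\ka$ via (\ref{def: Sfiber}) and changing to momentum variables, each pairing becomes an iterated integral over momenta $k_0,\ldots,k_n$ and times $0\le t_1\le\cdots\le t_n\le\la^{-2}t$, with phase factors $\e^{\i s_j (\systemdispersion(k_j)-\systemdispersion(k_{j-1})+ \la^2(\ka,\nabla\systemdispersion)(\text{avg}) + O(\la^4))}$ coming from the free system evolution and from the $\la^2\ka$-shift inside the fiber. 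After the time change $s_j=t_j-t_{j-1}\in[0,\la^{-2}t]$, the leading $\ka$-phase becomes $\e^{\i\la^2\ka\cdot(\nabla\systemdispersion)(k)\la^{-2}t}=\e^{\i t(\ka,\nabla\systemdispersion)(k)}$, which matches the transport term in (\ref{def: operator M}).

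Next, I would separate pairings into \emph{immediate} (ladder) pairings, in which $(r,s)$ always satisfies $s=r+1$, and the remainder. For ladder pairings, the reservoir two-point function $\hat\psi(s_r)$ is integrable in the gap time $s_r$ by Assumption \ref{ass: exponential decay}, and the Fermi-golden-rule calculation (using (\ref{def: effective structure factor})) produces the collision kernel $r(k,k')=\psi[\systemdispersion(k')-\systemdispersion(k)]$ in (\ref{def: operator M}). Resumming over the number $n$ of ladder rungs and performing the time-ordered integral yields exactly the Dyson expansion of $\e^{tM^\ka}$. For non-ladder pairings one uses the standard Erdős--Salmhofer--Yau type bound: each crossing or nested pair costs an extra factor $\la^2$ from the phase oscillation integral (or equivalently loses an oscillatory momentum integration), so the total contribution of such terms is $O(\la^2)$ uniformly in $n$ after combinatorial resummation, provided the Dyson series converges absolutely in the regime considered.

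The main technical obstacle is this combinatorial/resummation step: one must bound the sum over \emph{all} pairings at all orders $n$ uniformly on $[0,t]$, which requires the reservoir decay $\hat\psi(s)\lesssim\e^{-g_\res|s|}$ from Assumption \ref{ass: exponential decay} to dominate the factorial growth $n!/\Bigl(\begin{smallmatrix}n\\ \text{pairs}\end{smallmatrix}\Bigr)$ arising from ordered integrations. The analytic continuation to complex $\ka$ with $|\Im\ka|$ small is comparatively easy: the analyticity strip of $\systemdispersion$ (Assumption \ref{ass: analytic dispersion}) controls $\e^{\i t(\ka,\nabla\systemdispersion)(k)}$ uniformly in $k\in\tor$, so all the momentum integrals retain their bounds after replacing $\ka\in\bbR^d$ by $\ka\in\bbR^d+\i[-\delta,\delta]^d$ with $\delta$ small enough relative to $\delta_\systemdispersion$ and $g_\res$. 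I would therefore expect the proof to reduce, after these manipulations, to a statement of the form already developed in \cite{erdos} and adapted to the present lattice/multi-reservoir setting as in the techniques of \cite{deroeck}; the bulk of the work lies in a careful diagrammatic/combinatorial estimate controlling non-ladder pairings uniformly for $t$ in compact intervals.
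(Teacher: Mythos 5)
The paper does not give a proof of this Proposition: the remark after it states that the proof ``is based on the same reasoning as in Section \ref{sec: proof of main technical}'', i.e.\ on the polymer/resolvent formula of Theorem \ref{thm: polymer model} together with an inverse Laplace transform, rather than on a direct time-domain resummation as you propose. Your route is legitimate in spirit, but the decisive estimate in your outline is wrong for this model, and that is the point you should fix.

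You claim that the non-ladder (crossing/nested) pairings are controlled by ``the standard Erd\H{o}s--Salmhofer--Yau type bound: each crossing or nested pair costs an extra factor $\la^2$ from the phase oscillation integral (or equivalently loses an oscillatory momentum integration).'' In the present model there is no oscillatory reservoir momentum integral to exploit: the reservoirs are \emph{site-local}, so the two-point function carries a Kronecker delta $\delta_{x_r,x_s}$ and the whole weight of a pair $(r,s)$ is the purely temporal factor $\hat\psi(t_s-t_r)$, cf.\ \eqref{def: zeta}. The gain $\la^{2(n-1)}$ for an irreducible component of size $n\geq 2$ does not come from dispersive decay of lattice propagators -- it comes from simple power-counting: an irreducible block of size $n$ carries a prefactor $\la^{2n}$ while its time integral is $O(1)$ uniformly, because the exponential decay $\hat\psi(s)\lesssim\e^{-g_\res|s|}$ of Assumption \ref{ass: exponential decay} forces the paired times to be close and the irreducibility keeps the whole block within an $O(n)$ (microscopic) time window. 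This is exactly the content of Lemma \ref{lem: apriori estimate} (the a priori bound, with $c_\systemdispersion(\ga_1)$ chosen small compared to $g_\res$) and of the combinatorial Lemma \ref{lem: combinatorics} (summation over irreducible pairings), which yields $\|\caR_\la^{\mathrm{ex}}(z)\|=O(\la^4)$ uniformly in $\Re z>-g'$. The paper stresses this simplification in Section \ref{sec: intro}: the array of independent reservoirs produces the $\delta_{x,x'}$ and thereby replaces dispersive/phase-oscillation arguments by a much more elementary time-decay bound. If you instead tried to run a genuine phase-oscillation argument on the lattice free propagator, you would hit the usual difficulty that the discrete dispersive decay is weak and not uniform; fortunately you do not need it here.

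A secondary comment on organization: the cleanest way to resum the ladders \emph{after} this bound is the Laplace-transform route of Theorem \ref{thm: polymer model}, which gives $(\caR_\la(z))_{\la^2\ka}=(z-\ad(\i H_\sys)_{\la^2\ka}-\la^2(\caL(z))_{\la^2\ka}-(\caR_\la^{\mathrm{ex}}(z))_{\la^2\ka})^{-1}$, and then the rescaling $z\to\la^2 z$, $t\to\lakl t$ makes the convergence $\la^{-2}(\ad(\i H_\sys))_{\la^2\ka}+(\caL(\la^2 z))_{\la^2\ka}\to M^\ka$ (Statement~3 of Theorem \ref{thm: polymer model}) and the smallness of $\caR_\la^{\mathrm{ex}}$ manifest; the Proposition then follows by the contour-deformation argument of Section \ref{sec: proof of main technical}. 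Your plan to perform the ladder resummation directly in the time domain over $[0,\lakl t]$ can also be carried out, but you will still need Lemma \ref{lem: combinatorics} (or an equivalent) to control the sum over all pairings uniformly in the order $n$, and you should replace the phase-oscillation claim by the time-decay power-counting described above. The analyticity in $\ka$ you describe is handled correctly and matches Statement~1 of Lemma \ref{lem: apriori estimate}.
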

We do not prove this Proposition (which is not needed for the proof of our results). In fact, the proof is based on  the same reasoning as in Section \ref{sec: proof of main technical}.
Of course, one can also express Proposition \ref{thm: wc limit small fibers} in terms  of the rescaled Wigner function, as is done in  \cite{erdos,erdosyauboltzmann}.   
Indeed, setting
\beq \label{convergence wigner}
\hat\al_t^\ka(k) := \lim_{\la \searrow 0}  \left(\caZ^\la_{\lakl t}\rho\right)( k+\la^2 \frac{\ka}{2}, k-\la^2 \frac{\ka}{2}  )=    \lim_{\la \searrow 0}  ({\caZ^\la_{\lakl t}\rho})_{\la^2\ka}(k),
\eeq
one obtains from Proposition \ref{thm: wc limit small fibers} that $\hat\al_t^\ka(k)$ satisfies the evolution equation   \eqref{eq: evolution eq nu2}. (It would thus be justified to call $\hat\al_t^\ka(k)$ simply $\hat\nu_t^\ka(k)$). Its inverse Fourier transform 
\beq
\al_t(x,k) =  (2\pi)^{-d/2}  \int_{\bbR^d} \d \ka \,   \e^{\i (\ka, x) }  \hat\al_t^\ka(k)
\eeq
 is a probability density on $\bbR^d \times \tor$ and satisfies  \eqref{eq: evolution eq nu} with initial condition $\al_0(x,k) = \delta(x) \rho(k,k)$.
 
We state another useful consequence of Proposition \ref{thm: wc limit small fibers}. Recall that the probability density $\mu_t(\cdot)$ has been defined in \eqref{def: density}, for any initial state $\rho$.  
Taking the scalar product with $1 \in L^2(\bbT^d)$ on both  sides of \eqref{convergence wigner} and using \eqref{eq: muS}, we obtain that
\beq \label{eq: kinetic limit for mu}
   \sum_{x \in \lat}   \e^{-\i \la^2 (\ka, x) }\mu_{\lakl t}^\la(x)     \,\mathop{\longrightarrow}\limits_{\la \searrow 0}    \mathop{\int}\limits_{\tor} \d k \,  \hat\al^\ka_t(k).
\eeq
As outlined in Section \ref{sec: return to eq}, the $t \nearrow \infty$ asymptotics of  the LHS of \eqref{eq: kinetic limit for mu} contains information on the diffusive behavior of the particle. In the next section  we discuss the $t \nearrow \infty$ asymptotics of the RHS of \eqref{eq: kinetic limit for mu}.

\subsubsection{Diffusive behavior of solutions of the Boltzmann equation}

To realize that the Boltzmann equation describes diffusion, one studies the spectral properties of $M^\ka$, for small $\ka$. 
We state a crucial result, Theorem \ref{thm: main technical wc},  and we refer the reader  to \cite{clarkderoeckmaes} for complete proofs and a more extended  discussion of quantum dissipative evolutions.
\bet \label{thm: main technical wc}
Suppose that Assumptions \ref{ass: analytic dispersion} and \ref{ass: exponential decay} hold, and let $M^\ka \in \scrB(L^2(\tor))$ be defined as in \eqref{def: operator M}.  

Then there is a  positive constant $\delta_{\mathrm{kin}}$ such that the operator $ M^{ \ka} $, with $\str \ka \str \leq \delta_{\mathrm{kin}}$,   has a simple eigenvalue, $  f_{\mathrm{kin}} ( \ka) $, separated from the rest of the spectrum by a gap, 
\beq
\mathrm{dist}( \Re  f_{\mathrm{kin}} ( \ka) ,  \Re \Omega)  =:  g_{\mathrm{kin}}>0     \label{def: omega and gap}
\eeq
where 
\beq
  \Om :=   \cup_{\str\ka\str <  \delta_{\mathrm{kin}}  } \left(\sp M^\ka \setminus \{ f_{\mathrm{kin}}(\ka)\} \right)   ,\qquad  \textrm{and} \, \Re \Om <0. 
\eeq
  The eigenvalue $ f_{\mathrm{kin}} $ and its associated  eigenvector $ \zeta_{\mathrm{kin}}^{\ka}$  and spectral projection $P_{\mathrm{kin}}^{\ka}$ are analytic in $\ka$  and
 \beq \label{Dkin as hessian}
 f_{\mathrm{kin}}( \ka) =-  \big\langle 1,  (\ka,\nabla\systemdispersion) \big(M^0\big)^{-1}(\ka, \nabla\systemdispersion) \zeta_{\mathrm{kin}}^{0} \big\rangle  +O(\ka^3), \qquad   \ka \searrow 0,
\eeq
where  $\nabla\systemdispersion$  denotes the operator that acts by multiplication with the function $\nabla\systemdispersion$ on $\tor$.   
The diffusion matrix, $D_{\mathrm{kin}}$,   defined by
\beq \label{def: Dkin}
(D_{\mathrm{kin}})^{i,j} :=  -     \frac{\partial^2}{\partial \ka^i \partial \ka^j} f_{\mathrm{kin}}(\ka)\big\str_{\ka=0}, \qquad i,j =1,\ldots d,
\eeq
has real entries  and is  positive-definite.
\eet

\noindent\emph{Sketch of proof.} \newline
We write  $M^0=K+T$  with
\beq
(K\theta)(k) = \int_{\tor} \d k' r(k',k) \theta(k') , \qquad
(T\theta)(k) = -\left(\int_{\tor} \d k' r(k,k')\right) \theta(k), \qquad  \theta \in L^2(\tor,\d k).
\eeq
Notice that $T$ is a multiplication  operator with spectrum 
\beq
\sp T = \left\{ - \int_{\tor} \d k' r(k,k') \, \big\str  \,  k \in \tor  \right\}, \qquad    r(k,k')\geq 0.
\eeq
The operators $K$ and $T$ are sometimes  referred to as the \emph{gain} and \emph{loss} terms in the Boltzmann  equation.
Assumptions  \ref{ass: analytic dispersion} and \ref{ass: exponential decay} imply that the functions $\psi$ and $\systemdispersion$  are  real-analytic in $k$,  and hence  $r(\cdot,\cdot)$ is  real-analytic in both variables.  It follows that $K$ is a compact operator on $L^2(\tor)$  and, since we assumed that $\psi \not\equiv 0$,  we have that
$
 \sup \Re \sp T < 0
$.
  By  Weyl's theorem  on the stability of the essential spectrum (see e.g.\ p.\ 101 of \cite{reedsimon4}), we deduce that the spectrum of $M^0$ in the region $\Re z > \sup \Re \sp T$ consists of isolated eigenvalues of finite multiplicity. 
  From the pointwise positivity of $r(\cdot,\cdot)$, the Perron-Frobenius theorem and from the fact that $M^0$ generates a contractive semigroup on $L^1(\tor)$ we then conclude that the eigenvalue $0$ of $M^0$ is simple and that it is separated by a gap from  the rest of the spectrum.   
  The spectral projection $P_{\mathrm{kin}}^0$ is explicitly given by
\beq \label{explicit projection}
P^0_{\mathrm{kin}} \theta= \langle 1, \theta \rangle   \zeta_{\mathrm{kin}}^{0}, \qquad    \theta \in   L^2(\tor)
\eeq
with $\zeta_{\mathrm{kin}}^{0}$ as in \eqref{def: steady state}. 
The analyticity of $f_{\mathrm{kin}}(\ka)$ and $\zeta_{\mathrm{kin}}^{\ka}$ is proven with the help of analytic perturbation theory.  
Using the assumption that $\systemdispersion(k)=\systemdispersion(-k)$, we check that
\beq \label{vanishing first order}
P^0_{\mathrm{kin}}   \nabla\systemdispersion P^0_{\mathrm{kin}}=0.
\eeq
Employing explicit expressions of second order perturbation theory, we obtain  formula \eqref{Dkin as hessian} as a consequence of the fact that $M^\ka-M^0=\i (\ka,\nabla\systemdispersion) $ and \eqref{vanishing first order}. 

Since $\overline{f_{\mathrm{kin}}(\ka)}=f_{\mathrm{kin}}(-\overline{\ka})$, it follows that the matrix $D_{\mathrm{kin}}$ has real entries.   The positive-definiteness of $D_{\mathrm{kin}}$ is established as follows. 
Consider the bounded operator 
\beq
(W \theta)(k)= \e^{\frac{1}{2}\be \systemdispersion(k)}\theta(k), \qquad   \theta \in L^2(\tor),
\eeq
and notice that $ \tilde M:=
W^{-1}M^0 W $ is a self-adjoint operator on $L^2(\tor)$, in particular $ \tilde \zeta:= W \zeta_{\mathrm{kin}}^0= W^{-1} 1$, (i.e.,  the left and right eigenvector corresponding to the eigenvalue $0$ are identical). For $\ka \in \bbR^d$, we can rewrite  \eqref{def: Dkin} as
\beq \label{eq: sa rep of diff const}
( \ka,D_{\mathrm{kin}}\ka) = \big\langle  (\ka,\nabla \systemdispersion)  \tilde \zeta,    \big(\tilde M\big)^{-1}       (\ka,\nabla \systemdispersion) \tilde \zeta \big\rangle.
\eeq
By Assumption \ref{ass: analytic dispersion},  the function $k \mapsto (\ka, \nabla \systemdispersion(k))$ does not vanish identically on $\tor$ (for $\ka \neq 0$).  Hence, by the spectral theorem applied to $\tilde M$,  expression \eqref{eq: sa rep of diff const} is strictly positive.
\qed
\vspace{0.5cm}

Let $\hat\nu_t^{\ka}(k)$ be a solution of the evolution equation  \eqref{eq: evolution eq nu2} for $\ka$ in some neighborhood of $0$ in $\bbC^d$. Using Theorem \ref{thm: main technical wc} and  reasoning similar to that in Section \ref{sec: return to eq}, it follows that    
\beq \label{eq: diffusion for nu}
 \int_{\tor} \d k \,  \hat\nu_t^{\ka}(k)  \quad  \mathop{\longrightarrow}\limits_{\ka=\frac{ q}{\sqrt{t}},\, t \nearrow \infty}^{}  \quad  \e^{-\frac{1}{2} (q, D_{\mathrm{kin}} q)}, \qquad  q \in \bbR^d.
\eeq
Hence  a solution $\nu_t(x,k)$ of the Boltzmann equation \eqref{eq: evolution eq nu}  behaves diffusively, with diffusion tensor $D_{\mathrm{kin}}$.

\subsection{Perturbation around the kinetic limit}\label{sec: perturbation around kinetic}

Up to now, we have seen that, in  the kinetic limit, the particle motion is described by a linear Boltzmann equation. Since solutions of the linear Boltzmann equation behave diffusively for large times (as is essentially stated in  Theorem \ref{thm: main technical wc}), we can associate a diffusion constant to our model. Indeed, by \eqref{eq: kinetic limit for mu}  and \eqref{eq: diffusion for nu},
\beq \label{eq: first la then t}
\lim_{t \nearrow \infty} \lim_{\la \searrow 0}\sum_{x \in \lat} \mu_{\lakl t}^\la (x) \e^{-\i \frac{ \la^2}{\sqrt{  t}} (q,x)   }         =  \e^{- \frac{1}{2}(q, D_{\mathrm{kin}}q)}     
\eeq
However, \eqref{eq: first la then t} does not give information on the long-time asymptotics of our system for small, but fixed $\str \la\str >0$.   The least one would wish for is to be able to exchange the order of limits in \eqref{eq: first la then t},  and, indeed, Theorem \ref{thm: diffusion} states that one can do so without affecting the RHS.  We stress this point, because it is an improvement of our paper when compared to most earlier results on diffusion.

Since we have learned that  $( { \caZ_{\lakl t}^\la})_{\la^2 \ka} $ has a well-defined limit, $\e^{t M^\ka}$,  as $\la \searrow 0$, (see Proposition \ref{thm: wc limit small fibers}), it is natural to expand $ ( { \caZ_{\lakl t}^\la})_{\la^2 \ka} $ around this limit, in such a way   that we can take  $t \nearrow \infty$.    We perform the expansion on the Laplace transform of $ \caZ^\la_t$,
\beq
 \caR_\la(z) := \int_{\bbR^+} \, \d t \, \e^{-tz}  \caZ^\la_t .
\eeq
Theorem \ref{thm: polymer model} below  summarizes the result of our expansion. Loosely speaking,  a key  consequence of this theorem is the fact  that, in the fibers indexed by $\la^2 \ka$,  one has that 
\beq
 ( \caR_\la(z))_{\la^2 \ka}= (z-\la^2 M^{\ka}- A(z,\la,\ka))^{-1},
\eeq
where  the operator $ A(z,\la,\ka)    $ is  ``small" compared to  $\la^2 M^\ka$.

\bet \label{thm: polymer model}
Suppose that Assumptions \ref{ass: analytic dispersion} and \ref{ass: exponential decay} in Section \ref{sec: model} hold. Then,
   there are operators  $\caL(z)$ and $ \caR^{\mathrm{ex}}_{\la}(z)$ in $\scrB(\scrB_2(\scrH_\sys))$ such that the following statements hold.
\ben
\item{ For  $(z,\la) \in \bbC\times \bbR$ satisfying $\Re z > \norm  \la^2\caL(z) + \caR^{\mathrm{ex}}_\la(z) \norm$, 
\beq \label{eq: expansion for resolvent}
 \caR_\la(z) = ( z-  \ad(\i H_\sys) - \la^2\caL(z) - \caR^{\mathrm{ex}}_\la(z)  )^{-1}.
\eeq 
}
\item{  The operators  $\caL(z)$ and $ \caR^{\mathrm{ex}}_{\la}(z)$ have the following properties: There are positive constants $\delta'_1, \delta'_2 , g' >0$ such that 
\beq
\caJ_{\ka}\caL(z)\caJ_{-\ka} , \qquad    \caJ_{\ka}\caR_\la^{\mathrm{ex}}(z) \caJ_{-\ka}
\eeq
are analytic in the variables $(z,\ka) \in \bbC\times\bbC^d$ in the region defined by   $\str \ka \str \leq \delta'_1, \Re z > -g', \str\la\str \leq \delta'_2$. Moreover, 
\baq
 & \mathop{\sup}\limits_{ \str \ka \str \leq \delta'_1, \Re z >- g'}  \norm  \caJ_{\ka}\caL(z)\caJ_{-\ka} \norm    = O(1),  &  \qquad \la\searrow 0  \\[2mm]
 &     \mathop{\sup}\limits_{ \str \ka \str \leq \delta'_1, \Re z >- g'} \norm    \caJ_{\ka}\caR_\la^{\mathrm{ex}}(z) \caJ_{-\ka}  \norm    =O(\la^4), & \qquad \la \searrow 0   \label{eq: bound rex },
\eaq
where $\norm \cdot \norm$ refers to the operator  norm  on  $\scrB(\scrB_2(\scrH_\sys))$ (as in (\ref{def: norm on operators})).
}

\item{ Let  $M^\ka$ be  defined as in Section \ref{sec: kinetic limit}. Then
\beq
\norm \left( \ad(\i  H_\sys) + \la^2 \caL(0)   \right)_{\la^2\ka} -  \la^2 M^{\ka} \norm = O(\la^4 \ka^2)+O(\la^4\ka ),  \qquad    \la^2\ka \searrow 0  , \la   \searrow 0.
\eeq
}
\een

\eet
The proof of Theorem  \ref{thm: polymer model}  is the subject of Section \ref{sec: dyson and proof of polymer}. From that proof, it becomes clear that  $g'$ can be chosen to be any fraction of $g_{\res}$ by
 making $\delta'_1$ and $\delta'_2$ small enough.

From Theorem \ref{thm: polymer model}, one obtains our main result by using Theorem \ref{thm: main technical wc} and standard analytic perturbation theory.   
More precisely, we prove the following theorem.

\bet \label{thm: main technical improved}
Suppose that Assumptions \ref{ass: analytic dispersion} and \ref{ass: exponential decay} in Section \ref{sec: model} hold. Then,
there are positive constants $\delta_1,\delta_2,g>0$ such that, for $(\la , \ka) \in \bbR\times \bbC^d$ and $\str \ka \str \leq  \delta_{1}, 0< \str \la \str \leq  \delta_{2}$, there is a rank $1$ operator  $P^{\la,\ka}$ and  a function $f(\la,\ka)$  satisfying
\beq \label{eq: return to equilibrium off fibers}
 \norm  ({\caZ_{t }^\la })_{\la^2 \ka} -  \e^{t f(\la,\ka )} P^{\la,\ka}   \norm   = O(\e^{t (f(\la,\ka)-\la^2 g)}), \qquad  t \nearrow \infty
\eeq
and
\beq
\norm P^{\la,\ka}-P^\ka_{\mathrm{kin}} \norm  = O(\la^2) ,    \qquad      \str f(\la,\ka)-  \la^2 f_{\mathrm{kin}}(\ka)   \str=  O(\la^4)   , \qquad  \la \searrow 0                                                        
\eeq  
Moreover,  $P^{\la,\ka}$ and $f(\la,\ka)$ are analytic in $\ka \in \bbC^d$ in the region defined by $\str \ka \str \leq  \delta_{1}, \str \la \str \leq  \delta_{2}$.
\eet 
By making $\delta_2$ small enough, the constant $g$ can be chosen to be any fraction of $g_{\mathrm{kin}}$ and $\delta_1$ can be chosen to be given by $\delta_{\mathrm{kin}}$, with $g_{\mathrm{kin}}, \delta_{\mathrm{kin}}$  as in Theorem \ref{thm: main technical wc}).

Theorems \ref{thm: stationary} (Equipartition Theorem) and \ref{thm: diffusion}  (Diffusion )  then follow as discussed in Section \ref{sec: return to eq}. We briefly recapitulate our reasoning.\\
\\
\emph{Proof of Theorems \ref{thm: stationary} and \ref{thm: diffusion}. } \\
We first prove Theorem \ref{thm: stationary}. Using  \eqref{expectation as scalar product} and Theorem \ref{thm: main technical improved}, we write, for $\theta=\overline{\theta} \in L^{\infty}(\tor)$,
\beq
 \Tr [ \theta(P) \caZ_{t }^\la \rho ] = \langle  \theta ,   ({\caZ_{t }^\la } \rho)_{0}    \rangle= \langle   \theta, \e^{t f(\la,0)} P^{\la, 0}       \rho_{ 0}  \rangle+ O(\e^{t (f(\la,0)-\la^2 g )}).
\eeq
Since $\caZ_{t }^\la \rho$  has trace $1$ (it is a density matrix)  for all $t \geq 0$, we deduce that $f(\la,0)=0$ and, setting $\theta=1$,
\beq \label{eq: trace preservation}
\langle  1 ,   P^{\la, 0}       \rho_{ 0}  \rangle=1.
\eeq
The fact that $ P^{\la, 0} $ is a rank $1$ operator (by Theorem \ref{thm: main technical improved}) implies, together with \eqref{eq: trace preservation},  that,
\beq
P^{\la, 0}  \eta=  \zeta_{\la}^0 \langle 1, \eta \rangle, \qquad  \textrm{for any}\, \eta \in L^2(\tor),
\eeq
 for some $\zeta_{\la}^0 \in L^2(\tor)$ which satisfies $\langle 1, \zeta_{\la}^0 \rangle=1$. 
Theorem \ref{thm: stationary} follows. \\

We define the diffusion matrix by
\beq
(D_\la)^{i,j} := - \la^{-4}   \frac{\partial^2}{\partial \ka^i \partial \ka^j} f(\la,\ka)\big\str_{\ka=0}, \qquad i,j =1,\ldots d.
\eeq
From \eqref{eq: trace as integral}, with $S:= \caZ_t^\la \rho$, we  conclude that   $\overline{f(\la,\ka)}=f(\la,-\overline{\ka}) $, and hence the matrix $D_\la$ has real entries. Positive-definiteness of $D_\la$ follows then from positive-definiteness of $D_{\mathrm{kin}}$, for $\la$ small enough.  
Using Theorem \ref{thm: main technical improved}, we  find that
 \baq
\sum_{x \in \lat} \e^{-\frac{\i  }{\sqrt{t}}  (q,x)   } \mu_{t}^\la(x)   &=&   \langle  1 ,   ({  \caZ_{ t}^\la \rho }  )_{ \frac{q}{\sqrt{  t}} }         \rangle \\
           &=&      \langle  1 ,   (  \caZ_{ t}^\la \rho )_{ \la^2 \ka}          \rangle, \qquad\qquad \qquad \textrm{with}\,  \ka= \la^{-2} \frac{q}{\sqrt{ t}}, \,   q \in \bbR^d    \nonumber \\
                      &=&        \langle  1 ,          \e^{ t f(\la, \ka   ) }   P^{\la,  \ka}      \rho_{ \la^2 \ka}          \rangle  (1+  O( \e^{ -g t })) , \qquad \qquad \textrm{as}\,   t \nearrow \infty  \nonumber\\[3mm]
                         &=&         \langle  1 ,          \e^{- t  ( \la^4 \frac{1}{2} (\ka, D_\la  \ka)  +O(\ka^3))  }   P^{\la, 0}        \rho_{ 0}   \rangle (1+ O(\ka))   (1+  O( \e^{ -g t })), \qquad \qquad \textrm{as}\, \ka \searrow 0  \nonumber\\[3mm]
                                                        &=&           \langle  1 ,          \e^{ -  \frac{1}{2}(q, D_\la q)  +O( t \ka^3)  }    P^{\la, 0}      \rho_{ 0}          \rangle     (1+ O(\ka))    (1+  O( \e^{ -g t })),         \nonumber     
  \eaq                              
  which proves Theorem \ref{thm: diffusion} upon 
  using  $\langle  1,  P^{\la, 0}       \rho_{ 0}  \rangle=1$  and    $  \ka= \la^{-2} \frac{q}{\sqrt{ t}} $. \qed

 Remark \ref{rem: convergence of moments} follows by standard reasoning, using the following facts
 \ben
 \item{  The family of operators \beq
 ({\caZ_{t }^\la })_{\la^2 \ka} -  \e^{t f(\la,\ka )} P^{\la,\ka}
  \eeq
  is analytic in $\ka$ in a neighborhood of $0 \in \bbC^d$ and bounded by a constant independent of $\ka$ and $t$. 
  }
 \item{ The function  $f(\la,\ka )$ and the rank $1$ operator  $P^{\la,\ka}$ are analytic in $\ka$ in a neighborhood of $0 \in \bbC^d$}
 \een
This  is related  to the general fact that the central limit theorem follows from the existence and analyticity of the large deviation generating function, as described in \cite{bryc}. Indeed,   $ \ka \mapsto f(\la,\ka )$ can  be viewed as the large deviation generating function corresponding to the family of random variables $x_t, t>0$, as defined in \eqref{def: random variable}.

\section{ Dyson expansion and proof of  Theorem \ref{thm: polymer model}}     \label{sec: dyson and proof of polymer}
To construct a "polymer model", we first write a Dyson expansion for $\caZ_{t}^\la$.

\subsection{ Dyson Expansion}\label{sec:  dyson expansion}

In this section, we set up a convenient notation to handle the Dyson expansion, which has been introduced in Lemma  \ref{lem: definition dynamics}.  
Define the unitary group $\caU_{t}  $   on  $\scrB_2(\scrH_\sys)$ by
\beq
\caU_{t}S := \e^{-\i t H_\sys} S \e^{\i t H_\sys} , \qquad  S \in \scrB_2(\scrH_\sys),
\eeq
and the operators $\caI_{x,l}$, with $x \in \lat$ and $l \in \{\links, \rechts \}$ ($\links,\rechts$ stand for ``\emph{left}" and ``\emph{right}"), as
\beq \label{def: operatorI}
\caI_{x,l}S:= \left\{ \begin{array}{rll}  \i & 1_x   S & \qquad \textrm{if} \quad l= \links \\[1mm]  -\i&
  S   1_x & \qquad \textrm{if} \quad l = \rechts. \\   \end{array} \right.
\eeq

Let $\caP_n$ be the set of partitions $\pi$ of the integers $1,\ldots,2n$ into $n$ pairs. We  write $(r,s) \in \pi$ if $(r,s)$ is one of these pairs, with the convention that $r<s$.   Note that the same notation was already used in \eqref{guassianity1} and in \eqref{eq: gaussian property1}.
Elements in $\bbR^{2n}, (\bbZ^d)^{2n}, \{ \links, \rechts \}^{2n}$ are denoted by $\underline{t}, \underline{x},\underline{l}$, with $t_i,x_i,l_i$  their respective components, for $ i=1,\ldots,2n$.  We evaluate \eqref{eq: dyson1} by using \eqref{def: thermal state canonical} and (\ref{eq: gaussian property1})-(\ref{eq: gaussian property2}):
\beq\label{eq: Z with pairings}
 \caZ^{\la}_{t} =    \sum_{n \in \bbZ^+} \la^{2n} \mathop{\int}_{0 \leq t_1 \ldots \leq t_{2n} \leq t}\, (\prod_{i=1}^{2 n}\d t_i)  \,  \mathop{\sum}\limits_{\underline{x} , \underline{l} }   \mathop{\sum}\limits_{\pi \in \caP_n}   \zeta_\pi(\underline{t},\underline{x}, \underline{l} )     \,      \caU_{t-t_{2n}}     \caI_{x_{2n},l_{2n} }    \ldots \caI_{x_2,l_2}   \caU_{t_2-t_1} \caI_{x_1,l_1}  \caU_{t_1}
\eeq
where 
\beq \label{def: zeta}
 \zeta_\pi(\underline{t},\underline{x}, \underline{l} )  := \prod_{(r,s) \in \pi}  \delta_{x_r,x_s}   \left\{  \begin{array}{cc}   \hat\psi (t_s-t_r )   & l_{r}=\links, \\ [1mm]  
 \hat\psi (-(t_s-t_r ))    & l_{r}=\rechts,
  \end{array}\right.
\eeq
and, for $n=0$, the integral in \eqref{eq: Z with pairings} is meant to be equal to $\caU_t$.

\begin{figure}[h!] 
\vspace{0.5cm}
\begin{center}
\psfrag{label0}{$t$}
\psfrag{labelt}{$0$}
\psfrag{label12}{  $\scriptstyle{(t_1,x_1, l_1)}$}
\psfrag{label11}{ $\scriptstyle{(t_2,x_2,l_2)}$}
\psfrag{label10}{ $\scriptstyle{(t_3,x_3, l_3)}$}
\psfrag{label9}{ $\scriptstyle{(t_4,x_4, l_4) }$}
\psfrag{label8}{ $\scriptstyle{(t_5,x_5, l_5) }$}
\psfrag{label7}{ $\scriptstyle{(t_6,x_6, l_6) }$}
\psfrag{label6}{ $\scriptstyle{(t_7,x_7, l_7) }$}
\psfrag{label5}{ $\scriptstyle{(t_8,x_8, l_8) }$}
\psfrag{label4}{ $\scriptstyle{(t_9,x_9, l_9) }$}
\psfrag{label3} { $\scriptstyle{(t_{10}, x_{10},, l_{10} )}$ }
\psfrag{label2} { $\scriptstyle{(t_{11}, x_{11}, l_{11} )}$ }
\psfrag{label1} { $\scriptstyle{(t_{12}, x_{12}, l_{12} )}$ }
\includegraphics[width = 18cm, height=6cm]{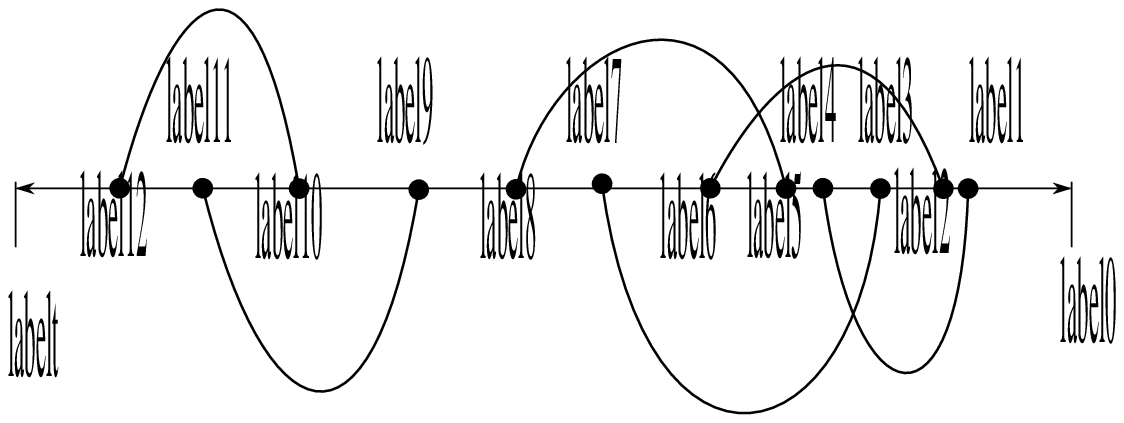}
\end{center}
\caption{ \footnotesize{Graphical representation of a  term  contributing to the RHS of \eqref{eq: Z with pairings} with $\pi=\{ (1,3), (2,4), (5,8 ), (6,10), (7,11), (9,12) \} \in \caP_6$. The times $t_i$ correspond to the position of the points on the horizontal axis.} \label{fig: explanation V} }
   \vspace{2mm}
     \begin{minipage}{18cm}
   \footnotesize{
 Starting from this graphical representation, we  can reconstruct the corresponding term in \eqref{eq: Z with pairings} - an operator on $\scrB_2(\scrH_\sys)$)-  as follows
\begin{itemize}
\item To each straight line between the points $(t_i,x_i,l_i)$ and $(t_{i+1},x_{i+1},l_{i+1})$, one associates the operators  $\caU_{t_{i+1}-t_i}$.
\item To each point $(t_i,x_i,l_i)$, one associates the operator $ \la^2 \caI_{x_i,l_i}$, defined in \eqref{def: operatorI}.
\item To each curved line between  the points $(t_r,x_r,l_r)$ and $(t_s,x_s,l_s)$, with $r<s$, we associate the factor 
\beq 
  \delta_{x_r,x_s}   \left\{  \begin{array}{cc}   \hat\psi (t_s-t_r )   & l_{r}=\links \\ [1mm]  
 \hat\psi (-(t_s-t_r ))    & l_{r}=\rechts.
  \end{array}\right. \nonumber
\eeq
\end{itemize}
Rules like these are commonly called "Feynman rules" by physicists.
}
 \end{minipage}

\end{figure}

We introduce some more terminology,  extending the above definition of pairings. It will be helpful in  classifying  the pairings.

\begin{definition}\label{def: pairings}
\ben
\item{Let $\Si_n$ be  the set of sets of $n$ pairs of (distinct) natural numbers. More concretely, for each  $\dsi \in \Si_n$, we can write
\beq\label{not: pairs}
\dsi= \left\{  (r_1,s_1), \ldots,  (r_n,s_n) \right\},   r_i , s_i  \in \bbN,  
\eeq
for natural numbers $r_i,s_i, i=1,\ldots,n$ which are all distinct. By convention,  $r_i<s_i, i=1,\ldots,n$ and $r_i< r_{i+1}, i=1,\ldots,n-1 $. If $\si_1  \in \Si_{n_1}$ and $ \si_2   \in \Si_{n_2}$, we write $\si_1 < \si_2$ whenever all elements of the pairs $(r_i^1,s_i^1)$ in $\si_1$  are smaller than all elements of the pairs $(r_j^2,s_j^2)$ in $\si_2$,  i.e.,
\beq
  s_i^1 < r_j^2, \qquad  i=1,\ldots, n_1,  j=1,\ldots, n_2 .  
\eeq 
}

\item{    Recall the definition of  $\caP_n$, the set of pairings with $n$ pairs. Obviously $ \caP_n \subset \Si_n$, and $\si \in \Si_n$ belongs to $\caP_n$ whenever  $\cup_{i=1}^n \{ r_i,s_i\} =\{1,\ldots,2n \}$ .
  Further, with any  $\si \in \Si_n $, we associate  the unique pairing $\pi\in \caP_{n}$ for which there is a monotone increasing function $q$ on $\{1,\ldots, 2 n \}$ such that 
\beq\label{connection si and pi}
(i,j) \in \pi \Leftrightarrow (q(i), q(j)) \in \si.  \eeq}
\item{
We set $\caP:= \cup_{n \geq 1}  \caP_n  $ and write $\str \pi \str=n$ whenever $\pi \in \caP_n$. 
   }
   \item{ We call  $\si \in \Si_n$  
  irreducible (Notation: $\mathrm{irr.\ }$) whenever there are no two sets $\si_1  \in \Si_{n_1}, \si_2   \in \Si_{n_2}, n_1+n_2=n$ such that  $\si=\si_1 \cup \si_2$ and $\si_1 < \si_2$. For any  $\si \in \Si_n$ that is not irreducible, we can thus find   partitioning subsets $\si_1,\ldots, \si_m$ ($\cup_{i=1}^m \si_i = \si$) such that $\si_{i=1,\ldots,m}$ are irreducible and $\si_i < \si_{i+1}$ for $i=1,\ldots,m-1$.   
  } 
  \item{ Consider some $\pi \in \caP$ and its partitioning into irreducible subsets $\si_1,\ldots,\si_m$, as defined above.  By \eqref{connection si and pi}, we can associate to each of the $\si_i$ a unique $\pi_i$ in $\caP$.  We call the set $(\pi_1,\ldots,\pi_m)$ of pairings, obtained in this way the  decomposition of $\pi$ into irreducible components.
}
  
\item{  For each $n \in \bbN$, we define a distinguished pairing $\pi \in \caP_n$, which is called the  \emph{minimally irreducible} pairing  (Notation: $\mathrm{min. irr.\ }$).  For $n>2$, this  minimally irreducible pairing is  given by
\beq
(r_1,s_1)=(1,3) , \quad   (r_{n},s_n)= (2n-2,2n), \quad   (r_{i+1},s_{i+1}) = (2i, 2i+3),  \quad \textrm{for} \quad i=1,\ldots, n-2.
\eeq
 For $n=1$ and $n=2$, the minimally irreducible pairing is defined to be $(1,2)$  and $\{(1,3),(2,4)\}$ respectively.   Intuitively, the minimally irreducible pairing in $\caP_n$ is characterized by the fact that if one  removes any pair, other than the pair with $r=1$ or $s=2n$, the resulting pairing is  no longer irreducible. }
\een
\end{definition}
\vspace{4mm}

\begin{figure}[ht!]
\begin{center}
\psfrag{n1}{$1$}
\psfrag{n2}{$2$}
\psfrag{n3}{$3$}
\psfrag{n4}{$4$}
\psfrag{n5}{$5$}
\psfrag{n6}{$\ldots$}
\includegraphics[width = 10cm]{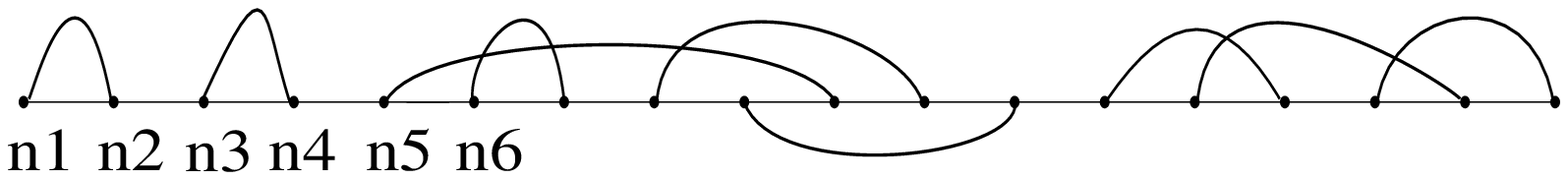}
\end{center}
\caption{\footnotesize{ Graphical representation of a pairing $\pi \in \caP_9$.  The pair $(r,s)$ belongs to $\pi$ whenever the natural numbers $r,s$ are connected by an arc.  This type of diagrams differs from those of Figure \ref{fig: explanation V}  in that we don't keep track of the $t_i$-coordinates, but only of the topological structure of the pairings.
Below is the decomposition of $\pi$ into irreducible components.}  \label{fig: decompose} } 
\begin{center}
\psfrag{\pair1}{$\pi_1$}
\psfrag{\pair2}{$\pi_2$}
\psfrag{\pair3}{$\pi_3$}
\psfrag{\pair4}{$\pi_4$}
\includegraphics[width = 15cm]{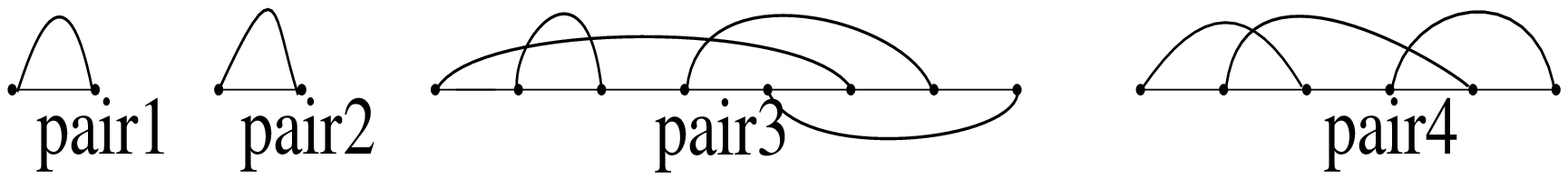}
\end{center}
\caption{\footnotesize{The irreducible components $\pi_1,\pi_2,\pi_3,\pi_4$. Explicitly, $\pi_1=\pi_2= \{(1,2)\}$, $\pi_3= \{  (1,6), (2,3), (4,7), (5,8) \}$ and $\pi_4=\{(1,3), (2,5), (4,6)\}$. The pairings $\pi_1,\pi_2$ and $\pi_4$ are minimally irreducible, whereas $\pi_3$ is not.  Indeed, one can remove the pair $(4,7)$ from $\pi_3$ without destroying the irreducibility.} } 
\end{figure}

\noindent For an irreducible pairing $\pi \in \caP_n$, we introduce (using the same conventions as in \eqref{eq: Z with pairings}, \eqref{def: zeta}),
\beq\label{def: vpairing}
\caV_{t}(\pi) :=    \mathop{\int}\limits_{0 = t_1\leq \ldots \leq t_{2n} = t}  (\prod_{i=2}^{2 n-1}\d t_i)  \mathop{\sum}\limits_{\underline{x},\underline{l}}  \,     \zeta_\pi(\underline{t},\underline{x}, \underline{l} )    \,     \caI_{x_{2n},l_{2n} } \caU_{t-t_{2n-1}}   \ldots \caI_{x_2,l_2}   \caU_{t_2-t_1} \caI_{x_1,l_1} .
\eeq
We can now rewrite \eqref{eq: Z with pairings}  as a sum over collections of irreducible pairings;
\baq \label{eq: z as polymer model}
&& \caZ^{\la}_{t}  =   \sum_{m \in \bbZ^+}  \mathop{\int}\limits_{0 \leq t_1 \ldots \leq t_{2m} \leq t} \d t_1 \ldots \d t_{2m} \nonumber \\[2mm]
&& \qquad        \mathop{\sum}\limits_{\scriptsize{\left.\begin{array}{c}  \pi_1, \ldots, \pi_{m}   \in \caP   \\ \pi_1, \ldots, \pi_{m} \,  \mathrm{irr.\ }   \end{array}\right.} }         \la^{(2 \sum_{i=1}^{m}\str \pi_i \str)}         \caU_{t-t_{2m}} \caV_{t_{2m}-t_{2m-1}}(\pi_m)   \ldots        \caU_{t_3-t_2} \caV_{t_2-t_1}(\pi_1)  \caU_{t_1}. 
\eaq
To obtain this last expression, we decompose each pairing $\pi$ in \eqref{eq: Z with pairings} into its irreducible components $\pi_1,\ldots, \pi_m$, and we made use of a simple factorization property of \eqref{eq: Z with pairings}.  The term on the RHS of \eqref{eq: z as polymer model} corresponding to $m=0$ is understood to be equal to $\caU_t$.

In expression \eqref{eq: z as polymer model}, we view  the pairings $\pi_i$   with $\str\pi_i\str \geq 2$ as excitations.  If  $\str\pi_i\str=1$,  for all $i=1,\ldots,m$, the corresponding term in \eqref{eq: z as polymer model} is  called a \emph{ladder} diagram. These ladder diagrams  provide the leading contribution to the dynamics, and they are the only terms that survive in the kinetic limit.
We define separately the Laplace transforms of the  irreducible  "excitation" diagrams ($\caR^{\mathrm{ex}}_\la$) and the irreducible ''ladder" diagram ($\caL$):
\baq
\caR^{\mathrm{ex}}_\la (z) &:=& \int_{\bbR^+}  \d t \,    \e^{-t z }    \mathop{\sum}\limits_{\scriptsize{\left.\begin{array}{c}  \str\pi\str\geq 2 \\ \pi \,\,  \mathrm{irr.\ } \end{array}\right.} }    \la^{2 \str \pi \str}    \caV_{t}(\pi)  \label{def: resolvent exitations}  \\
\caL (z) &:=& \int_{\bbR^+}  \d t \,   \e^{-t z }   \sum_{\str \pi \str =1}  \caV_{t}(\pi) =   \int_{\bbR^+}  \d t \,     \e^{-t z }   \caV_{t}(\{(1,2)\}) .  \label{def: L}
\eaq
Here and in what follows, we omit the specification $\pi \in \caP$ under the summation symbol.  We observe that, 
in \eqref{def: L},  the only element of $\caP_1$ is the set containing the single pair $(1,2)$.  The operators $\caR^{\mathrm{ex}}_\la (z)$ and $\caL (z) $ have already appeared in Theorem \ref{thm: polymer model}.    We will prove  Theorem \ref{thm: polymer model} in Section \ref{sec: proof of polymer model}. First, we establish some helpful estimates

\subsection{Estimates on the Dyson expansion}\label{sec:  estimates on dyson expansion}

\subsubsection{A priori estimates}
The following Lemma \ref{lem: apriori estimate} is a useful a-priori estimate. Its main assertion, Statement 2), i.e., eq.\  \eqref{awful estimate}, gives a bound on $\caV_{t}(\pi)$, the contribution of the irreducible pairing $\pi$ to the dynamics, in terms of the temporal coordinates $\underline{t}$. In particular, the sum over the other coordinates, $\underline{x}$ and $\underline{l}$ is already  performed. This is possible because the matrix elements of the free dynamics $(\e^{-\i t H_\sys})(0,x)$ decay exponentially in space, for fixed $t$; (see Statement 1 of Lemma   \ref{lem: apriori estimate}, or eq.\ \eqref{eq: decay of space matrix elements}).  Eq.\ \eqref{eq: bound1} tells us  that one can sum over $x$ at the cost of introducing an exponential growth in time.    This  exponential growth in time  is also visible in \eqref{awful estimate}, in the factor $\e^{ 2t c_{\systemdispersion}(\ga_1) } $.  However, this exponential growth is harmless, because the reservoir correlation functions $\hat \psi$ on the RHS of \eqref{awful estimate} are exponentially decaying in time, by Assumption \ref{ass: exponential decay}, and the growth constant $c_{\systemdispersion}(\ga_1)$ can be chosen arbitrarily small. In particular, it can be chosen smaller than the reservoir decay rate $g_\res$, and this fact will be exploited in Section \ref{sec: properties of Rex}.
.

\begin{lemma}\label{lem: apriori estimate}
Suppose that Assumption \ref{ass: analytic dispersion} holds (with some $\de_{\systemdispersion}>0$) and define 
\[ \left. \begin{array}{lcll}
 c_{\systemdispersion}(\delta)&:=& \sup_{k \in \tor} \sup_{\str \Im \ka \str \leq \delta} \str\Im\systemdispersion(k+\ka)\str ,  &     \left(c_{\systemdispersion}(\delta) <\infty, \quad   \textrm{for}\quad 0<\delta<\delta_{\systemdispersion} \right) \\[2mm]
 b_d(\delta) &:=& \sum_{x \in \lat} \e^{-\delta \str x \str},    \qquad &     \left(b_d(\delta)<\infty, \quad   \textrm{for}\quad 0<\delta \right).
\end{array} \right.  \]
Then the following statements hold true. 
\ben
\item{  For any $\ka \in \bbC^d$ with $ \str \Im \ka \str \leq \ga_1$, for some $\ga_1<  \delta_{\systemdispersion}$,
\beq \label{eq: bound on unitaries}
\norm   \e^{\i (\ka, X) } \e^{-\i t \systemdispersion(P)}   \e^{-\i (\ka, X) }  \norm \leq      \e^{t c_{\systemdispersion} ( \ga_1 )} , \qquad  t \geq 0 . 
\eeq
}
\item{ Let $\pi \in \caP_n$,  and choose constants $0<\ga < \ga_1 < \delta_{\systemdispersion}$.  For any $\ka \in \bbC^d$ satisfying $ \str\Im \ka \str \leq  \ga_1-\ga$, 
\beq \label{awful estimate}
\left\norm   \caJ_{\ka} \, \caV_{t}(\pi)\,  \caJ_{- \ka}   \right\norm \leq   \left\{ \begin{array}{l}    b_d(2\ga) \,  [ b_d(\ga_1-\ga-\str\Im \ka \str)]^{2 n} \,  2^{2n}  \,  \e^{ 2t c_{\systemdispersion}(\ga_1) }  \\[2.5mm]
  \times \,   \mathop{\int}\limits_{0 =t_1\leq  \ldots \leq t_{2n} = t} \left( \mathop{\prod}\limits_{i=2}^{2 n-1}\d t_{i}  \right)  \mathop{\prod}\limits_{(r,s) \in \pi} \str \psi(t_s-t_r)\str,  \end{array}\right. 
\eeq
We recall that  $\norm \cdot \norm$  in \eqref{awful estimate} refers to the operator norm on  $\scrB(\scrB_2(\scrH_\sys))$.
}
\een

\end{lemma}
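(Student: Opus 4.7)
My plan is to handle the two statements separately, since only the first is used to prove the second. For Statement 1, the computation is essentially algebraic: because $[X_j,P_k]=i\delta_{jk}$ is a scalar, BCH terminates and gives $e^{i(\ka,X)}Pe^{-i(\ka,X)}=P-\ka$ for real $\ka$; the analyticity of $\systemdispersion$ in a strip of width $\delta_\systemdispersion$ (Assumption \ref{ass: analytic dispersion}) then lets me extend this to $\str\Im\ka\str\leq\gamma_1<\delta_\systemdispersion$ and conclude $e^{i(\ka,X)}e^{-it\systemdispersion(P)}e^{-i(\ka,X)}=e^{-it\systemdispersion(P-\ka)}$. Viewed on $L^2(\tor)$ this is multiplication by $e^{-it\systemdispersion(k-\ka)}$, whose norm equals $\sup_k\str e^{-it\systemdispersion(k-\ka)}\str\leq\exp(t\,c_\systemdispersion(\str\Im\ka\str))\leq e^{tc_\systemdispersion(\gamma_1)}$, which is the claim.

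For Statement 2, I would start from two structural observations. First, $\caJ_\ka$ commutes with each $\caI_{x,l}$: the operator $1_x$ commutes with $X$ on its natural domain, so $e^{-\frac{i}{2}(\ka,X)}$ passes through $1_x$ up to the scalar $e^{-\frac{i}{2}(\ka,x)}$, and the two such scalars cancel, giving $\caJ_\ka\caI_{x,l}\caJ_{-\ka}=\caI_{x,l}$. Second, inserting $\caJ_{-\ka}\caJ_\ka=1$ between consecutive factors in the definition \eqref{def: vpairing} of $\caV_t(\pi)$ converts each propagator $\caU_s$ into the deformed super-operator $\tilde\caU_s(S):=e^{-is\systemdispersion(P+\ka/2)}\,S\,e^{is\systemdispersion(P-\ka/2)}$. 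Under the identification $\scrB_2(\scrH_\sys)\cong l^2(\bbZ^d\times\bbZ^d)$, the resulting object $\caJ_\ka\caV_t(\pi)\caJ_{-\ka}$ factorizes, for each fixed $(\underline{t},\underline{x},\underline{l})$, as a tensor product $W_L\otimes W_R$, where $W_{L/R}$ alternates propagators $e^{-i(t_{i+1}-t_i)\systemdispersion(P\pm\ka/2)}$ with either the projection $1_{x_i}$ or the identity, depending on whether $l_i$ sits on that side.

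The temporal growth in the bound follows directly from iterating Statement 1 on each side: since the $2n-1$ time increments sum to $t$ and the projectors have norm $1$, $\norm W_L\norm,\norm W_R\norm\leq e^{tc_\systemdispersion(\gamma_1)}$, so every fixed summand has operator norm at most $e^{2tc_\systemdispersion(\gamma_1)}$. This crude estimate is insufficient for summation over $\underline{x}$, however, so I would pass to position-space kernels and invoke Paley--Wiener: by shifting the torus contour inside the strip of analyticity one obtains
\[
\big\str\langle x\str e^{-is\systemdispersion(P\pm\ka/2)}\str y\rangle\big\str\leq e^{sc_\systemdispersion(\gamma_1)}\,e^{-(\gamma_1-\str\Im\ka\str)\str x-y\str}.
\]
Splitting off a fraction $\gamma$ of this decay to absorb the outer sum over the positions of the Hilbert--Schmidt test operator (producing the prefactor $b_d(2\gamma)$) and retaining the remaining rate $\gamma_1-\gamma-\str\Im\ka\str$ for the internal position sums, a Schur-type estimate yields one factor $b_d(\gamma_1-\gamma-\str\Im\ka\str)$ per propagator--projector link, so $2n$ in total. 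The factor $2^{2n}$ is the brute-force sum over $\underline{l}\in\{L,R\}^{2n}$, and the moduli of the reservoir correlation functions come straight out of $\zeta_\pi$ via \eqref{def: zeta}.

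The main technical point, and the place where I expect the real work to lie, is the last Schur-type accounting: the pairing $\pi$ identifies $x_r$ with $x_s$ but the corresponding projectors $1_{x_r},1_{x_s}$ sit at different positions in the chain, so the kernel-decay factors must be distributed carefully between the intermediate propagators to produce exactly the announced exponents of $b_d$ (with $2n$ rather than the $n$ one naively expects from the number of free spatial variables). Everything else — the BCH computation behind Statement 1, the tensor-product factorization across $\scrH_\sys\otimes\overline{\scrH_\sys}$, and the extraction of $\str\hat\psi\str$ (which the stated bound writes as $\str\psi\str$) out of $\zeta_\pi$ — is bookkeeping once the contour-shift ingredient from Assumption \ref{ass: analytic dispersion} is in place.
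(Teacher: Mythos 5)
Your overall plan matches the paper's: Statement 1 by conjugating the propagator and bounding a multiplication operator, Statement 2 by passing to kernels, exploiting the exponential off-diagonal decay of $e^{-\i t H_\sys}$, iterating through the chain of $\caI$'s and $\caU$'s to extract factors of $b_d$, and using a Cauchy--Schwarz / Schur-type step to recover the $\scrB(\scrB_2(\scrH_\sys))$ norm (this is \eqref{eq: cs bound} in the paper, producing the prefactor $b_d(2\gamma)$). Your $\caJ_\ka$-commutation observations and the tensor-product viewpoint $W_L\otimes W_R$ are consistent repackagings of what the paper does directly at the level of kernels; your flagging of $\str\psi\str$ vs.\ $\str\hat\psi\str$ in \eqref{awful estimate} is a correct catch of a typo (the bound must involve the time-domain correlation $\hat\psi$, as $\zeta_\pi$ in \eqref{def: zeta} does, and as the application in Section~\ref{sec: properties of Rex} with $h(t)=4 b_d(\cdot)^2\la^2\str\hat\psi(t)\str$ confirms).

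There is, however, a genuine error in your derivation of Statement 1. You justify $e^{\i(\ka,X)}\systemdispersion(P)e^{-\i(\ka,X)}=\systemdispersion(P-\ka)$ via BCH from $[X^j,P^k]=\i\delta_{jk}$. But on $l^2(\bbZ^d)\cong L^2(\tor)$ this commutation relation \emph{fails} --- the paper goes out of its way to point this out in Section~\ref{sec: particle} (``it is not true that $[X^i,P^j]=\i\delta_{i,j}$''); $P$ is a bounded multiplication operator on the torus, $X$ has integer spectrum, and the would-be canonical pair does not close under BCH. The correct mechanism is periodicity: for real $\ka$, $e^{\i(\ka,X)}$ implements translation by $-\ka$ on $\tor$ in the momentum representation, and since $\systemdispersion$ is $2\pi$-periodic this gives $e^{\i(\ka,X)}\systemdispersion(P)e^{-\i(\ka,X)}=\systemdispersion(P-\ka)$ directly, with no appeal to a Heisenberg pair. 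One then continues analytically in $\ka$ into the strip $\str\Im\ka\str<\delta_\systemdispersion$ using Assumption~\ref{ass: analytic dispersion}, which is exactly what the paper does. The conclusion you reach is right, but BCH is not a valid route to it here.

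On your worry about $2n$ vs.\ $n$ factors of $b_d$: the paper sidesteps the distribution problem by first taking $\sup_{\underline{x},\underline{l}}\str\zeta_\pi\str$, which discards the $\delta_{x_r,x_s}$ constraints, and then summing over all $2n$ spatial variables (the $2n-2$ internal $x_i$'s together with $y',z'$; the endpoints $x_1,x_{2n}$ are pinned to $y,z'$) \emph{unconstrained}, one factor of $b_d$ per sum via \eqref{eq: bound1}. No careful redistribution across the pairing is needed, which makes the bookkeeping you flag as ``the real work'' considerably lighter than you expect.
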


\begin{proof}
\textbf{Statement 1)} \\
Recall that $H_\sys=\systemdispersion(P)$.  By analytic continuation from $\Im \ka=0$ to $\str \Im \ka \str \leq \delta_\systemdispersion$, one has that
\beq
 \e^{\i (\ka, X) }  \e^{-\i  t \systemdispersion(P)}    \e^{-\i (\ka, X) } =   \e^{ -\i   t   \systemdispersion(   P -    \ka   )    }  .
\eeq
 Since, for $\str \Im \ka \str \leq \ga_1$
\beq 
 \norm \e^{ -\i  t   \systemdispersion(   P-  \ka  )    }   \norm  \leq   \e^{ t \norm \Im \systemdispersion(   P-  \ka  )  \norm }   \leq    \e^{t c_{\systemdispersion}(\ga_1)},  \qquad  t \geq 0,
\eeq
the claim \eqref{eq: bound on unitaries} is proven.
We observe that \eqref{eq: bound on unitaries} implies
 \beq \label{eq: decay of space matrix elements}
 \str (\e^{-\i t H_\sys})(x,x') \str \leq \e^{t c_{\systemdispersion} (\ga_1)}  \e^{-\ga_1 \str x'-x \str}, \qquad  \textrm{for any} \, \, 0<\ga_1 < \delta_\systemdispersion , \qquad  t \geq 0,
  \eeq
  and hence 
  \beq \label{eq: bound1}
\sum_{x' \in \lat}   \e^{\ga \str x'-x \str}  \str (\e^{-\i t H_\sys})(x,x') \str \leq \e^{t c_{\systemdispersion} (\ga_1)}   b_d(\ga_1-\ga) , \quad  \textrm{for any} \,\,  0<\ga< \ga_1<\delta_\systemdispersion , \qquad  t \geq 0 .
  \eeq

 \textbf{Statement 2)} \\
 To estimate the integrand in \eqref{def: vpairing}, we choose  $0<\ga'< \ga_1<\delta_\systemdispersion$ and find that
  \baq
  && \sum_{y', z'} \left\str \e^{\ga' (\str y'-y\str + \str z'-z \str )}  \left( \mathop{\sum}\limits_{\underline{x}, \underline{l}}     \zeta_\pi(\underline{t},\underline{x}, \underline{l} )            \caI_{x_{2n},l_{2n} }    \ldots \caI_{x_2,l_2}   \caU_{t_2-t_1} \caI_{x_1,l_1}   \right)(y,z;y',z') \right\str \nonumber \\[2mm]
    &\leq&   (\sup_{\underline{x}, \underline{l} } \str  \zeta_\pi(\underline{t},\underline{x}, \underline{l} ) \str )    \sum_{\underline{l}}   \e^{2 t c_{\systemdispersion}(\ga_1)}     (  b_d(\ga_1-\ga'))^{2n} , \label{eq: iterated exponential bound}
\eaq
where we can replace "$\sum_{\underline{l}}$" by $2^{2n}$, the number of terms in the sum.  
The bound  \eqref{eq: iterated exponential bound} is obtained by applying  \eqref{eq: bound1} $2 n$ times.

For clarity, we illustrate this with an example: Take $n=4$ and $(l_1, l_2, l_3,l_4, l_5, l_6,l_7, l_8)= (\links, \rechts, \links, \links, \rechts, \links, \rechts, \rechts) $. First, we notice that 
\beq
 \left\str \left( \caI_{x_{8},l_{8} }    \ldots \caI_{x_2,l_2}   \caU_{t_2-t_1} \caI_{x_1,l_1}   \right)(y,z;y',z')\right\str    
\eeq
 vanishes unless $x_1=y$ and $x_8=z'$, and that it is bounded by
\beq
 \left\{
\begin{array}{l}  w (t_3-t_1,x_3-x_1) \times  w (t_4-t_3,x_4-x_3)   \times  w(t_6-t_4,x_6-x_4)    \times  w(t-t_6,  y'-x_6)     \\[1mm]
     \times \,\,   w(t_2-0,x_2-z)  \times  w(t_5-t_2,x_5-x_2) \times  w(t_7-t_5,x_7-x_5)   \times  w(t_8-t_7,x_8-x_7)  ,  \end{array}\right. \label{eq: ridiculous}
\eeq
where $w(u,x):=  \str (\e^{-\i u H_\sys } )(0,x)  \str, t_1=0, t_8=t$.

We use the decomposition (recall that $x_1=y$ and $x_8=z$)  \baq  &\str y'-y\str  \leq  \str x_3- x_1 \str  + \str x_4- x_3 \str  + \str x_6-x_4 \str + \str y' -x_6 \str  \nonumber,   \\
  &\str z'-z \str \leq   \str x_2 -z \str  + \str x_5- x_2 \str  + \str x_7-x_5 \str + \str x_8 -x_7 \str \nonumber,  \eaq 
  and \eqref{eq: ridiculous} to factorize the sum over $y',z', \underline{x}$ on the LHS of \eqref{eq: iterated exponential bound}. Those sums can then be carried out with the help of \eqref{eq: bound1}, yielding the bound
\baq
&&(b_d(\ga_1-\ga'))^{8}   \left\{
\begin{array}{l}  \exp{ \left( c_{\systemdispersion}(\ga_1) \left[   (t_8-t_6) +  (t_6-t_4)+  (t_4-t_3)+  (t_3-t_1)  \right] \right)  }  \\[1mm] \times   \exp{  \left(  c_{\systemdispersion}(\ga_1) \left[      (t_8-t_7) +  (t_7-t_5)+  (t_5-t_2)+  (t_2-0)  \right]  \right)  }   
 \end{array}\right\}   \nonumber \\[2mm]
&& \qquad \qquad  =   (b_d(\ga_1-\ga'))^{8}  \,   \e^{2 t c_{\systemdispersion}(\ga_1)}
\eaq
Note that this bound only depends on  $\str \pi \str$ and $t$, and not on $\underline{t},\underline{l}$, or $\pi$. Hence it can  be applied for all  $\underline{l}$, which yields the factor $2^{2n}$ in \eqref{eq: iterated exponential bound}.
\vspace{0.5cm}

For a linear operator $\caW$ on $l^2(\bbZ^d\times \bbZ^d)$, a straightforward application of the  Cauchy-Schwarz inequality yields
\beq\label{eq: cs bound}
 \norm \caW \norm \leq   b_d(2\de) \left(\sup_{y,z \in \bbZ^d}  \sum_{y' ,z' \in \bbZ^d}  \str \caW(y,z; y' ,z' )\str   \e^{\delta (\str y'-y \str + \str z'-z \str )} \right).
\eeq
Starting from the explicit definition of $\caJ_{\ka} \, \caV_{t}(\pi)\,  \caJ_{- \ka}  $ (as in \eqref{def: jkappa} and \eqref{def: vpairing}) , one uses \eqref{eq: cs bound} and \eqref{eq: iterated exponential bound} with $\ga':= \ga+\str \Im \ka\str$. This yields Statement 2).

\end{proof}

\subsubsection{A combinatorial estimate}

In the next step of our analysis of the Dyson series, we  show that one can perform the integration over all pairings $\pi$ and temporal coordinates $\underline{t}$ contributing to  \eqref{def: resolvent exitations}.  
The following lemma is purely combinatorial, i.e.,  it only employs notions introduced in Definition \ref{def: pairings}.

\begin{lemma}\label{lem: combinatorics}
Consider a positive function $h$ on $\bbR^+$ and a pairing $\pi \in \caP$.  We define
\beq
\chi_t(\pi):= \mathop{\int}\limits_{0 = t_1\leq \ldots \leq t_{2n} = t}  (  \prod_{i=2}^{2n-1}  \d t_i)    \prod_{(r,s) \in \pi}    h(t_s-t_r) , \qquad  \textrm{with}\, n=\str \pi \str.
\eeq
Then
\beq
\label{eq: sum over irreducible}
\mathop{\sum}\limits_{ \pi \,  \mathrm{ irr.\ } }  \chi_t(\pi)  \leq     \Big( \mathop{\sum}\limits_{\pi \,  \mathrm{min.\ irr.\ }  } \chi_t(\pi) \Big) \times \exp{\left(t \int_{\bbR^+} \d w h(w) \right)},
\eeq
and, if  $\pi$ is the minimally irreducible pairing in $\caP_n$ and $z \in \bbR$,
\beq
 \int_{\bbR^+} \d t \,  \e^{-tz}     \chi_t(\pi) 
  \leq    \left( \int_{\bbR^+} \d w h(w) \e^{-w z}\right)\times  \left(\int_{\bbR^+} \d y \int_{\bbR^+} \d w \,  h(y+w)  \e^{-w z} \right)^{n-1} . \label{eq: calculation excitation pairings}
\eeq

\end{lemma}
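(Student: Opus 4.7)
The lemma has two parts of rather different flavor: \eqref{eq: calculation excitation pairings} is an explicit integral computation, while \eqref{eq: sum over irreducible} is a combinatorial/analytic bound. I would prove (2) first since it is cleaner, and then address (1).

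For \eqref{eq: calculation excitation pairings}, the plan is a change of variables on the ordered simplex that diagonalizes the structure of $\pi_n^{\min}$. Using gap coordinates $s_j := t_{j+1}-t_j \geq 0$, the $i$-th pair's $h$-argument $u_i := t_{s_i}-t_{r_i}$ is a sum of consecutive $s_j$'s, and for the minimally irreducible pairing consecutive $u_i, u_{i+1}$ share exactly one gap $y_j := s_{2j}$. I would substitute $(s_1, \ldots, s_{2n-1}) \leftrightarrow (u_1, \ldots, u_n, y_1, \ldots, y_{n-1})$; a direct computation shows the Jacobian is $1$, the total time is $t = \sum_i u_i - \sum_j y_j$, and the constraints $s_j \geq 0$ become $y_j \geq 0$, $u_1 \geq y_1$, $u_n \geq y_{n-1}$, $u_i \geq y_{i-1}+y_i$ for $1 < i < n$. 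Integrating each $u_i$ first against $h(u_i)e^{-u_i z}$ yields the function $H(a) := \int_a^\infty h(u) e^{-uz}\, du$ evaluated at the lower constraint, so the Laplace transform equals
\[ \int \prod_j dy_j \; e^{(\sum_j y_j) z} \; H(y_1)\, H(y_{n-1}) \prod_{i=2}^{n-1} H(y_{i-1}+y_i). \]
Since $h \geq 0$, $H$ is non-increasing, so $H(y_{i-1}+y_i) \leq H(y_i)$; applying this for $i=2,\ldots,n-1$ and bounding $H(y_{n-1}) \leq H(0)$ in the last factor, the $y_j$-integrations factorize into $H(0) \cdot \left(\int dy\, e^{yz} H(y)\right)^{n-1}$. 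A final substitution $u = y+w$ yields $\int e^{yz} H(y)\, dy = \int dy \int dw\, h(y+w) e^{-wz}$, which is the asserted bound.

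For \eqref{eq: sum over irreducible}, my plan is to associate to every irreducible $\pi \in \caP_n$ a canonical minimally irreducible skeleton $\pi^* \subseteq \pi$ by a greedy construction: start with the pair $(1, s_1) \in \pi$; having chosen pairs with right endpoints $s_1 < \ldots < s_i$, pick the next pair $(r, s) \in \pi$ maximizing $s$ subject to $r < s_i$. Irreducibility of $\pi$ forces the construction to terminate at a pair ending at $2n$, and a direct inspection of the resulting endpoint pattern in $\{1, \ldots, 2n\}$ shows that, after order-preserving relabeling, the skeleton is exactly the paper's $\pi_m^{\min}$ for some $m \leq n$. Writing $\pi = \pi^* \sqcup \pi'$ with $|\pi'| = n-m$, non-negativity of $h$ lets me drop the interleaving constraint between skeleton and extra time coordinates, bounding $\chi_t(\pi)$ by a product of a skeleton simplex integral and an ``extras'' integral, each extra pair contributing at most $t \int h(w)\, dw$ upon integration over $[0,t]$.

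The main obstacle will be the combinatorial bookkeeping in part (1): the naive sum of $(t\int h)^{n-m}$ over irreducible pairings with a fixed greedy skeleton overshoots the $k$-th term of $e^{t\int h}$. The resolution is to group extras by the ``slot'' of the skeleton they fall into and, within each slot of length $\Delta_i$, sum over extras arrangements using the identity that the sum over pairings of $2k$ ordered points in an interval of length $\Delta$ of the $h$-product is bounded by $\Delta^k (\int h)^k/k!$ (the $1/k!$ arising from the ordering of the $k$ intra-slot pairs and an application of Fubini). Telescoping the product of slot factors $\prod_i e^{\Delta_i \int h}$ reconstructs $e^{t\int h}$ since $\sum_i \Delta_i = t$. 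Verifying that every extras configuration is correctly assigned to some slot in a way compatible with the greedy identification of $\pi^*$ is the subtle point; once this is done, (1) follows by summing over $m$.
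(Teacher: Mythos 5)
Your proof of \eqref{eq: calculation excitation pairings} is correct and takes essentially the same route as the paper: the paper changes to variables $w_i=v_i-v_{i-1}$, $y_i=v_{i-1}-u_i$ and then drops the constraint $y_i\le w_{i-1}$; you change to gap variables and then to $(u_i,y_j)$, integrate the $u_i$ first to obtain $H(a)=\int_a^\infty h(u)e^{-uz}\,du$, and use monotonicity of $H$. These are equivalent ways of relaxing the same constraint, and your Jacobian/constraint bookkeeping is right. The greedy-skeleton identification you propose for \eqref{eq: sum over irreducible} is also sound: one can check that if $(r_{i+2},s_{i+2})$ were eligible at stage $i+1$ (i.e.\ $r_{i+2}<s_i$), greedy would have preferred it over $(r_{i+1},s_{i+1})$ since $s_{i+2}>s_{i+1}$, forcing $s_i<r_{i+2}$; together with $r_{i+1}<s_i$ this gives exactly the interleaving pattern of the minimally irreducible pairing.

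The gap is in the slot argument you flag as ``the subtle point.'' Extras pairs of an irreducible $\pi$ are \emph{not} confined to a single skeleton slot, even for the greedy skeleton. Concretely, take $\pi=\{(1,4),(2,5),(3,6)\}\in\caP_3$ (irreducible). Greedy picks $(1,4)$, then among $\{(2,5),(3,6)\}$ with $r<4$ picks $(3,6)$, so $\pi^*=\{(1,4),(3,6)\}$ and the extras pair is $(2,5)$. Its endpoints $t_2\in[t_1,t_3]$ and $t_5\in[t_4,t_6]$ lie in the first and third skeleton slot, so the factor $h(t_5-t_2)$ cannot be assigned to any one slot, and $\prod_i e^{\Delta_i\int h}$ does not bound it. More fundamentally, bounding $\chi_t(\pi)$ \emph{per} $\pi$ and then summing over $\{\pi:\,\mathrm{sk}(\pi)=\pi^*\}$ is the wrong order: the $1/k!$ you are chasing comes from summing the simplex integral over all interleavings at once, and that sum is inaccessible once you have already relaxed each $\chi_t(\pi)$ individually. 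The paper's proof avoids a canonical skeleton entirely: it permits over-counting (summing over all admissible minimally irreducible subsets $\pi$ with $\pi'\to\pi$, which only increases an upper bound) and uses the exact factorization identity
\beq
\sum_{\pi'\to\pi,\ |\pi'|=n'}\chi_t(\pi')\ =\ \chi_t(\pi)\,\int_{\substack{0\le u_1\le\cdots\le u_m\le t\\ 0\le u_i\le v_i\le t}}\d\underline u\,\d\underline v\ \prod_{i=1}^{m}h(v_i-u_i),
\eeq
for which the sum over all interleavings of $m$ extra pairs with a fixed skeleton produces the unconstrained (but $u$-ordered) product measure on the right; the ordering of the $u_i$ then gives the $t^m/m!$ that reconstructs $e^{t\int h}$ after summing over $m$. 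I recommend you abandon the canonical/greedy skeleton and the slot decomposition, and instead establish this factorization over a non-unique skeleton; that is the missing idea.
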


\begin{proof}
Given $\pi \in \caP_n$, we can relabel the times $t_1,\ldots,t_{2n}$ by setting
\beq\label{change of variables}
u_{i}= t_{r_i}, \qquad    v_{i}= t_{s_i}   \qquad  \textrm{for} \quad i=1,\ldots,n.
\eeq
Using our conventions for the labels of the pairs $(r_i,s_i)$, it follows that 
\beq 0\leq u_i \leq v_i \leq t ,\qquad 0\leq u_i  \leq u_{i+1}\leq t, \qquad  0=u_1, t=\max\{v_i \} .  \label{conditions on uv}\eeq
Conversely, a set of $n$ pairs of times $(u_i,v_i) ,i=1,\ldots,n$, satisfying \eqref{conditions on uv} uniquely determines a pairing $\pi \in \caP_n$ and corresponding times $0=t_1\leq \ldots \leq t_{2n}=t$. 
\begin{figure}[ht!] 
\vspace{0.5cm}
\begin{center}
\psfrag{u1}{$u_1$}
\psfrag{u2}{$u_2$}
\psfrag{u3}{$u_3$}
\psfrag{v1}{$v_1$}
\psfrag{v2}{$v_2$}
\psfrag{v3}{$v_3$}
\psfrag{t1}{$t_1$}
\psfrag{t2}{$t_2$}
\psfrag{t3}{$t_3$}
\psfrag{t4}{$t_4$}
\psfrag{t5}{$t_5$}
\psfrag{t6}{$t_6$}
\includegraphics[width = 4cm, height=2cm]{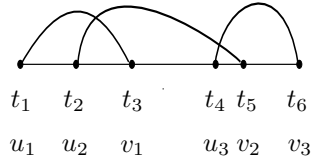}
\end{center}
\caption{ \footnotesize{ This figure illustrates the change of variables from $(\pi,\underline{t})$, with $\pi \in \caP_3$ and $t_1 < \ldots <t_6$, to $(u_i,v_i)_{i=1,2,3}$, with $u_i \leq v_i$ and $u_i \leq u_{i+1}$. }
 \label{fig: single graph with times} } 
\end{figure}

Consider an irreducible pairing $\pi' \in \caP_{n'}$. It is easy to see that we can always find a subset $j_1,\ldots,j_{n}$  of $\{ 1,\ldots,n'\}$, for some $n\leq n'$, such that 
\ben
\item{the pairs $(r_{j_i}, s_{j_i}), i=1,\ldots,n$ determine a minimally irreducible pairing $\pi \in \caP_n$;}
\item{these pairs contain the boundary points, i.e.\ $r_{j_1}=1$  and $\max_{i}\{s_{j_i} \}=2n' $.}
\een
  We write $\pi'  \rightarrow \pi$  whenever $\pi$  and $\pi'$ are related in this way; (note, however, that $\pi$ is not uniquely determined). 
  It follows that 
  \beq \label{eq: inequality nonunique pairings}
        \mathop{\sum}\limits_{\scriptsize{\left.\begin{array}{c}  \str\pi' \str= n' \\ \pi' \,  \mathrm{irr.\ } \end{array}\right.} }      \chi_t(\pi')         \leq        \mathop{\sum}\limits_{\scriptsize{\left.\begin{array}{c}  \str\pi \str\leq n' \\ \pi \,  \mathrm{min.\. irr.\ } \end{array}\right.} }  
          \mathop{\sum}\limits_{\scriptsize{\left.\begin{array}{c}  \str \pi'\str = n'    \\      \pi'  \rightarrow \pi  \end{array}\right.} }  
       \chi_t(\pi')  .
  \eeq
For $n' \geq 2$, the inequality is strict, since $\pi$ is not necessarily uniquely determined by $\pi'$, and hence the same irreducible $\pi'$  can appear more than once on the RHS of \eqref{eq: inequality nonunique pairings}.\\

Using the change of variables \eqref{change of variables}, one can convince oneself  that, for all $m:=n'-n \geq 0$, 
\beq \label{eq: adding pairings to min irr}
       \mathop{\sum}\limits_{\scriptsize{\left.\begin{array}{c}  \str \pi'\str = n'    \\      \pi'  \rightarrow \pi \end{array}\right.} }   \chi_t(\pi') =
\chi_t(\pi)   \mathop{\int}\limits_{ \scriptsize{\left.\begin{array}{c} 0 \leq u_1\leq\ldots\leq u_{m} \leq t \\  0 \leq u_i \leq v_i \leq t   \end{array}   \right.}}  \d \underline{u}\d \underline{v}\,  \prod_{i=1}^{m} h(v_i-u_i) ,  \eeq
where  $\pi$ is the minimally irreducible pairing in $  \caP_n$, and where we  have abbreviated $\d \underline{u}:= \d u_1 \ldots \d u_n$ and $\d \underline{v}:= \d v_1 \ldots \d v_n$.
The relation \eqref{eq: adding pairings to min irr} expresses the fact that  one can add any set of pairs, corresponding to times $\underline{u},\underline{v}$ satisfying the first two conditions of \eqref{conditions on uv}, to a minimally irreducible $\pi$, thus obtaining a new irreducible pairing (see also Figure \ref{fig: nonminimal pairs}).
\begin{figure}[ht!] 
\vspace{0.5cm}
\begin{center}
\psfrag{ua1}{$u_1$}
\psfrag{ua2}{$u_2$}
\psfrag{va1}{$v_1$}
\psfrag{va2}{$v_2$}
\psfrag{ub1}{$u_1$}
\psfrag{ub2}{$u_2$}
\psfrag{vb1}{$v_1$}
\psfrag{vb2}{$v_2$}
\psfrag{uc1}{$u_1$}
\psfrag{uc2}{$u_2$}
\psfrag{vc1}{$v_1$}
\psfrag{vc2}{$v_2$}
\includegraphics[width = 15cm]{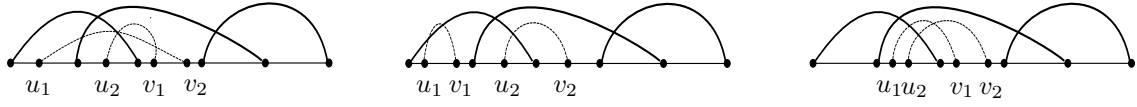}
\end{center}
\caption{ \footnotesize{Illustration of \eqref{eq: adding pairings to min irr}.  Three pairings in $\caP_5$ contributing to the LHS of \eqref{eq: adding pairings to min irr}. We have chosen $\pi$ to be the minimally irreducible pairing $(1,3), (2,5), (4,6)$ in $\caP_3$, as in Figure \ref{fig: single graph with times} . For each of these $3$ pairings in $\caP_5$, the five pairs  $(u_i,v_i)_{i=1,\ldots,5}$ contain a subset of three pairs identified with $\pi$. We have only shown the two other pairs, relabeling them as $(u_i,v_i)_{i=1,2}$.  The same strategy is used to prove \eqref{eq: adding pairings to min irr} in general. }   \label{fig: nonminimal pairs} } 
\end{figure}

By explicit computation, 
\baq
     \sum_{m \in \bbZ^+}    \mathop{\int}\limits_{ \scriptsize{\left.\begin{array}{c} 0 \leq u_1\leq\ldots\leq u_m \leq t \\  0\leq u_i \leq v_i \leq t   \end{array}   \right.}} \d \underline{u}\d \underline{v}  \prod_{i=1}^m  h(v_i-u_i)  
    & \leq &     \sum_{m \in \bbZ^+}      \mathop{\int}\limits_{0 \leq u_1\leq\ldots\leq u_m \leq t}  \d \underline{u}       \left( \int_{\bbR^+} \d w h(w) \right)^m  \nonumber \\
    &\leq &   \exp{\left(t \int_{\bbR^+} \d w h(w) \right)},
\eaq
which proves the bound \eqref{eq: sum over irreducible} starting from  \eqref{eq: inequality nonunique pairings} and \eqref{eq: adding pairings to min irr}.

 When we perform the change of variables \eqref{change of variables} for  a minimally irreducible pairing $\pi$, the variables $\underline{u}, \underline{v}$ satisfy the constraint $ u_{i+1} \leq v_i \leq u_{i+2}$ in addition to the constraints $0\leq u_i  \leq u_{i+1} \leq t$ and $ 0\leq  u_i \leq v_i \leq t$.  
 Let $\pi$ be the minimally irreducible pairing in $\caP_n$.  Then ($u_1=0$ is a dummy variable) 
\baq
&&     \int_{\bbR^+} \d t  \e^{-tz}    \chi_t(\pi)   =   \int_0^\infty   \d v_1h(v_1-u_1)   \e^{-z(v_1-u_1)}  \,       \int_0^{v_{1} } \d u_{2}  \int_{v_{1}}^{\infty } \d v_{2}    \ldots     \nonumber  \\[1mm]     
   &  &     \qquad \ldots \qquad    \int_{v_{n-5}}^{v_{n-4} } \d u_{n-2}     \int_{v_{n-3}}^{\infty } \d v_{n-2}       \ldots     \int_{v_{n-3}}^{v_{n-2} }  \d u_{n-1}  \int_{v_{n-2}}^{\infty } \d v_{n-1}    \e^{-z(v_{n-1}-v_{n-2})}   h(v_{n-1}-u_{n-1})       \nonumber    \\[1mm]    
   && \quad \qquad \qquad       \int_{v_{n-2}}^{v_{n-1} } \d u_n   \int_{v_{n-1}}^{\infty } \d v_n     \e^{-z(v_n-v_{n-1})}   h(v_n-u_n)   .  
\eaq
Performing the change of variables $w_i=v_i-v_{i-1}$ and $y_i=v_{i-1}-u_i $ (for $i>1$) and extending the range of integration of $y_i$ to $\bbR$, the above expression factorizes and one obtains  the bound \eqref{eq: calculation excitation pairings}.

\end{proof}

\subsection{Proof of Theorem \ref{thm: polymer model}}\label{sec: proof of polymer model}

In this section, we prove Theorem \ref{thm: polymer model}.  Statement 2) is proven separately for $\caL(z)$ and $\caR^{\mathrm{ex}}_\la(z)$ in Sections \ref{sec: properties of L} and \ref{sec: properties of Rex}, respectively. Statement 3) is proven in Section \ref{sec: properties of L} and Statement 1) in Section \ref{sec: proof of resolvent formula}.

It is mainly in Section \ref{sec: properties of Rex} that we use the preparatory work summarized in Lemma \ref{lem: apriori estimate} and Lemma \ref{lem: combinatorics}, in order  to obtain a bound on $\caR^{\mathrm{ex}}_\la (z) $. 

\subsubsection{Properties of $\caL(z)$}\label{sec: properties of L}

We compute $\caL(z)$ starting from \eqref{def: vpairing} and \eqref{def: L}:
\beq \label{eq: L as second order perturbation}
\caL(z)=    \int_{\bbR^+} \d t \,  \e^{-t z}    \,   \sum_{x \in \lat} \sum_{l,l' \in \{\links, \rechts \}}  \,           \caI_{l',x}        \caU_{t}   \caI_{l,x}    \left\{  \begin{array}{cc}   \hat\psi ( t)   & l=\links \\ [3mm]  
 \hat\psi (- t)  & l=\rechts
  \end{array}\right.
\eeq
To display the result, we introduce  the functions $\psi_+,\psi_-$ as
\beq
\psi_{+}(z)=  \int_{\bbR^{+}}  \d t \hat\psi(t) \e^{ \i t z }   , \qquad  \psi_{-}(z)=  \int_{\bbR^{-}}  \d t \hat\psi(t) \e^{ \i t z }, \qquad  z \in \bbC,
\eeq
with $\hat \psi$ as  defined in Section \ref{sec: full reservoir}; (we recall that $\hat \psi(u)$ decays exponentially).
Since $\hat \psi(-u)= \overline{\hat \psi(u)} $ (as follows from \eqref{def: correlation function}), one has that
 \beq \label{prop of psi plus and minus} \psi_{+}(z)= \overline{\psi_{-}(\bar z) } , \qquad  \psi(z)= \psi_{+} (z)+ \psi_{-} (z) , \qquad  \textrm{with}\,  \str \Im z \str < g_{\res}. \eeq
Using \eqref{eq: L as second order perturbation}, we calculate $\caL(z)S $, for $S \in L^2(\tor \times \tor)$,
\beq \label{eq: explicit caL} \begin{split}
&(\caL(z) S)(k+\frac{p}{2},k-\frac{p}{2})  \\
  &   = \int_{\bbT^d} \d k'     \,  \left( \psi_+ [ \systemdispersion(k-\frac{p}{2}) -\systemdispersion(k'+\frac{p}{2})+ \i z ]+ \psi_- [ \systemdispersion(k+\frac{p}{2})-\systemdispersion(k'-\frac{p}{2}) -\i z]   \right)              S (k'+\frac{p}{2},k'-\frac{p}{2})  \\
  &-              \int_{\bbT^d} \d k' \,   \left( \psi_+ [ \systemdispersion(k-\frac{p}{2})-\systemdispersion(k'+\frac{p}{2})+\i z) ]   +  \psi_-[  \systemdispersion(k+\frac{p}{2}) - \systemdispersion(k'-\frac{p}{2}) - \i z ]  \right)      S(k+\frac{p}{2},k-\frac{p}{2}).
  \end{split}
\eeq

 The claim about $\caL(z)$ in Statement 2) of Theorem \ref{thm: polymer model} follows by noticing that the above expression can be analytically continued in $z$ and $p$. This follows from the analyticity of $\systemdispersion$ (Assumption \ref{ass: analytic dispersion}) and $\psi_+, \psi_-$ (consequences of Assumption \ref{ass: exponential decay}).  
 
 To prove Statement 3), we first check that $(\caL(0))_0=M^0$ by setting $p=0$ and $z=0$ in \eqref{eq: explicit caL},  and using \eqref{prop of psi plus and minus}.  
 It remains to verify that 
 \beq\label{equality of hamiltonian part}
 \la^2(M^\ka-M^0) =   \i  \la^2 (\ka, \nabla \systemdispersion)  = (\ad(\i H_\sys))_{\la^2 \ka} +O((\la^2 \ka)^2) 
 \eeq
  as operators on $L^2(\tor) $, where $(\ka, \nabla \systemdispersion)$ is the multiplication operator given by the function  $k \mapsto (\ka, \nabla \systemdispersion)(k)$.  Equation \eqref{equality of hamiltonian part} follows by writing explicitly
  \beq
  ((\ad(\i H_\sys))_p \theta)(k)= \i  \left(\systemdispersion (k+\frac{p}{2})    -  \systemdispersion(k-\frac{p}{2})  \right) \theta(k), \qquad \theta \in L^2(\tor, \d k),
  \eeq
expanding in powers of $p$ and putting $p=\la^2 \ka$.

\subsubsection{Properties of  $\caR^{\mathrm{ex}}_\la(z)$}\label{sec: properties of Rex}

Choose positive constants $\ga_1>\ga>0$, as in Lemma \ref{lem: apriori estimate}, and define  the quantity $\chi_t(\pi)$ as in Lemma \ref{lem: combinatorics}, with  $h$ given by
\beq
h(t):= 2^2  b_d(\ga_1-\ga-\str \Im \ka \str)^2 \la^2  \str \hat\psi(t)\str.
\eeq
It follows from  Statement 2) of Lemma \ref{lem: apriori estimate} and equations \eqref{def: vpairing}, \eqref{def: resolvent exitations} that
\beq \label{eq 1 of proof excitations}
\norm \caJ_{\ka} \caR^{\mathrm{ex}}_\la (z)  \caJ_{-\ka} \norm    \leq b_d(2\ga)   \int_{\bbR^+} \d t  \,  \e^{ 2 c_{\systemdispersion}(\ga_1)t}  \e^{- \Re z t} \, \Bigg( 
  \mathop{\sum}\limits_{\scriptsize{\left.\begin{array}{c}\str\pi\str >1\\ \pi \,  \mathrm{irr.\ } \end{array}\right.} } 
    \chi_t(\pi)\Bigg) .
\eeq
and hence, using Lemma \ref{lem: combinatorics}, that
\baq
\norm \caJ_{\ka} \caR^{\mathrm{ex}}_\la (z)  \caJ_{-\ka} \norm   
&\leq &  b_d(2\ga) \int_{\bbR^+} \d t \e^{-(\Re{z}-a) t}  \sum_{\pi \,   \mathrm{min.\ irr.\ } } \chi_t(\pi), \qquad \textrm{with} \quad a:=2c_{\systemdispersion}(\ga_1)+ \int_{\bbR^+} \d w \, h(w)   \nonumber  \\
 &  \leq &   b_d(2\ga) \left( \int_{\bbR^+} \d w \, h(w) \e^{-w(\Re{z}-a)}\right)   F\left(\int_{\bbR^+} \d y  \, \int_{\bbR^+} \d w \,  h(y+w)  \e^{-w (\Re{z}-a) } \right). \nonumber
\eaq
where  $F(x):=\frac{x}{1-x}$, provided that $\str x \str <1$.
To prove the first inequality above, we use  \eqref{eq 1 of proof excitations} and \eqref{eq: sum over irreducible}, and, for the second inequality, we use  \eqref{eq: calculation excitation pairings} and  sum the geometric series.

 Statement 2) of Theorem  \ref{thm: polymer model} now follows by fixing the constants and using  the exponential decay of $\hat\psi$. For example, choose $\ga_1, \ga$ such that
\beq
2 c_{\systemdispersion}(\ga_1)  \leq \frac{1}{4} g_\res, \qquad       \ga: = \frac{1}{2} \ga_1 
\eeq
and $\delta'_2$ small enough such that for $\str\la\str \leq \delta' _2$
\beq \int_{\bbR^+}  \d w \, h(w) \leq \frac{1}{4} g_\res, \qquad  \int_{\bbR^+} \d y \int_{\bbR^+} \d w \,  h(y+w)  \e^{-w (-\frac{1}{4} g_\res-a) }\leq 1.   \eeq
Then \eqref{eq: bound rex } is satisfied with $\delta'_1:=  \frac{1}{4} \ga_1$, $g':= \frac{1}{4} g_\res$ and $\delta'_2$ as determined above.

\subsubsection{Proof of equation \eqref{eq: expansion for resolvent} in Statement 1) of Theorem \ref{thm: polymer model} } \label{sec: proof of resolvent formula}

To simplify the following calculations, we abbreviate 
\beq\caR^{\mathrm{irr}}_\la(z):=\caR^{\mathrm{ex}}_\la (z)+\la^2 \caL (z) , \qquad   \caR_\sys(z):= (z- \ad( \i H_\sys))^{-1}. \eeq
By the self-adjointness of  $\ad(H_\sys)$, one has that $\norm  \caR_\sys(z) \norm < \str \Re z\str^{-1}$. 
We choose $\la$ and $z$ such that   $\Re z > 0$   and  $ \norm   \caR^{\mathrm{irr}}_\la(z)    \caR_\sys(z) \norm \leq       \str \Re z\str^{-1}  \norm   \caR^{\mathrm{irr}}_\la(z)    \norm      < 1$. Then
\baq
 \caR_\la(z) &:= & \int_{\bbR^+}  \d t \, \e^{-tz}  \caZ^\la_t  \nonumber\\
   &= & \sum_{n \in \bbZ^+}                    \caR_\sys(z)  \left( \caR^{\mathrm{irr}}_\la(z)    \caR_\sys(z)  \right)^n \nonumber\\
      &= &         \caR_{\sys}(z)  \left(1-    \caR^{\mathrm{irr}}_\la(z)    \caR_\sys(z)  \right)^{-1}    \nonumber\\
      &= &         \left(z- \ad(\i H_\sys)  -  \caR^{\mathrm{irr}}_\la(z)   \right)^{-1}  \nonumber\\
       &= &        \left(z-\ad( \i H_\sys)  - \la^2 \caL (z)  - \caR^{\mathrm{ex}}_\la (z)  \right)^{-1} ,
\eaq
where the second equality follows by Laplace transforming \eqref{eq: z as polymer model}, and the third equality represents  the sum of a geometric series.  Hence, Statement 1) of Theorem \ref{thm: polymer model} is proven.

\section{Proof of Theorem \ref{thm: main technical improved}}  \label{sec: proof of main technical}

In this Section we prove Theorem \ref{thm: main technical improved}.  Our reasoning is based on a standard application of analytic perturbation theory and the inverse Laplace transform.

We abbreviate
\beq
A(z,\la,\ka) := \left( \ad(\i H_\sys)+\la^2\caL(z)+ \caR^{\mathrm{ex}}(z) \right)_{\la^2 \ka}- \la^2 M^{\ka}
\eeq
and we define 
\beq
\mathbf{G}:= \left\{ (z,\la,\ka) \in \bbC\times \bbR \times \bbC^d \,  \Big\str \,      \Re z > -g' ,  \str\ka\str < \delta_{\mathrm{kin}},  \str\la\str < \min( \delta'_2, \sqrt{\frac{\delta'_1}{\delta_{\mathrm{kin}}}} )     \right\},
\eeq
where $g',\delta'_1,\delta'_2$ are as described in Theorem \ref{thm: polymer model} and $\delta_{\mathrm{kin}}$ is as described in Theorem \ref{thm: main technical wc}.
Theorem \ref{thm: polymer model}  implies that, on the domain $\mathbf{G}$, the function $\la^2 M^{\ka}+ A(z,\la,\ka)$ is analytic  in the variables $(z,\ka)$   and, for $\Re z$ large enough, 
\beq \label{def: R in fibers}
 ( \caR_\la(z) )_{\la^2 \ka}= (z-\la^2 M^{\ka}- A(z,\la,\ka))^{-1}.
\eeq
We may extend the (operator-valued) function  $ z \mapsto ( \caR_\la(z) )_{\la^2 \ka}$ into the region $\Re z > -g'$.  This will be useful, because, at the end of this section, we calculate the reduced evolution $(\caZ_t^\la)_{\la^2 \ka}$ from the inverse Laplace transform of $( \caR_\la(z) )_{\la^2 \ka}$.
From \eqref{def: R in fibers} we see that any singular point of the function $ z\mapsto ( \caR_\la(z) )_{\la^2 \ka}$  must satisfy 
\beq \label{z in spectrum}
z \in  \sp ( \la^2 M^{\ka}+ A( z,\la,\ka)).
\eeq
Recall that by Theorem \ref{thm: main technical wc}, $M^{\ka}$ has a simple isolated eigenvalue $f_{\mathrm{kin}}(\ka)$, and let $\Om \subset \bbC$ be as defined in \eqref{def: omega and gap}, i.e., 
\beq
  \Om :=  \mathop{\cup}\limits_{\str\ka\str <  \delta_{\mathrm{kin}}  } \left(\sp M^\ka \setminus \{ f_{\mathrm{kin}}(\ka)\} \right).
\eeq

The following two lemmas describe the singularities of  $(\caZ_t^\la)_{\la^2 \ka}$.
\begin{lemma}\label{lem: difference of sets}
There is a constant $c_1$ and a function $c(\la)$ with $c(\la) \searrow 0$, as $\la \searrow 0$, such that, for any $z$ satisfying \eqref{z in spectrum}, one of the following two statements holds
\beq
\mathrm{dist}( z, \la^2 \Om  ) \leq \la^2 c(\la), \qquad \textrm{\emph{or}}  \qquad  \mathrm{dist}( z, \la^2 f_{\mathrm{kin}}(\ka) )\leq c_1 \la^4.
\eeq 
\end{lemma}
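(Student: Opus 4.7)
The plan is to rescale by $\la^2$: set $w := z/\la^2$, so that \eqref{z in spectrum} becomes $w \in \sp(M^\ka + V(z,\la,\ka))$ with $V(z,\la,\ka) := \la^{-2} A(z,\la,\ka)$. The first step is to show $\norm V(z,\la,\ka)\norm = O(\la^2)$ uniformly in the relevant regime. Writing
\[
A(z,\la,\ka) = \bigl[(\ad(\i H_\sys)+\la^2\caL(0))_{\la^2\ka} - \la^2 M^\ka\bigr] + \la^2\bigl(\caL(z)-\caL(0)\bigr)_{\la^2\ka} + \bigl(\caR^{\mathrm{ex}}_\la(z)\bigr)_{\la^2\ka},
\]
Statement 3 of Theorem \ref{thm: polymer model} controls the first bracket by $O(\la^4\ka^2)+O(\la^4\ka)$, while \eqref{eq: bound rex } controls the last summand by $O(\la^4)$; for the middle summand, analyticity and uniform boundedness of $\caL$ on $\{\Re z > -g'\}$ (Statement 2) give $\caL(z)-\caL(0) = O(|z|)$, which is $O(\la^2)$ in the regime $|z|=O(\la^2)$ that is relevant for $w$ lying near $\sp M^\ka$. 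Hence $\norm V\norm = O(\la^2)$ throughout.

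Next I would exploit Theorem \ref{thm: main technical wc}: $M^\ka$ has an isolated simple eigenvalue $f_{\mathrm{kin}}(\ka)$ separated from $\Om$ by the gap $g_{\mathrm{kin}}$. Standard analytic perturbation theory via the Riesz projection supplies a radius $r \in (0, g_{\mathrm{kin}}/2)$ and a constant $C < \infty$, uniform in $\ka \in \{|\ka|\le \delta_{\mathrm{kin}}\}$ by analyticity of $f_{\mathrm{kin}}$ and $P^\ka_{\mathrm{kin}}$ on this compact set, such that for every $V$ with $\norm V\norm$ sufficiently small, $M^\ka + V$ has exactly one eigenvalue $\tilde f$ in the disk $D_r(f_{\mathrm{kin}}(\ka))$, and $|\tilde f - f_{\mathrm{kin}}(\ka)| \leq C\norm V\norm$. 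Applied with $V = V(z,\la,\ka) = O(\la^2)$, any $w \in \sp(M^\ka+V) \cap D_r(f_{\mathrm{kin}}(\ka))$ satisfies $|w - f_{\mathrm{kin}}(\ka)| \leq c_1 \la^2$, which upon rescaling yields the second alternative of the lemma.

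For $w \in \sp(M^\ka+V)$ outside $D_r(f_{\mathrm{kin}}(\ka))$, I would use a Neumann-series argument: on the complement of $D_r(f_{\mathrm{kin}}(\ka))$ intersected with the complement of an $\eta$-neighborhood of $\Om$, the resolvent $(w-M^\ka)^{-1}$ is uniformly bounded, say by $N(\eta)$, again uniformly in $\ka$ by compactness and the Riesz decomposition of $M^\ka$. Whenever $\norm V\norm\cdot N(\eta) < 1$, the identity $w-M^\ka-V = (w-M^\ka)(1-(w-M^\ka)^{-1}V)$ shows $w \notin \sp(M^\ka+V)$. Defining $c(\la)$ as the smallest $\eta$ for which this inequality fails, one has $c(\la) \searrow 0$ as $\la \searrow 0$, and any such $w$ must obey $\mathrm{dist}(w,\Om) \leq c(\la)$. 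Rescaling gives the first alternative. The only technical point, rather than a conceptual obstacle, is ensuring uniformity of $C$ and $N(\eta)$ in $\ka \in \{|\ka|\le \delta_{\mathrm{kin}}\}$ and in $z$ over the relevant bounded region; this follows from the analyticity statements of Theorems \ref{thm: polymer model} and \ref{thm: main technical wc} together with compactness.
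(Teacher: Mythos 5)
Your overall strategy — rescale $w=z/\la^2$, treat $V=\la^{-2}A$ as a perturbation of $M^\ka$, use analytic perturbation theory near the isolated simple eigenvalue $f_{\mathrm{kin}}(\ka)$ and a Neumann-series argument away from the spectrum — is exactly what the paper does. However, there is a genuine gap in the crucial estimate on $V$. You assert that $\norm V\norm = O(\la^2)$ ``throughout'', but this fails: the middle summand $(\caL(z)-\caL(0))_{\la^2\ka}$ of $V$ is only $O(|z|)$, and $|z|$ is not a priori $O(\la^2)$ for a $z$ satisfying \eqref{z in spectrum}. In fact the uniform bound from Theorem~\ref{thm: polymer model} gives only $\norm(\caL(z)-\caL(0))_{\la^2\ka}\norm = O(1)$, hence $\norm V\norm = O(1)$ over $\mathbf{G}$. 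Your own phrasing acknowledges that $O(\la^2)$ holds ``in the regime $|z|=O(\la^2)$'', but that is precisely what must first be \emph{proved} about the singular points, and you then drop the caveat. This circularity actually matters in the Neumann argument on the complement of $D_r(f_{\mathrm{kin}}(\ka))\cup(\eta\text{-nbd of }\Om)$: there $|w|$ can be of order $\la^{-2}$, and then $\norm V\norm$ is $O(1)$, so the criterion $\norm V\norm\cdot N(\eta)<1$ (with both factors taken as suprema over that region) does not force $\eta\searrow 0$, because the locations where $\norm V\norm$ is large and where $N$ is large are disjoint but your separated suprema cannot exploit that.

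The paper sidesteps this by a two-stage estimate that never asserts $\norm V\norm=O(\la^2)$ uniformly. It first records the uniform bound $\norm A\norm = O(\la^2)$ (which \emph{does} hold, since the offending term is $\la^2(\caL(z)-\caL(0))_{\la^2\ka} = O(\la^2)$), and runs the Neumann series on $(z-\la^2 M^\ka - A)^{-1}$ directly: writing $\norm (z-\la^2 M^\ka)^{-1}\norm\le m\la^{-2}$ when $\mathrm{dist}(z,\la^2\sp M^\ka)\ge\la^2 r(m)$, the factors of $\la^{\pm 2}$ cancel and one deduces that singular $z$ lie within $\la^2 r(m)$ of $\la^2\sp M^\ka$. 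This forces $|z|=O(\la^2)$, whence the middle summand is now $O(\la^4)$ and $\norm A\norm = O(\la^4)$. Only at this point does the perturbation-theoretic step (your second and third paragraphs, which are essentially correct) kick in, yielding the $c_1\la^4$ bound near $f_{\mathrm{kin}}(\ka)$ and, via upper semicontinuity of the spectrum, the $c(\la)\searrow 0$ bound near $\Om$. To repair your proof, insert this preliminary confinement step — either by bounding $\norm A\norm$ rather than $\norm V\norm$ and keeping the $\la^{-2}$ on the resolvent side, or by treating $|w|\le R$ (where $\norm V\norm = O(\la^2)$) and $|w|>R$ (where the resolvent itself is small enough for Neumann) separately.
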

\begin{proof}
From  Theorem \ref{thm: polymer model}, we infer that
\beq\label{bounds on A}
\norm A(z,\la,\ka) \norm  = \la^2 \norm (\caL(z)-\caL(0))_{\la^2 \ka} \norm +O(\la^4)+ O((\la^2 \ka)^2)  \qquad \textrm{as} \, \la \searrow 0,   \la^2 \ka  \searrow 0,
\eeq
with $(\caL(z  )-\caL(0))_{\la^2 \ka} $  bounded and analytic in $(z,\ka)$ on $\mathbf{G}$.
Since $M^{\ka}$ is bounded,  there is a constant $r(m)>0$, for all $m>0$, such that
\beq
 \mathop{\sup}\limits_{z \in \bbC,\, \mathrm{dist}( z, \sp M^\ka  ) \geq r(m) }   \norm (z-M^\ka )^{-1}\norm  \leq    m  .
\eeq
Choose $m^{-1}:= \sup_{(z,\la,\ka) \in \mathbf{G}}  \lakl  \norm A(z,\la,\ka) \norm $ (by \eqref{bounds on A}, $m^{-1}=O(\la^0)$).  Using the Neumann series for $(z- \la^2 M^\ka-A(z,\la,\ka))^{-1} $, it follows that, if  $\mathrm{dist}( z, \la^2\sp M^\ka  ) \geq \la^2 r(m)$, then $z$ cannot satisfy \eqref{z in spectrum}. 

If, however, $\mathrm{dist}( z, \la^2\sp M^\ka  ) \leq \la^2 r(m)$, then $\norm A(z,\la,\ka) \norm= O(\la^4)$, as $\la \searrow 0$; (this follows from \eqref{bounds on A} and the analyticity of $\caL(z)$).   The claim now follows from analytic perturbation theory, using that $ \la^2 f_{\mathrm{kin}}(\ka)$ is an isolated simple eigenvalue.
\end{proof}

\begin{lemma}\label{lem: uniqueness} For sufficiently small $\str \la \str$, there is a 
unique  $z=:\tilde z$  at a distance  $O(\la^4)$ from $\la^2 f_{\mathrm{kin}}(\ka)$ satisfying  \eqref{z in spectrum}.  Let $P^{\la,\ka}$ be the residue of $( z-\la^2 M^{\ka}- A(z,\la,\ka))^{-1}$ at $z=\tilde z$. It follows that $P^{\la,\ka}$ is a rank one-operator and 
\beq \label{ex: P close to projection}
\norm P^{\la,\ka} -   P_{\mathrm{kin}}^{\ka} \norm  = O(\la^2)
\eeq
 with $P_{\mathrm{kin}}^{\ka}$ the one-dimensional spectral projection of $ M^{\ka}$ corresponding to the isolated simple eigenvalue $f_{\mathrm{kin}}(\ka)$, as in Theorem \ref{thm: main technical wc}.
\end{lemma}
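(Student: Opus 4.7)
The plan is to apply analytic perturbation theory to the $z$-dependent operator family $\la^2 M^\ka + A(z,\la,\ka)$, and then solve the implicit equation \eqref{z in spectrum} near $\la^2 f_{\mathrm{kin}}(\ka)$ by a fixed-point (Banach/Rouch\'e) argument. Once the relevant perturbed eigenvalue and its spectral projection are under control as analytic functions of $z$, the residue $P^{\la,\ka}$ is read off by a standard resolvent decomposition.

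First I fix the contour $\gamma_0 := \{w \in \bbC : \str w - \la^2 f_{\mathrm{kin}}(\ka)\str = \la^2 g_{\mathrm{kin}}/4\}$. By Theorem \ref{thm: main technical wc}, $\gamma_0$ has distance $\geq \la^2 g_{\mathrm{kin}}/4$ from $\la^2 \sp M^\ka$, so $\norm (w - \la^2 M^\ka)^{-1}\norm = O(\la^{-2})$ on $\gamma_0$. For $z$ in the closed disk $D$ of the same radius around $\la^2 f_{\mathrm{kin}}(\ka)$, estimate \eqref{bounds on A} combined with the analyticity of $\caL(z)$ supplied by Theorem \ref{thm: polymer model} gives $\norm A(z,\la,\ka)\norm = O(\la^4)$ uniformly, so the Neumann series for $(w - \la^2 M^\ka - A(z,\la,\ka))^{-1}$ converges on $\gamma_0$. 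This produces a spectral projection
\beq
\Pi(z) := \frac{1}{2\pi \i} \oint_{\gamma_0} (w - \la^2 M^\ka - A(z,\la,\ka))^{-1}\, \d w,
\eeq
depending holomorphically on $z \in D$ and of rank one (by standard analytic perturbation theory, since $\norm A\norm \ll \la^2 g_{\mathrm{kin}}$). The unique eigenvalue $E(z)$ of $\la^2 M^\ka + A(z,\la,\ka)$ inside $\gamma_0$ is holomorphic in $z$ and satisfies $\str E(z) - \la^2 f_{\mathrm{kin}}(\ka)\str = O(\la^4)$. A careful evaluation of the contour integral representing $\Pi(z) - P^\ka_{\mathrm{kin}}$, whose integrand has norm $O(1)$ on $\gamma_0$ but whose contour length is $O(\la^2)$, yields $\norm \Pi(z) - P^\ka_{\mathrm{kin}}\norm = O(\la^2)$.

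Next, the singular set \eqref{z in spectrum} intersected with $D$ coincides with the fixed-point set of $E$, since the remaining spectrum of $\la^2 M^\ka + A(z,\la,\ka)$ lies near $\la^2 \Om$ and is disjoint from $D$ for small $\la$. From Cauchy's formula applied to $\caL$ on its strip of analyticity, combined with the $O(\la^4)$ bound on $\caR^{\mathrm{ex}}_\la$, one gets $\partial_z A(z,\la,\ka) = O(\la^2)$, hence $\str E'(z)\str = O(\la^2) < 1$ uniformly on $D$. Then either the Banach fixed-point theorem on a sub-disk of radius $C \la^4$, or Rouch\'e's theorem applied to $z - E(z)$ versus $z - \la^2 f_{\mathrm{kin}}(\ka)$ on such a circle (the latter comparison using $\str E(z) - \la^2 f_{\mathrm{kin}}(\ka)\str = O(\la^4) < C\la^4$), yields a unique solution $\tilde z$ with $\str \tilde z - \la^2 f_{\mathrm{kin}}(\ka)\str = O(\la^4)$.

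To extract $P^{\la,\ka}$ I decompose
\beq
(z - \la^2 M^\ka - A(z,\la,\ka))^{-1} = \frac{1}{z - E(z)}\Pi(z) + R(z),
\eeq
where $R(z) := (z - \la^2 M^\ka - A(z,\la,\ka))^{-1}(1 - \Pi(z))$ is holomorphic in a neighbourhood of $\tilde z$, since the spectrum on $\mathrm{range}(1 - \Pi(z))$ stays near $\la^2\Om$, at distance $\gtrsim \la^2$ from $z \in D$. As $z - E(z)$ has a simple zero at $\tilde z$ with derivative $1 - E'(\tilde z) = 1 + O(\la^2)$, the residue is $P^{\la,\ka} = \Pi(\tilde z)/(1 - E'(\tilde z))$, rank one, and \eqref{ex: P close to projection} follows from $\Pi(\tilde z) = P^\ka_{\mathrm{kin}} + O(\la^2)$ and $E'(\tilde z) = O(\la^2)$. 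The main technical point I expect to face is the clean derivation of the $O(\la^2)$ bound on $\Pi(z) - P^\ka_{\mathrm{kin}}$: a naive pointwise norm estimate of the Neumann-series integrand on $\gamma_0$ is only $O(1)$, so one must either exploit the $O(\la^2)$ contour length, or compute the residue at $\la^2 f_{\mathrm{kin}}(\ka)$ of the first-order correction $(\la^2 f_{\mathrm{kin}}(\ka) - \la^2 M^\ka)^{-1}(1 - P^\ka_{\mathrm{kin}}) A\, P^\ka_{\mathrm{kin}} + \mathrm{h.c.}$ explicitly, with the higher-order Neumann terms contributing only $O(\la^4)$.
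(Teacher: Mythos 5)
Your proof is correct and rests on the same pillars as the paper's: analytic perturbation theory for the $z$-dependent family $\la^2 M^\ka + A(z,\la,\ka)$, a contour of radius $O(\la^2)$ around $\la^2 f_{\mathrm{kin}}(\ka)$ (making the free resolvent $O(\la^{-2})$ and $\|A\|=O(\la^4)$ small relative to the gap $\la^2 g_{\mathrm{kin}}$), and a compensation between the $O(1)$ integrand and $O(\la^2)$ contour length to get \eqref{ex: P close to projection}. Where you diverge from the paper is in two tactical choices. For existence, the paper argues by contradiction: if no pole existed inside $\la^2\mathbf{C}_{\ka,a}$, then the contour integral of $(z-\la^2 M^\ka)^{-1} - (\caR_\la(z))_{\la^2\ka}$ would force $\|P^\ka_{\mathrm{kin}}\| = O(\la^2)$, impossible; you instead solve $z = E(z)$ directly via Rouch\'e or a contraction, using the Cauchy estimate $\|\partial_z A\| = O(\la^2)$ to get $|E'(z)|<1$. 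Your fixed-point approach is more explicit and simultaneously delivers uniqueness, whereas the paper treats uniqueness separately (and tersely) via perturbation theory; it also localizes $\tilde z$ to an $O(\la^4)$-neighbourhood in one stroke, which the paper only achieves after combining its existence argument with Lemma \ref{lem: difference of sets}. For the rank-one property and the residue, the paper passes to an abstract function $F(z)=F_0 + zF_1 + \dots$ and identifies $\mathrm{Res}\,F(z)^{-1}=1_0(F_1^{-1}F_0)\,F_1^{-1}$; you instead split off the eigen-resolvent as $\frac{1}{z-E(z)}\Pi(z) + R(z)$ and read off $P^{\la,\ka}=\Pi(\tilde z)/(1-E'(\tilde z))$, which is concrete and makes the $O(\la^2)$ comparison with $P^\ka_{\mathrm{kin}}$ immediate. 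Both routes are valid; yours trades the paper's slick abstract residue lemma for a self-contained decomposition that makes the dependence on $\Pi(\tilde z)$ and $E'(\tilde z)$ explicit.
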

\begin{proof} 
By analytic perturbation theory, the operator $\la^2 M^\ka+ A(z,\la,\ka)$ has at most one eigenvalue at a distance  $O(\la^4)$ of $f_{\mathrm{kin}}(\ka)$. This means that   \eqref{z in spectrum} has at most one solution at a distance  $O(\la^4)$ of $f_{\mathrm{kin}}(\ka)$.
We now prove that there is at least one  solution.  Indeed, if no such solution existed,  we could choose a contour 
\beq
\mathbf{C}_{\ka,a}= \{z \in \bbC \,  \str \,   \str z-f_{\mathrm{kin}}(\ka)\str= a \}, \qquad  a>0,
\eeq
 with $a$ small enough such that $\mathbf{C}_{\ka,a}$ stays away from  $\Om$.  We  then calculate 
\baq
 2\pi \i( P_{\mathrm{kin}}^\ka -0)& =& \int_{\la^2 \mathbf{C}_{\ka,a}}  \d z    ( z-\la^2 M^{\ka})^{-1}  - \int_{\la^2 \mathbf{C}_{\ka,a}}   \d z  ( z-\la^2 M^{\ka}- A(z,\la,\ka))^{-1}  \nonumber \\
  &=&  \int_{\la^2 \mathbf{C}_{\ka,a}}  \d z    ( z-\la^2 M^{\ka})^{-1}  \left( 1-  (  1-    A(z,\la,\ka)( z-\la^2 M^{\ka})^{-1} )^{-1}
     \right)      \nonumber \\
       &\leq&  (2 \pi a) \,  b(a,\ka)  \,    \left( 1-  \frac{1}{  1-    b(a,\ka) O(\la^2) } \right) ,   \label{eq: at least one pole}  
\eaq
where
\[
 b(a,\ka):= \sup_{z\in \mathbf{C}_{\ka,a}} \norm (z-M^\ka)^{-1} \norm,
\]
  and, here and in what follows, the contour integrals are meant to be oriented clockwise.
Since the last line of \eqref{eq: at least one pole} is of order $\la^2$, we arrive at a contradiction to the fact that $P_{\mathrm{kin}}^\ka \neq 0$.

The claim about the residue is most easily seen in an abstract setting:
Let $F(z)$ be a Banach-space valued analytic function in some open domain containing $0$, and such that  $0 \in \sp F(0)$ is an isolated eigenvalue.
We have hence the Taylor expansion
\beq
F(z) = \mathop{\sum}_{n \geq 0}  \frac{z^{n}}{n!}  F_{n} , \qquad F_{n} := F^{(n)}(0), \qquad  0 \in \sp F_0
\eeq
If $\norm F_1-1 \norm$ is small enough,  then also $F_1^{-1} F_0$  has $0$ as an isolated eigenvalue. We denote the corresponding spectral projection by $1_0(F_1^{-1} F_0)$ and we calculate
\beq
\mathrm{Res} (F(z)^{-1})  =\mathrm{Res}   (F_0 + z F_1)^{-1}  =  \left(  \mathrm{Res} ( F_1^{-1} F_0 + z )^{-1}  \right)   F_1^{-1} = 1_0(F_1^{-1} F_0)    F_1^{-1}.
\eeq
The last expression is clearly a rank-one operator.  In the case at hand, $F_1^{-1}=1+O(\la^2)$, as $\la \searrow 0$, which yields \eqref{ex: P close to projection}.
\end{proof}
We set $f(\la,\ka):=\tilde z$ and we define  $P^{\la,\ka}$ as the residue of $( z-\la^2 M^{\ka}- A(z,\la,\ka))^{-1}$ at $z=\tilde z$.  It is clear that $f(\la,\ka)$ and $P^{\la,\ka}$ enjoy the analyticity properties claimed in Theorem \ref{thm: main technical improved}.

\begin{figure}[ht!] 
\vspace{0.5cm}
\begin{center}
\psfrag{heightl}{$l$}
\psfrag{realz}{$\Re z$}
\psfrag{imz}{$\i \Im z$}
\psfrag{contourgamma}{$\Ga$}
\psfrag{cgpriem}{$\Ga'$}
\psfrag{contourc}{$\la^2\mathbf{C}_{\ka,a}$}
\psfrag{contourcpriem}{$\la^2\mathbf{C}'$}
\psfrag{gapg}{$\la^2 g$}
\psfrag{gapgres}{$g'-\ep$}
\psfrag{osze}{$O(\la^2)$}
\includegraphics[width = 10cm]{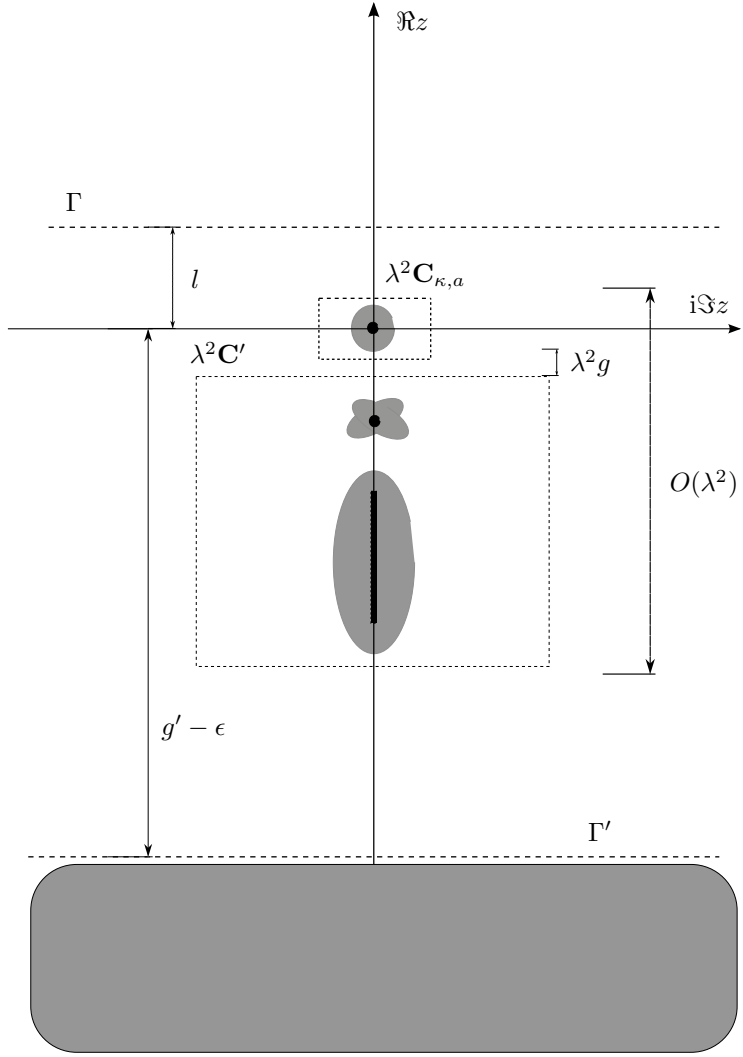}
\end{center}
\caption{ \label{fig: spectral} \footnotesize{The (rotated) complex plane. The black dots and thick black line indicate the spectrum of $\la^2M^{0}$: The upper dot is the eigenvalue $0$ and the thick vertical line is the continuous spectrum. In the  picture, we have drawn only one other eigenvalue, but, in general, there can be more than one (or none) further eigenvalues.  The function $\la^2 M^\ka+ A(z,\la,\ka)$ is  analytic above the lowest gray (rectangular) region. The other gray regions contain the singularities of the function $ ( \caR_\la(z) )_{\la^2 \ka}$ for $(z,\la,\ka) \in \mathbf{G}$.  The integration contours $\Ga, \Ga'$ and $\la^2\mathbf{C}_{\ka,a}, \la^2\mathbf{C}'$ are drawn in dashed lines. In this picture, the contour $\la^2\mathbf{C}_{\ka,a}$ encircles $\la^2f(\la,\ka)$, for all $(\la,\ka)$, (i.e., such that $(z,\la,\ka) \in \mathbf{G}$), which can be achieved by choosing $a$ large enough.}
} 
\end{figure}

We define the horizontal contours
\beq
\Ga: = \{ z \in \bbC \, \str z=l+\i \bbR \}, \qquad   \Ga' : = \{z \in \bbC \,  \str z=-(g'-\ep)+\i \bbR \},
\eeq
with $l$ large enough such that all singular points of  $ z\mapsto ( \caR_\la(z) )_{\la^2 \ka}$ lie below  $\Ga$, and $\ep>0$ small enough such that all singular points with $\Re z >-g'$  lie above $\Ga'$ (the notions 'below' and 'above' are meant as in Figure \ref{fig: spectral}).   These contours are oriented from left to right.
By Theorem \ref{thm: main technical wc}, we can construct  a contour $\mathbf{C}'$  which encircles $\Om$ and such that $ f_{\mathrm{kin}}(\ka)$ is separated by a gap $g$ from this contour:
 \beq g:=\inf_{\str \ka \str \leq   \delta_{\mathrm{kin}} }  \Re f_{\mathrm{kin}}(\ka)-   \sup \Re\mathbf{C}'  >0. \label{def: g'}
 \eeq
%

By performing an  inverse Laplace transform we find that
\beq
({\caZ_t^\la})_{\la^2\ka} = \frac{1}{2\pi \i} \mathop{\int}\limits_{\Ga }\d z \, \e^{tz} ( z-\la^2 M^{\ka}- A(z,\la,\ka))^{-1}.
\eeq
For $\la$ small enough, Lemma \ref{lem: difference of sets} ensures that one can deform contours and obtain
\beq\label{deformed contours}
\mathop{\int}\limits_{\Ga }= \mathop{\int}\limits_{\la^2\mathbf{C}_{a,\ka} }+\mathop{\int}\limits_{\la^2\mathbf{C}'  }+\mathop{\int}\limits_{\Ga' }.
\eeq
The first term on the RHS of \eqref{deformed contours} equals $\e^{ t \la^2  f(\ka,\la) }P^{\la,\ka}$; this follows from Lemma \ref{lem: uniqueness}. The second term is dominated by 
\beq
        \e^{\la^2 t (\sup( \Re\mathbf{C}')) }    \mathop{\int}\limits_{\la^2\mathbf{C}'  }  \frac{\d \str z\str }{2 \pi}    \,  \norm  ( z-\la^2 M^{\ka})^{-1} \norm \norm  \left( 1-  (  1-    A(z,\la,\ka)( z-\la^2 M^{\ka})^{-1} \right)^{-1}\norm.
\eeq
By the choice of $\mathbf{C}'_{\la}$ and the bound \eqref{bounds on A}, the integral on the RHS is bounded by a constant, for $\la$ small enough.

The third term of the RHS of \eqref{deformed contours}  is split as 
\baq
 \mathop{\int}\limits_{\Ga' }  \d z \, \e^{tz} ( z-\la^2 M^{\ka}- A(z,\la,\ka))^{-1} &=&    \mathop{\int}\limits_{\Ga' }  \d z \, \e^{tz} ( z-\la^2 M^{\ka})^{-1}  \\
 &+& \mathop{\int}\limits_{\Ga' }  \d z \, \e^{tz}    ( z-\la^2 M^{\ka})^{-1}  A(z,\la,\ka)( z-\la^2 M^{\ka}- A(z,\la,\ka))^{-1}.  \nonumber  \eaq
The first integral can be closed in the lower half-plane and equals $0$, the  second integral has an integrand of order $z^{-2}$ for large $z$, and hence its contribution is bounded by a constant times $e^{- t (g'-\ep)}$.

It follows that the crucial estimate \eqref{eq: return to equilibrium off fibers} holds with  $\delta_1:= \delta_{\mathrm{kin}} $ and $g$ as in \eqref{def: g'}.

 \section*{APPENDIX A}  
 \label{app: reservoirs}
 
\renewcommand{\theequation}{A-\arabic{equation}}
  \setcounter{equation}{0}  

Here we consider the \emph{effective} structure factor, which, in Section \ref{sec: reservoir},  has been introduced as the Fourier transform of the reservoir correlation function.

We use the  spectral theorem to represent the positive  operator $\bosondispersion$ as  multiplication by $\xi \in \bbR^+$. There are Hilbert spaces $\frh_\xi$ for $\xi \in \bbR^+$ such that 
$
\frh=  \int_{ \oplus \bbR^+} \,  \d \xi \frh_\xi,
$
and for all $\varphi \in \frh$, there are $\varphi_\xi \in \frh_\xi$ such that
\beq
  \varphi= \int_{ \oplus \bbR^+} \,  \d \xi \varphi_\xi,      \qquad   \bosondispersion \varphi =  \int_{ \oplus \bbR^+} \,  \d \xi  \, \xi \,  \varphi_\xi.
\eeq

The structure factor $\phi \in \frh$ has been introduced in Section \ref{sec: reservoir}.  We construct an effective form factor $\phi^\be$ as an element of $\frh\oplus \frh$.
We choose $\frh_{-\xi}$ to be isomorphic to $ \frh_{\xi}$, and 
we define $\phi^\be =  \int_{ \oplus \bbR}  \phi^\be_\xi $ as an element of $  \frh \oplus \frh \sim \int_{ \oplus \bbR}  \frh_\xi$ by setting 
\beq \label{def: psi}
\phi^\be_\xi :=    \left\{ \begin{array}{ll}   \frac{1  }{\sqrt{\e^{\be \xi}-1}}    \, \phi_\xi,     &    \xi>0, \\ 
\frac{1  }{\sqrt{1-\e^{\be \xi}}}    \, \phi_{-\xi},      &    \xi<0.  \end{array} \right.
\eeq
The function $\phi^\be$ plays the role of the form factor if one constructs the positive-temperature dynamical system.  We just note that 
\beq \label{eq: psi as square}
\psi(\xi)= \norm \phi^\be_\xi \norm^2_{\frh_\xi}.
\eeq

Assume that the on-site one-particle space  is given by $\frh= L^2(\bbR^d)$, and the one-particle Hamiltonian acts by multiplication  with a function $\xi(r)$, where $r :=\str q \str$, for $q \in \bbR^d$.  We also assume that $r \mapsto \xi(r)$ is differentiable and monotonically  increasing. Hence we can  define the inverse function $\xi \mapsto r(\xi)$. The form factor $\phi \in L^2(\bbR^d)$ is taken to be spherically symmetric, $ \phi(q) \equiv\phi(r)$. Then the Hilbert spaces $\frh_\xi$ are naturally identified with $L^2(\bbS^{d-1})$,
and 
\beq \label{eq: effective form factor for example}
\phi^\be_\xi =      r(\str\xi\str) ^{\frac{d-1}{2}}   \left( \frac{\partial r(\str \xi\str)}{\partial \str \xi \str} \right)^{-1/2} \,  1_{\bbS^{d-1}} \,      \left\{ \begin{array}{ll}   (\e^{\be \xi}-1)^{-1/2} \,  \phi ( r(\xi)  ),     & \qquad   \xi >0, \\ [2mm]
(1-\e^{\be \xi})^{-1/2} \,   \overline{\phi ( r(-\xi) )},   & \qquad    \xi <0,  \end{array} \right.
\eeq
where $1_{\bbS^{d-1}} \in L^2(\bbS^{d-1}) $ is the constant function on $\bbS^{d-1}$ with  $\norm 1_{\bbS^{d-1}} \norm=1$.

Next, we return to Assumption  \ref{ass: exponential decay}. By  properties of the Fourier transform, e.g.\ Th.\ IX.14  of \cite{reedsimon2}, this assumption is  equivalent to the assumption that $\psi$ extends to an analytic function in the strip $\str \Im \xi \str < g_\res$, and 
\beq \label{cond: hardy class}
\mathop{ \sup}\limits_{ - g_{\res} < y< g_{\res}} \int_{\bbR} \d x \,    \str\psi(x+\i y)\str < \infty.
\eeq
Starting from  expression \eqref{eq: effective form factor for example}, one can check condition \eqref{cond: hardy class}  in concrete examples. 
E.g., for a relativistic dispersion law,  $\xi(r )=r$,  \eqref{cond: hardy class} is satisfied whenever 
\beq
\mathop{ \sup}\limits_{ - g_{\res} < y< g_{\res}} \int_{\bbR}   \d x \, \str x+\i y\str^{d-2} \str \phi(x+\i y) \str^2     < \infty.
\eeq


\bibliographystyle{plain}
\bibliography{mylibrary08}

\begin{thebibliography}{10}

\bibitem{arakiwoods}
H.~Araki and E.~J. Woods.
\newblock Representations of the canonical commutation relations describing a
  nonrelativistic infinite free {B}ose gas.
\newblock {\em J. Math. Phys.}, 4:637, 1963.

\bibitem{bachfrohlichreturn}
V.~Bach, J.~Fr{\"o}hlich, and I.~Sigal.
\newblock Return to equilibrium.
\newblock {\em J. Math. Phys.}, 41:3985, 2000.

\bibitem{bratellirobinson}
O.~Brattelli and D.~W. Robinson.
\newblock {\em Operator Algebras and Quantum Statistical Mechanics: 2}.
\newblock Springer-Verlag, Berlin, 2nd edition, 1996.

\bibitem{bryc}
W.~Bryc.
\newblock A remark on the connection between the large deviation principle and
  the central limit theorem.
\newblock {\em Stat. and Prob. Lett.}, 18, 1993.

\bibitem{chenlocalization}
T.~Chen.
\newblock Localization lengths and {B}oltzmann limit for the {A}nderson model
  at small disorder in dimension 3.
\newblock {\em J. Stat. Phys.}, 120(1-2):279 -- 337, 2005.

\bibitem{clarkderoeckmaes}
J.~Clark, W.~De Roeck, and C.~Maes.
\newblock Diffusive behaviour from a quantum master equation.
\newblock {\em preprint arXiv:0812.2858}, 2008.

\bibitem{derezinski1}
J.~Derezi\'{n}ski.
\newblock {\em Introduction to Representations of Canonical Commutation and
  Anticommutation Relations}, volume 695 of {\em Lecture Notes in Physics}.
\newblock Springer-Verlag, Berlin, 2006.

\bibitem{jaksicpilletderezinski}
J.~Derezi\'{n}ski, V.~Jak${\check {\mathrm s}}$i\'{c}, and C.-A. Pillet.
\newblock Perturbation theory of ${W}^*$-dynamics, {L}iouvilleans and
  {K}{M}{S}-states.
\newblock {\em Rev. Math. Phys.}, 15:447--489, 2003.

\bibitem{erdos}
L.~Erd{\"o}s.
\newblock Linear {B}oltzmann equation as the long time dynamics of an electron
  weakly coupled to a phonon field.
\newblock {\em J. Stat. Phys.}, 107(85):1043--1127, 2002.

\bibitem{erdossalmhoferyaurecollision}
L.~Erd{\"o}s, M.~Salmhofer, and H.-T. Yau.
\newblock Quantum diffusion of the random {S}chr{\"o}dinger evolution in the
  scaling limit ii. the recollision diagrams.
\newblock {\em Comm. Math. Phys}, 271:1--53, 2007.

\bibitem{erdossalmhoferyaunonrecollision}
L.~Erd{\"o}s, M.~Salmhofer, and H.-T. Yau.
\newblock Quantum diffusion of the random {S}chr{\"o}dinger evolution in the
  scaling limit i. the non-recollision diagrams.
\newblock {\em Acta Mathematica}, 200:211--277, 2008.

\bibitem{erdosyauboltzmann}
L.~Erd{\"o}s and H.-T. Yau.
\newblock Linear {B}oltzmann equation as the weak coupling limit of a random
  {S}chr{\"o}dinger equation.
\newblock {\em Comm. Pure Appl. Math.}, 53(6):667 -- 735, 2000.

\bibitem{froehlichmerkli}
J.~Fr{\"o}hlich and M.~Merkli.
\newblock Another return of 'return to equilibrium'.
\newblock {\em Comm. Math. Phys.}, 251:235--262, 2004.

\bibitem{jaksicpillet2}
V.~Jak${\check {\mathrm s}}$i\'{c} and C.-A. Pillet.
\newblock On a model for quantum friction. iii: Ergodic properties of the
  spin-boson system.
\newblock {\em Comm. Math. Phys.}, 178:627--651, 1996.

\bibitem{kangschenker}
Y.~Kang and J.~Schenker.
\newblock Diffusion of wave packets in a {M}arkov random potential.
\newblock {\em arXiv:0808.2784}, 2008.

\bibitem{ovchinnikoverikhman}
A.~A. Ovchinnikov and N.~S. Erikhman.
\newblock Motion of a quantum particle in a stochastic medium.
\newblock {\em Sov. Phys.-JETP}, 40:733--737, 1975.

\bibitem{pilletmarkovianpotential}
C.-A. Pillet.
\newblock Some results on the quantum dynamics of a particle in a {M}arkovian
  potential.
\newblock {\em Comm. Math. Phys.}, 102:237--254, 1985.

\bibitem{reedsimon4}
M.~Reed and B.~Simon.
\newblock {\em Methods of Modern Mathematical physics}, volume~4.
\newblock Academic Press, New York, 1972.

\bibitem{reedsimon2}
M.~Reed and B.~Simon.
\newblock {\em Methods of Modern Mathematical physics}, volume~2.
\newblock Academic Press, New York, 1972.

\bibitem{deroeck}
W.~De Roeck.
\newblock Large deviation generating function for currents in the
  {P}auli-{F}ierz model.
\newblock {\em Rev. Math. Phys}, 21(4):549--585, 2009.

\bibitem{silviusparrisdebievre}
A.~Silvius, P.~Parris, and S.~De Bievre.
\newblock Adiabatic-nonadiabatic transition in the diffusive hamiltonian
  dynamics of a classical {H}olstein polaron.
\newblock {\em Phys. Rev. B.}, 73:014304, 2006.

\bibitem{spohnelectronrandomimpurity}
H.~Spohn.
\newblock Derivation of the transport equation for electrons moving through
  random impurities.
\newblock {\em J. Stat. Phys.}, 17:385--412, 1977.

\bibitem{spohnkineticreview}
H.~Spohn.
\newblock Kinetic equations from {H}amiltonian dynamics; {M}arkovian limits.
\newblock {\em Rev. Mod. Phys.}, 53:569--615, 1980.

\bibitem{tcheremchantsev}
S.~Tcheremchantsev.
\newblock Markovian {A}nderson model: {B}ounds for the rate of propagation.
\newblock {\em Comm. Math. Phys.}, 187(2):441--469, 1997.

\end{thebibliography}

\end{document}